\documentclass[USletter, 11pt]{article}
\usepackage[top=1in, bottom=1in, left=1in, right=1in]{geometry}
\pdfoutput=1

\usepackage{graphicx}
\usepackage{epstopdf}
\usepackage{tikz}
\usepackage{ifthen}
\usepackage{xcolor}
\usepackage{tcolorbox}
\usepackage{framed}
\usepackage{xifthen}
\usepackage{tabularx}
\usepackage{todonotes}  
\usepackage{amsopn}
\usepackage{derivative}
\usepackage{mathtools}

\usepackage{booktabs}
\usepackage{multirow}
\usepackage{longtable}
\usepackage[utf8]{inputenc}
\usepackage{newunicodechar}
\newunicodechar{∞}{\ensuremath{\infty}}
\usepackage{algorithm}
\usepackage[noend]{algpseudocode}
\usepackage{float}
\usepackage{adjustbox}
\usepackage{enumitem}
\usepackage{rotating}
\usepackage{hyperref}
\usepackage{cleveref}
\usepackage{tabularx}

\usepackage{amsmath,amssymb,amsthm}
\usepackage{mathtools}
\usepackage{thmtools, thm-restate}
\newtheorem{theorem}{Theorem}
\newtheorem{lemma}[theorem]{Lemma}
\newtheorem{observation}[theorem]{Observation}
\newtheorem{proposition}[theorem]{Proposition}

\newtheorem{definition}[theorem]{Definition}
\newtheorem{corollary}[theorem]{Corollary}

\crefname{observation}{observation}{observations}
\Crefname{observation}{Observation}{Observations}
\crefname{claim}{claim}{claims}
\Crefname{claim}{Claim}{Claims}

\makeatletter
\renewcommand{\ALG@doentity}{%
  \edef\ALG@thisblock{\csname ALG@currentblock@\theALG@nested\endcsname}%
  \expandafter\ifx
    \csname ALG@b@\ALG@L @\ALG@thisentity @\ALG@thisblock\endcsname\relax
    \def\ALG@thisblock{0}%
  \fi
  \ALG@getentitytext
  \ifnum\ALG@thisblock=0\else\ALG@doend\fi
  \ifx\ALG@text\ALG@x@notext\else
    \item
    \noindent\hskip\ALG@tlm
    \expandafter\ifnum
      0=\csname ALG@b@\ALG@L @\ALG@thisentity @\ALG@thisblock\endcsname
    \else
      \ALG@dobegin
    \fi
    \def\ALG@entitiecommand{\ALG@displayentity}%
  \fi
}%
\makeatother

\newcommand{\pname}[1]{\textnormal{\textsc{#1}}}
\newcommand{\cclass}[1]{\textnormal{\textsf{#1}}}

\newcommand{\StrOJA}{\pname{Optimal Vertex Elimination}}
\newcommand{\MinEC}{\pname{Minimum Edge Count}}
\newcommand{\MaxER}{\pname{Maximum Edge Reduction}}

\newcommand{\MaxIS}{\pname{Maximum Independent Set}}

\newcommand{\VC}{\pname{Vertex Cover}}

\newcommand{\CHMUset}{\textnormal{CHMU}}
\newcommand{\GWset}{\textnormal{G\&W}}
\newcommand{\NLSset}{\textnormal{NLS}}
\newcommand{\AGset}{\textnormal{AlphaG}}
\newcommand{\EGset}{\textnormal{Evol}}

\newcommand{\Poly}{\cclass{\textup{P}}}
\newcommand{\NP}{\cclass{\textup{NP}}}
\newcommand{\APX}{\cclass{\textup{APX}}}

\newcommand{\BPP}{\cclass{\textup{BPP}}}

\newcommand{\innbrs}[2]{N^-_{{#1}}({#2})}
\newcommand{\outnbrs}[2]{N^+_{{#1}}({#2})}

\newcommand{\cutsize}[2]{\textnormal{cut}({#1}, {#2})}

\newcommand{\markdeg}[1]{\mu({#1})}
\newcommand{\totalmark}[1]{M({#1})}

\newcommand{\gElimVertex}[2]{{#1}_{/{#2}}}
\newcommand{\gElimSet}[2]{{#1}_{#2}}
\newcommand{\gElimSequence}[2]{\gElimSet{#1}{#2}}

\newcommand{\optGmax}[1]{\text{opt}_{#1}^\text{max}}
\newcommand{\optGmin}[1]{\text{opt}_{#1}^\text{min}}
\newcommand{\optG}[1]{\text{opt}_{#1}}

\newcommand{\sources}{S}
\newcommand{\sinks}{T}
\newcommand{\internals}{I}

\newcommand{\alg}[1]{\textnormal{\textsf{{#1}}}}
\newcommand{\mo}{\alg{MiddleOut}}
\newcommand{\moshort}{\alg{MO}}
\newcommand{\fo}{\alg{Fw}}
\newcommand{\re}{\alg{Rev}}
\newcommand{\greedymark}{\alg{Mar}}
\newcommand{\relmark}{\alg{rMar}}
\newcommand{\mtmr}{\alg{MR}}
\newcommand{\lmmd}{\alg{M}$^2$\alg{D}}
\newcommand{\pathlen}{\alg{PL}}
\newcommand{\diffmincost}{\alg{DMC}}

\newcommand{\mcmc}{\alg{MC}$^2$}
\newcommand{\sa}{\alg{SA}}
\newcommand{\ag}{\alg{AG}}
\newcommand{\edgereduction}{\alg{ER}}
\newcommand{\markanddegree}{\alg{MD}}
\newcommand{\pc}{\alg{PC}}

\newcommand{\OPT}{\text{OPT}}

\newcommand{\cost}{\text{cost}}
\newcommand{\gElimSeq}[2]{\gElimSet{#1}{#2}}

\newcommand{\va}{\textsf{vA}}
\newcommand{\vb}{\textsf{vB}}
\newcommand{\vc}{\textsf{vC}}
\newcommand{\vd}{\textsf{vD}}

\newcommand{\numpaths}[1]{\textnormal{num-paths({#1})}}

\usetikzlibrary{positioning,backgrounds,patterns,calc,matrix,shapes,decorations.pathreplacing,decorations.pathmorphing,math}
\tikzset{crossing/.style={cross out, draw=red, minimum size=2*(#1-\pgflinewidth), inner sep=0pt, outer sep=1pt, very thick}, crossing/.default={4pt}}
\usetikzlibrary{arrows.meta,bending,decorations.markings,hobby,patterns,calc}
\tikzset{>={Latex[width=2mm,length=3mm]}}
\definecolor{cbred}{RGB}{200, 50, 80}
\definecolor{cbgreen}{RGB}{0, 128, 128}
\definecolor{cbblue}{RGB}{0, 76, 153}
\definecolor{cbyellow}{RGB}{204, 153, 0}

\ifpdf
  \DeclareGraphicsExtensions{.eps,.pdf,.png,.jpg}
\else
  \DeclareGraphicsExtensions{.eps}
\fi

\newlength{\RoundedBoxWidth}
\newsavebox{\GrayRoundedBox}
\newenvironment{GrayBox}[1]%
   {\setlength{\RoundedBoxWidth}{.93\textwidth}
    \def\boxheading{#1}
    \begin{lrbox}{\GrayRoundedBox}
       \begin{minipage}{\RoundedBoxWidth}}%
   {   \end{minipage}
    \end{lrbox}
    \begin{center}
    \begin{tikzpicture}%
       \node(Text)[draw=black!20,fill=white,rounded corners,%
             inner sep=2ex,text width=\RoundedBoxWidth]%
             {\usebox{\GrayRoundedBox}};
        \coordinate(x) at (current bounding box.north west);
        \node [draw=white,rectangle,inner sep=3pt,anchor=north west,fill=white]
        at ($(x)+(6pt,.75em)$) {\boxheading};
    \end{tikzpicture}
    \end{center}}

\newenvironment{defproblemx}[2][]{\noindent\ignorespaces%
                                \FrameSep=6pt%
                                \parindent=0pt%
                \vspace*{-1.5em}
                \ifthenelse{\isempty{#1}}{%
                  \begin{GrayBox}{\textsc{#2}}%
                }{%
                  \begin{GrayBox}{\textsc{#2}  parameterized by~{#1}}%
                }
                \begin{tabular*}{\textwidth}{@{\hspace{.1em}} >{\itshape} p{1.8cm} p{0.8\textwidth} @{}}%
            }{
                \end{tabular*}%
                \end{GrayBox}%
                \ignorespacesafterend
            }

\newcommand{\defproblem}[3]{
  \begin{defproblemx}{#1}
    Input:  & #2 \\
    Task: & #3
  \end{defproblemx}
}%

\title{A Comprehensive Evaluation of Vertex Elimination Algorithms for Algorithmic Differentiation} 

\usepackage{authblk}

\author[1]{Alex Crane}
\author[2]{Pål Grønås Drange}
\author[1]{Eli Friedman}
\author[3]{Paul D. Hovland}
\author[3]{Jan Hückelheim}
\author[ ]{Andrew Lyons}
\author[1]{Yosuke Mizutani}
\author[4]{Macéo Ottavy}
\author[1,4,5]{Blair D. Sullivan}

\affil[1]{University of Utah, USA}
\affil[2]{University of Bergen, Norway}
\affil[3]{Argonne National Laboratory, USA}
\affil[4]{LIP, ENS de Lyon, France}
\affil[5]{Collegium de Lyon, France}

\date{}

\begin{document}

\maketitle

\begin{abstract}
    The algorithmic differentiation (AD) of mathematical functions can be interpreted as a sequence of vertex eliminations in an underlying directed acyclic graph.
    The problem of determining a minimum-cost elimination ordering, which we call \StrOJA{}, is \cclass{NP}-complete. Consequently, much effort has been devoted to the design of heuristics.
    Many of these heuristics are widely believed to perform well in practice, but this hypothesis has so far been difficult to test due to the lack of scalable exact methods.

    We design and engineer new integer programming formulations for \StrOJA{} and for a related objective we call~\MinEC{}. Our implementations scale to graphs one-to-two orders of magnitude larger than existing techniques, enabling the assembly of a corpus of medium-sized graphs for which optimal solutions are known.
    This corpus facilitates a study of existing heuristics, confirming that on real data popular methods achieve high quality solutions.\looseness=-1

    We also make several theoretical contributions. We give a tight analysis of the forward and reverse modes of AD, and extend our techniques to provide a simple algorithm for \StrOJA{} with approximation ratio parameterized by the size of a minimum source-sink separator.
    On the complexity side, we give the first approximation lower bounds for both problems.
\end{abstract}
\section{Introduction}\label{sec:intro}

For over fifty years, the (linearized)
computational graph model has been used to guide the algorithmic differentiation (AD) of functions~\cite{bauer1974computational,linnainmaa1970representation}. 
In a directed acyclic graph (DAG) $D = (V = \sources \uplus \internals \uplus \sinks, E)$,
source vertices $\sources$ and sink vertices $\sinks$ represent independent and dependent variables, respectively.
The internal vertices $\internals$ represent intermediate values, while directed edges model data dependencies.
By associating with each edge $(u, v)$ a local partial derivative $\partial{v}/\partial{u}$, we obtain a chain-rule-based interpretation of the Jacobian:
the derivative of a dependent variable $t$
with respect to an independent variable $s$ is equal to the sum
over all $st$-paths of the product of the edge
weights (partial derivatives) along the path. Many popular procedures for computing derivatives can be interpreted as
a sequence of graph edits that preserve correctness of this chain-rule-based interpretation.

The \emph{elimination} of an internal vertex $v \in \internals$ is defined by deleting $v$ and,
for each pair of in-neighbor $u \in \innbrs{D}{v}$ and out-neighbor $w \in \outnbrs{D}{v}$ of $v$,
adding the product of the partial derivatives associated with
$(u,v)$ and $(v,w)$ to the partial derivative associated with $(u,w)$. The edge $(u, w)$ is created if it does not already exist; in this case the previously associated partial derivative is $0$. We write $\gElimVertex{D}{v}$ for the resulting DAG,
and generalize this notation to a \emph{sequence} $\sigma = (v_1, v_2, \ldots, v_\ell)$ of eliminations, where $\gElimSequence{D}{\sigma} \vcentcolon = (((\gElimVertex{D}{v_1})_{/v_2})\ldots)_{v_\ell}$.
Viewed this way, the so-called forward mode
of AD computes derivatives by eliminating all internal vertices in topological order. The terminal
graph is bipartite on vertices $\sources$ and $\sinks$. Therefore, the derivative associated with each edge captures directly a non-zero entry in the Jacobian.
The so-called reverse mode of AD eliminates vertices in reverse topological order,
resulting in the same bipartite graph. Indeed, it is possible to eliminate the internal vertices in any order, obtaining an exact (up to round-off error) computation of the Jacobian. 
However, not all elimination sequences require the same computational effort.

The \emph{cost} of eliminating a vertex $v$ is the \emph{Markowitz degree} $\markdeg{v} \vcentcolon = |\innbrs{D}{v}|\cdot|\outnbrs{D}{v}|$, reflecting the number of multiplications required to maintain correctness of the chain-rule based interpretation of $D$. The \emph{cost} of a sequence of eliminations is the sum of the costs of the involved eliminations.
While forward and reverse modes both eliminate every internal vertex (i.e., they produce \emph{total elimination sequences}), the costs of these sequences may differ, as the elimination of a vertex may affect the Markowitz degrees of its not-yet-eliminated neighbors.
Since the number of multiplications greatly affects the computational effort associated with derivative evaluations, a central problem in AD is to find low-cost elimination sequences; see the standard textbook~\cite{griewank2008evaluating}, or a recent open problems paper~\cite{Aksoy2023Seven}.

\smallskip
\defproblem{\StrOJA{}}{A DAG $D = (\sources \uplus \internals \uplus \sinks, E)$.}{Find a total elimination sequence of minimum cost.}

The original formulation of this problem is due to Griewank and Reese~\cite{Griewank1991OtC},
who also proposed a greedy heuristic:
at each step, the internal vertex with minimum Markowitz degree is eliminated.
Many other greedy heuristics
 have been proposed; see~\Cref{sec:heuristics}.

Additionally, a variety of randomized, search-based algorithms have been
developed, including algorithms based on simulated
annealing~\cite{naumann1999efficient,naumann2002SA}
and Monte Carlo tree
search~\cite{lohoff2024optimizing}. Due to their running times, these methods are best used in situations where the search cost can
be amortized over many derivative evaluations.

Another randomized algorithm, based on Markov chain Monte Carlo, is
described in~\cite{lyons2012randomized}.
This algorithm targets a different objective, introduced
in~\cite{Griewank2003Mathematical} as \emph{scarcity}, and formalized
in~\cite{Aksoy2023Seven} as the {\MinEC} problem. In this
problem, the elimination sequence $\sigma$ need not be total, i.e., not all internal vertices need to be eliminated, and the objective changes from the cost of $\sigma$ to the number of edges in $\gElimSequence{D}{\sigma}$.
We refer to the original paper~\cite{Griewank2003Mathematical} for a detailed motivation of scarcity.

\smallskip
\defproblem{\MinEC{}}{A DAG $D = (\sources \uplus \internals \uplus \sinks, E)$.}{Find a (not necessarily total) elimination sequence $\sigma$ minimizing $|E(\gElimSequence{D}{\sigma})|.$}

Ideally, one would evaluate heuristics by comparing them to an
optimal elimination sequence or (tight) lower bounds. Unfortunately,
the decision problems underlying both~\StrOJA{} and~\MinEC{} are \cclass{NP}-hard~\cite{bentert2025structural}, and finding
good lower bounds has remained elusive except in very restricted structural classes. Nonetheless, in an
effort to provide a baseline for comparison, Chen \emph{et al.}~\cite{chen2012integer,chen2012scarcity} developed
integer linear programming (ILP) formulations for both problems. Unfortunately, these formulations remain computationally prohibitive
for all but very small graphs (around~$30$ vertices).

In~\Cref{sec:ilp}, we introduce new, more efficient ILP formulations for both problems. These
enable us to find optimal solutions for medium-sized
graphs (a few hundred vertices for~\StrOJA{}; over~$1000$ for~\MinEC{}). 
This in turn enables us to perform a comprehensive assessment
of heuristics and search-based
algorithms on a broad range of real and synthetic computational
graphs; see \Cref{sec:experiments}. 

In~\Cref{sec:separator-analysis}, we give a tight bound on the performance of the forward and reverse modes of AD, and generalize our techniques to provide a parameterized approximation guarantee for~\StrOJA{}.
In~\Cref{sec:approximation-hardness}, we give the first approximation lower bounds for both~\StrOJA{} and~\MinEC{}.
We also provide a (stronger) approximation lower bound for the dual of the latter problem,~\MaxER{}, which asks us to maximize $|E(D)| - |E(D_\sigma)|$, rather than minimizing $|E(D_\sigma)|$.
Throughout, we assume standard graph-theoretic notation, e.g., as seen in~\cite{diestel2012graph}.

\section{Integer Linear Programming}\label{sec:ilp}
\label{sec:ILP}

In this section we present new ILPs for \StrOJA{} and \MinEC{}, both of which are asymptotically smaller than those proposed in~\cite{chen2012integer,chen2012scarcity}. As demonstrated in~\Cref{sec:experiments}, our formulations are significantly more efficient in practice, allowing us to obtain optimal solutions for most graphs in our corpus.

\subsection{ILP for \StrOJA{}}\label{sec:ilp-flops}

The ILP formulation provided by~\cite{chen2012integer,chen2012scarcity} has, in the worst case, $\Theta(n^3)$ variables and $\Theta(n^4)$ constraints. The ILP tracks the structure of the graph at timesteps $t = 0, 1, \ldots, |\internals| + 1$, where the graph $D(t) = (V(t), E(t))$ reflects the first $t$ vertex eliminations. The bound on the number of constraints then comes from the need to encode events such as the creation of edge $(i, j)$ via the elimination of vertex $k$ at timestep $t$. 
Here, we propose a new ILP with only $\Theta(n^3)$ constraints. Our main innovation is to abandon the timestep abstraction (though it remains useful for analysis) and encode only the relative order of vertex eliminations. Formally, our first set of variables is defined by:
\begin{align}\label{eq:ilp:order:x-vars}
    x_{ij} = 
    \begin{cases}
        1 \quad\text{ if $i, j \in \internals$ and $i$ is eliminated before $j$.}\\
        1 \quad\text{ if $i \in \internals$ and $j \notin \internals$.} \\
        0 \quad\text{ otherwise.} 
    \end{cases}
\end{align}

We need two more sets of variables. The first tracks whether the edge $(i, j)$ exists at any point in the elimination procedure, and the second tracks the cost of eliminations.
We say that a vertex triplet $(i, j, k)$ is a \emph{multiplication at} $k$ if $i$ and $j$ are in- and out-neighbors (respectively) of $k$ when $k$ is eliminated. Notice that the number of multiplications at $k$ is precisely $k$'s Markowitz degree at the time of its elimination, i.e., the cost of the elimination.

\begin{align}
    e_{ij} &=
    \begin{cases}
        1 \quad\text{ if $(i, j) \in \bigcup_t E(t)$.}\\
        0 \quad\text{ otherwise.}
    \end{cases}\label{eq:ilp:order:e-vars}\\
    z_{ijk} &=
    \begin{cases}
        1 \quad\text{ if $(i, j, k)$ is a multiplication at $k$.}\\
        0 \quad\text{ otherwise.}
    \end{cases}\label{eq:ilp:order:z-vars}
\end{align}
We can now present our new ILP. 
\begin{alignat}{2}
    \text{minimize}\quad &F = \sum_{k \in \internals} \sum_{(i, j) \in E} z_{ijk}. & &\label{eq:ilp:order:objective}\\
    \text{subject to}\quad &x_{ij} + x_{jk} \leq x_{ik} + 1 & &\forall\ i, j, k \in V. \label{eq:ilp:order:partial-order}\\
    &x_{ij} = 1 - x_{ji} & &\forall\ i, j \in \internals. \label{eq:ilp:order:total-order} \\
    &e_{ij} = 1 & &\forall\ (i, j) \in E. \label{eq:ilp:order:initial-edges} \\
    &x_{ki} + x_{kj} + e_{ik} + e_{kj} \leq e_{ij} + 3 & &\forall\ k \in \internals, \ \forall\ i, j \in V. \label{eq:ilp:order:existence}\\
    &x_{ki} + x_{kj} + e_{ik} + e_{kj} \leq z_{ijk} + 3 \quad\quad& &\forall\ k \in \internals, \ \forall\ i, j \in V. \label{eq:ilp:order:payment}\\
    &x_{ij}, e_{ij}, z_{ijk} \in \{0, 1\} & &\forall\ i, j \in V, \ \forall\ k \in \internals. \label{eq:ilp:order:integrality}
\end{alignat}

The objective~(\ref{eq:ilp:order:objective}) is the total number of multiplications, i.e., the cost of the elimination sequence. Constraints~(\ref{eq:ilp:order:partial-order}) ensure that the relation defined by $x_{ij} = 1$ is transitive. Constraints~(\ref{eq:ilp:order:total-order}) extend this to a total order on the internal vertices. Given a feasible solution, we may recover the vertex elimination order by examining the variables $x_{ij}$ for all $i, j \in \internals$.
Constraints~(\ref{eq:ilp:order:initial-edges}) ensure that the initial structure of the graph is properly encoded.
The constraints~(\ref{eq:ilp:order:existence}) and~(\ref{eq:ilp:order:payment}) ensure correctness, and are motivated by the following observation: if the edge $(i, j)$ exists at some point in the elimination process, then it must continue existing until either $i$ or $j$ is eliminated. 
Then if $(i, k)$ and $(k, j)$ exist and $k$ is eliminated before both $i$ and $j$, the edge $(i, j)$ must exist~(\ref{eq:ilp:order:existence}), and furthermore $(i, j, k)$ is a multiplication at $k$~(\ref{eq:ilp:order:payment}).
Finally, constraints~(\ref{eq:ilp:order:integrality}) ensure that all variables take binary values.

\smallskip
\noindent\textbf{Optimizations}. We now present several optimizations intended to break symmetries and encode non-trivial lower bounds.
The first is the simplest. Let $\widetilde{E}$ denote the set of ordered pairs $(i, j) \in V \times V$ such that $j$ is reachable from $i$ in the input DAG $D$, i.e., $\widetilde{E}$ is the edge-set of the transitive closure of $D$.
We observe that pairs $(i, j) \notin \widetilde E$ can never be edges at any point in the elimination procedure. Furthermore, $\widetilde{E}$ is easily precomputable.
We can then constrain some of our variables:

\begin{alignat}{2}
    e_{ij} &= 0& \quad\quad&\forall \ i, j \in V \text{ such that } (i, j) \notin \widetilde E.\label{eq:ilp:order:reachability-ij}\\
    z_{ijk} &= 0& &\forall \ i, j \in V, k \in \internals \text{ such that } (i, k) \notin \widetilde E \text{ or } (k, j) \notin \widetilde E.\label{eq:ilp:order:reachability-ijk}
\end{alignat}

Our next optimizations provide useful lower bounds. For an internal vertex $i \in \internals$,
we denote by $\cutsize{\sources}{i}$ the minimum size of a vertex set $X \not\ni i$ separating $\sources$ and $i$,
and by $\cutsize{i}{\sinks}$ the analogous quantity for $i$ and $\sinks$.
The following observation is easy to see.

\begin{observation}[\cite{naumann1999efficient,naumannhu2008optimal}]\label{obs:sep-lower-bound}
    For any $k \in \internals$, the Markowitz degree of $k$ at the time of its elimination is at least $\cutsize{\sources}{k}\cdot \cutsize{k}{\sinks}$.
\end{observation}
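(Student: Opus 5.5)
We prove the bound via an invariant of vertex elimination: it never decreases separator sizes. Write $D(t)$ for the graph after $t$ eliminations. Let $k$ be eliminated at step $t+1$, so $k$ is internal in $D(t)$. The Markowitz degree of $k$ at that moment is $|\innbrs{D(t)}{k}|\cdot|\outnbrs{D(t)}{k}|$. It therefore suffices to show
\[
|\innbrs{D(t)}{k}| \ge \cutsize{\sources}{k}
\quad\text{and}\quad
|\outnbrs{D(t)}{k}| \ge \cutsize{k}{\sinks},
\]
where the cuts are taken in the original DAG $D$. The two inequalities are symmetric: reversing all edges swaps sources and sinks and in- and out-neighbourhoods, and elimination commutes with reversal. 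So we only argue the first.

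\textbf{Key claim: paths are preserved.} For every $s\in\sources$, every $st$-path in $D$ whose vertices all survive in $D(t)$ corresponds to an $st$-path in $D(t)$. More precisely, every path in $D$ between two surviving vertices yields a path in $D(t)$ using only those surviving vertices of the original path.

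We prove this by induction on the number of eliminations. Suppose a path passes through an eliminated vertex $v$ with predecessor $u$ and successor $w$. Eliminating $v$ creates (or keeps) the edge $(u,w)$. Hence the path can be shortcut around $v$, and it stays a path on a subset of its former vertices. Note that sources and sinks are never eliminated, and the endpoint $k$ is still present at time $t$.

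\textbf{Deriving the bound.} Now consider $\innbrs{D(t)}{k}$. We claim it is a vertex set not containing $k$ that separates $\sources$ from $k$ in $D$. Take any $sk$-path $P$ in $D$ with $s\in\sources$. By the claim, $P$ maps to an $sk$-path $P'$ in $D(t)$ whose vertices form a subset of $V(P)$. The last vertex of $P'$ before $k$ is an in-neighbour of $k$ in $D(t)$, and it lies on $P$; it exists because $s\neq k$. So every $sk$-path in $D$ meets $\innbrs{D(t)}{k}$.

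By the definition of $\cutsize{\sources}{k}$ as a minimum such separator, $|\innbrs{D(t)}{k}|\ge\cutsize{\sources}{k}$. One edge case remains: if $s$ itself is an in-neighbour, it belongs to the separator set. This is allowed, since the separator only needs to exclude $k$.

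\textbf{Main obstacle.} The only subtle step is the path-preservation induction. In particular, we must confirm that shortcutting keeps the path's vertex set a subset of the original, so that the separator found in $D(t)$ really intersects the original path in $D$. Once that holds, the product bound follows immediately.
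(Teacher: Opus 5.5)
Your proof is correct. The paper gives no argument of its own: it calls the observation easy to see and cites Naumann. Your argument, that the in- and out-neighbourhoods of $k$ at its elimination time are an $\sources$--$k$ and a $k$--$\sinks$ vertex separator in the original DAG, is the standard justification.

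Your path-shortcutting induction is also available ready-made in the paper, as the ``if'' direction of \Cref{obs:directed-path-with-eliminated-internals}. Applied to any $\sources$--$k$ path, it says the last not-yet-eliminated vertex before $k$ is already an in-neighbour of $k$ in the current DAG.
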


Note that $\cutsize{\sources}{k}$ and $\cutsize{k}{\sinks}$ can be efficiently computed via a standard reduction to~\pname{Max Flow}.
This motivates the following lower-bounding constraints:

\begin{alignat}{2}
    \sum_{i \neq k}\sum_{j \not\in \{i, k\}} z_{ijk} &\ge \cutsize{\sources}{k}\cdot \cutsize{k}{\sinks}& \quad\quad&\forall \ k \in \internals.\label{eq:ilp:order:lb-separators}
\end{alignat}

Constraints similar to~(\ref{eq:ilp:order:reachability-ij}) - (\ref{eq:ilp:order:lb-separators}) have been used previously~\cite{chen2012integer,chen2012scarcity}.
Our last lower-bound constraint is new. Observe that without loss of generality, we may assume that every edge has at least one internal endpoint. Otherwise, the edge may be deleted as a safe preprocessing step, since neither endpoint can ever be eliminated.
A lower-bound follows. 

\begin{restatable}{lemma}{movertwolowerbound}\label{lem:LB:m/2}
    For every DAG $D = (S \uplus I \uplus T, E)$ with every edge having at least one endpoint in $I$, the minimum total elimination cost $\OPT(D)$ is bounded by $\OPT(D) \geq |E|/2$.
\end{restatable}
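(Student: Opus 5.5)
We charge every original edge of $D$ to a multiplication and show that each multiplication is charged at most twice. This gives $\OPT(D)\ge |E|/2$. Fix any total elimination sequence $\sigma=(v_1,\dots,v_\ell)$ with $\{v_1,\dots,v_\ell\}=\internals$.

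Take an edge $(u,w)\in E$. By assumption at least one endpoint lies in $\internals$, and every internal vertex is eventually eliminated. Consider the first time an endpoint of $(u,w)$ is eliminated, say at $v_t$, where $v_t\in\{u,w\}$. Just before that step, the edge $(u,w)$ is still present. Edges are only ever deleted when one of their endpoints is eliminated, and elimination only adds to or creates edges; it never removes an edge between surviving vertices. Two cases arise.

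\emph{Case $v_t=u$.} At the moment $u$ is eliminated, $w$ is an out-neighbor of $u$. Since $u$ is internal and the DAG structure preserves reachability from $\sources$, $u$ has at least one in-neighbor $x$ at that time. Indeed, $u$ is reachable from a source in the original graph, and eliminations preserve the reachability relation among the remaining vertices. So $(x,u,w)$ is a multiplication at $u$, and we charge $(u,w)$ to it.

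\emph{Case $v_t=w$.} Symmetrically, $w$ has an out-neighbor $y$ at its elimination time, because internal vertices can reach $\sinks$ and reachability is preserved. So $(u,w,y)$ is a multiplication at $w$, and we charge $(u,w)$ to it.

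Each edge is charged to exactly one multiplication. It remains to bound the load. A multiplication $(x,k,y)$ at $k$, meaning $x$ is an in-neighbor and $y$ an out-neighbor of $k$ at $k$'s elimination, can only receive charges from edges incident to $k$ that were original edges and present at that time. Charged edges of the first kind have the form $(k,y)$, and those of the second kind have the form $(x,k)$. So each multiplication receives at most two charges, one from $(x,k)$ and one from $(k,y)$. Hence $|E|\le 2\,\cost(\sigma)$ for every total sequence $\sigma$, and in particular for an optimal one.

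The one step needing care is the claim that an internal vertex still has at least one in-neighbor and at least one out-neighbor when it is eliminated. This follows from the invariant that for every pair $a,b$ of non-eliminated vertices, $b$ is reachable from $a$ in $D_\sigma$ iff it was reachable in $D$. Elimination of $v$ replaces each path segment $u\to v\to w$ by the edge $(u,w)$, so the invariant holds after every step. Together with the standing assumption that every internal vertex lies on some source–sink path, this gives the claim. If that assumption is not implicit in the paper's model, the edge case needs a remark: an internal vertex with no in-neighbors would cost $0$ to eliminate while its out-edges are still charged.
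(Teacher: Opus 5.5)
Your charging argument is correct, and it is organized differently from the paper's proof while resting on the same accounting.

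The paper inducts on $|I|$. It peels off the first eliminated vertex $v_1$, whose current in- and out-degrees satisfy $a,b\ge 1$. It uses $ab\ge (a+b)/2$ to pay for the $a+b$ edges incident to $v_1$, and applies the inductive hypothesis to $D_{/v_1}$, which still contains every other original edge.

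Your proof is the unrolled, finer-grained version of this. Each original edge is paid for when its first endpoint is eliminated, and your ``at most two charges per multiplication'' is the combinatorial content of $a+b\le 2ab$. The induction is shorter. Your version proves $|E|\le 2\,\cost(\sigma)$ directly for every total sequence, without appealing to optimality of a subsequence.

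It is also cleaner on two points the paper passes over.
\begin{itemize}
\item $D_{/v_1}$ may contain newly created source--sink edges; already for $s\to v\to t$ one gets $\OPT(D_{/v})=0<|E(D_{/v})|/2$. So the inductive hypothesis does not literally apply to $D_{/v_1}$. What the induction actually needs is the weaker bound in terms of surviving original edges, and your argument tracks exactly those edges by construction.
\item The positivity $a,b\ge 1$ must hold at every stage. You justify this explicitly via preservation of reachability among non-eliminated vertices. Your closing remark is right that the lemma fails without it, for instance for an internal vertex with no in-neighbors. The paper's proof relies on the same implicit assumption when it takes $a,b\in\mathbb{N}^+$.
\end{itemize}
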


We give a proof in~\Cref{appendix:proofs:ilps}. Here, we use~\Cref{lem:LB:m/2} to introduce more constraints:

\begin{align}
    \sum_{k \in \internals}\sum_{i \neq k}\sum_{j \not\in\{i, k\}} z_{ijk} \geq |E|/2.\label{eq:ilp:order:lb-edges}
\end{align}

Finally, to reduce the size of instances before solving, we use a widely accepted preprocessing rule. Note that while an argument is given in~\cite{naumann1999efficient}, we believe that it would be valuable to obtain a concise formal proof.

\begin{proposition}[\cite{griewank2008evaluating,naumann1999efficient}]\label{prop:subdivision-preprocessing}
    Let $D = (\sources \uplus \internals \uplus \sinks, E)$, and let $w \in \internals$ have Markowitz-degree~$1$. Then there exists a minimum cost total elimination sequence in which $w$ appears first.
\end{proposition}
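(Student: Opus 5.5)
The plan is an exchange argument. Fix an optimal total elimination sequence $\sigma=(\sigma_1,w,\sigma_2)$, and let $\sigma'=(w,\sigma_1,\sigma_2)$. It suffices to show $\cost(\sigma')\le\cost(\sigma)$. Write $u$ and $v$ for the unique in- and out-neighbor of $w$ in $D$. Eliminating $w$ first costs exactly $1$ and simply replaces the path $u\to w\to v$ by the edge $(u,v)$, merging it with an existing edge if there is one.

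\textbf{Step 1: order-independence of the elimination graph.} I would first record that the edge set of $\gElimSet{D}{X}$, for a set $X\subseteq\internals$, does not depend on the order in which $X$ is eliminated. Indeed, $(a,b)$ is an edge of $\gElimSet{D}{X}$ iff $a,b\notin X$ and $D$ contains an $a$--$b$ path whose internal vertices all lie in $X$; this follows by induction on $|X|$. Let $G_t$ be the graph of the original process after the $t$-th elimination of $\sigma_1$, and $H_t$ the graph of the $w$-first process after the same prefix. Then, as long as $t\le|\sigma_1|$,
\[
H_t=\gElimVertex{(G_t)}{w}.
\]
After the prefix $\sigma_1$ both processes have eliminated the same set of vertices, so the suffix $\sigma_2$ costs the same in both. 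Hence only the costs within $\sigma_1$, together with the cost of $w$, need to be compared.

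\textbf{Step 2: per-vertex comparison inside $\sigma_1$.} Let $x$ be the next vertex of $\sigma_1$. By Step 1, the neighborhood of $x$ in $H_t$ is obtained from its neighborhood in $G_t$ as follows: if $w\in\outnbrs{G_t}{x}$, replace $w$ by $\outnbrs{G_t}{w}$ (with merging), and symmetrically for in-neighbors. So the Markowitz degree of $x$ can increase in the $w$-first process only when $x$ is adjacent to $w$ in $G_t$ and the opposite-side degree of $w$ has grown above $1$. I would track how $w$'s in- and out-degree evolve in the original process. Its in-degree grows only when an in-neighbor $y$ of $w$ is eliminated, and then by at most $|\innbrs{}{y}|-1$; symmetrically for the out-degree. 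Each such event is itself an elimination adjacent to $w$.

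\textbf{Step 3: charging, the main obstacle.} Suppose $w$ has in-degree $a$ and out-degree $b$ when the original process eliminates it, so it pays $ab$ there, while the $w$-first process pays $1$ for $w$. I would then show that the total extra cost incurred by vertices of $\sigma_1$ in the $w$-first process is at most $ab-1$. The idea is to charge each extra multiplication at $x$, which is a pair (in-neighbor of $x$, out-neighbor of $w$) or symmetrically, to a distinct pair in $\innbrs{}{w}\times\outnbrs{}{w}$ at $w$'s final elimination in the original process, excluding the pair $(u,v)$. Injectivity should come from the fact that eliminating $x$ in the original process transfers $x$'s in-neighbors to $w$, and they remain there until $w$ is eliminated. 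The delicate part is that in-neighbors of $w$ might themselves be eliminated later, or that edges merge.

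If this direct charging becomes unwieldy, I would fall back to adjacent transpositions. I would swap $w$ backwards past one vertex $x$ at a time, proving the two-vertex inequality $\mu(x)+\mu'(w)\ge\mu'(x)+\mu(w)$ for the local configuration. This local inequality should hold because eliminating $x$ before $w$ can only enlarge $w$'s degree product by at least the amount it saves $x$. I would then need a generalization of the statement in which $w$ has one side of degree $1$ at the moment of the swap. I expect identifying the right invariant for this induction to be the hardest point.
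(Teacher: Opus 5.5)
\textbf{There is a genuine gap in Step 3.} Your Steps 1 and 2 are sound. Step 3, however, is not merely delicate. The inequality it is meant to establish is false for general sequences, and your charging scheme never uses the optimality of $\sigma$, so it cannot succeed as described.

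Take sources $s_1,s_2$, internal vertices $u,w,v,y_1,y_2$, a sink $t$, and arcs $s_1\to u$, $s_2\to u$, $u\to w\to v$, $v\to y_1$, $v\to y_2$, $y_1\to t$, $y_2\to t$. Here $w$ has Markowitz degree $1$. The sequence $\sigma=(v,u,y_1,y_2,w)$ costs $2+2+1+1+2=8$, whereas $\sigma'=(w,v,u,y_1,y_2)$ costs $1+2+4+2+2=11$. When $w$ is eliminated in $\sigma$, its in-neighbours are $\{s_1,s_2\}$ and its out-neighbours are $\{t\}$, so $ab-1=1$. Yet the vertices of $\sigma_1$ pay $4$ extra in the $w$-first process: $u$ pays for the pairs $(s_j,y_i)$, and each $y_i$ pays for the pairs $(s_j,t)$.

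What breaks your injection is not growth of $w$'s degrees, which is all Step 2 tracks, but \emph{shrinkage}. After $v$ and $u$ are gone, $w$'s out-side is $\{y_1,y_2\}$. Eliminating $y_1,y_2$ then merges it into $\{t\}$, so the pairs available at $w$'s final elimination are far fewer than the extra multiplications.

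The same instance refutes the local inequality in your fallback. The sequence $(v,u,y_1,w,y_2)$ also costs $11$. Eliminating $y_2$ before $w$ lowers $w$'s cost (from $4$ to $2$) rather than raising it, and it also lowers $y_2$'s own cost.

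\textbf{What survives, and what is missing.} Steps 1--2 already give $\cost(\sigma')\le\cost(\sigma)$ whenever $u$ or $v$ is a terminal or is eliminated after $w$ in $\sigma$. In that case the corresponding side of $w$ stays $\{v\}$ (resp.\ $\{u\}$) throughout $\sigma_1$, so no vertex of $\sigma_1$ pays extra.

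All the difficulty sits in the case where both $u$ and $v$ are internal and eliminated before $w$. There $w$ acts as a ``blocker'' keeping the two sides apart. A proof must then do one of two things:
\begin{itemize}
\item exploit the optimality of $\sigma$ in an essential way (whether $\cost(\sigma')\le\cost(\sigma)$ holds for optimal $\sigma$ is exactly what would need proof); or
\item compare $\sigma$ with a $w$-first sequence other than $\sigma'$.
\end{itemize}
The example shows why the second option may be needed. Its optimum $6$ is attained by $(w,y_1,y_2,v,u)$, which is not obtained from $\sigma$ by moving $w$ forward. Your proposal supplies neither ingredient, so it does not prove the proposition.

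For context, the paper itself does not prove this statement; it only cites it. The exchange argument in its appendix for the new preprocessing rule moves a vertex to the front only under hypotheses such as $P^+_D(v)\subseteq T$ and $|P^+_D(v)|\le 1$ (or the symmetric ones). Those hypotheses fail for $w$ in the example above.
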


In~\Cref{sec:experiments}, we will demonstrate the effectiveness of our optimizations by testing multiple variants of our ILP formulation.
These are
\begin{itemize}
    \item \va{}: Objective~(\ref{eq:ilp:order:objective}) subject to constraints~(\ref{eq:ilp:order:partial-order})-(\ref{eq:ilp:order:integrality}),
    \item \vb{}: Objective~(\ref{eq:ilp:order:objective}) subject to constraints~(\ref{eq:ilp:order:partial-order})-(\ref{eq:ilp:order:reachability-ijk}),
    \item \vc{}: Objective~(\ref{eq:ilp:order:objective}) subject to constraints~(\ref{eq:ilp:order:partial-order})-(\ref{eq:ilp:order:lb-edges}), and
    \item \vd{}: The program \vc{}, with exhaustive use of~\Cref{prop:subdivision-preprocessing} as a preprocessing routine.
\end{itemize}

\subsection{ILP for \MinEC{}}\label{sec:ilp-scarcity}

The ILP of~\cite{chen2012integer,chen2012scarcity} has $\Theta(n^3)$ variables and $\Theta(n^4)$ constraints, tracking events such as the creation of edge $(i, j)$ via the elimination of vertex $k$ at timestep $t$.
We again do away with the timestep abstraction, improving the number of constraints to $\Theta(n^3)$. More significantly, we reduce the number of variables to $\Theta(n^2)$.
The key tool is a recent result of~\cite{bentert2025structural}.

\begin{lemma}[\cite{bentert2025structural}]\label{lemma:scarcity-order-doesnt-matter}
    Let $D = (\sources \uplus \internals \uplus \sinks, E)$ be a DAG, $X \subseteq \internals$, and let $\sigma_1$ and $\sigma_2$ be any two permutations of $X$. Then $D_{\sigma_1} = D_{\sigma_2}$.
\end{lemma}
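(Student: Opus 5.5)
The plan is to prove a stronger statement that describes $D_\sigma$ explicitly, in a form that depends only on the set $X$ and not on the order. The claim is the following. For every elimination sequence $\sigma$ whose set of eliminated vertices is $Y \subseteq \internals$, the graph $D_\sigma$ has vertex set $V \setminus Y$. Moreover, for distinct $u, w \in V \setminus Y$, the pair $(u,w)$ is an edge of $D_\sigma$ if and only if $D$ contains a directed $uw$-path all of whose interior vertices lie in $Y$. The lemma follows at once, since the right-hand side depends only on $Y = X$. We argue about the edge structure, which is how the lemma is used for \MinEC{}. If the weights are also of interest, the same induction applied to the quantity ``sum over such paths of the products of the original edge weights'' shows that the weights are order-independent too (in exact arithmetic).

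We prove the claim by induction on the length of $\sigma$. The base case $\sigma = ()$ is immediate: a path with no interior vertices is exactly an edge of $D$. For the inductive step, write $\sigma = (\sigma', v)$ with $Y = Y' \cup \{v\}$, and assume the claim holds for $D' := D_{\sigma'}$. Eliminating $v$ deletes $v$ and its incident edges. It adds $(u,w)$ for every in-neighbor $u$ and out-neighbor $w$ of $v$ in $D'$, and it leaves every other edge untouched. Note that $D'$ remains acyclic, since every edge of $D'$ corresponds to a path of $D$, and so $u \neq w$ always holds.

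For the forward direction, take an edge $(u,w)$ of $D_\sigma$. Either it was already an edge of $D'$, in which case the inductive hypothesis supplies a path with interior in $Y' \subseteq Y$. Or it was created at $v$, in which case the hypothesis supplies a $uv$-path and a $vw$-path in $D$, each with interior in $Y'$. Their concatenation is a walk with interior in $Y$. Because $D$ is acyclic, this walk repeats no vertex, so it is a genuine path; this is the one point where acyclicity is essential. For the converse, take a $uw$-path $P$ in $D$ with interior in $Y$. By acyclicity $P$ visits $v$ at most once. If $P$ avoids $v$, its interior lies in $Y'$, so $(u,w)\in E(D')$; this edge is not incident to $v$ and therefore survives the elimination. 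If $P$ passes through $v$, split $P$ at $v$. Both pieces have interior in $Y'$, so $(u,v)$ and $(v,w)$ are edges of $D'$, and eliminating $v$ creates (or keeps) $(u,w)$.

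I expect the main obstacle to be only bookkeeping, in two places. First, one must be careful that the endpoints $u,w$ lie outside $Y$ while interior vertices lie inside. Second, one must use acyclicity to rule out repeated vertices in concatenated walks, including the degenerate case $u = w$. Once the path characterization is established, $D_{\sigma_1} = D_{\sigma_2}$ holds for any two permutations of $X$, because both graphs have vertex set $V \setminus X$ and the same edge set.
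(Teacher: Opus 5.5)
Your proof is correct. The paper does not prove this lemma itself; it imports it from \cite{bentert2025structural}.

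The closest argument in the paper is the proof of Observation~\ref{obs:directed-path-with-eliminated-internals}. That proof establishes exactly your path characterization of $E(D_X)$, but it is phrased in terms of $D_X$ and inducts on $|X|$. In the forward direction it \emph{chooses} which vertex is eliminated last, namely the successor of $i$ on $P$. That choice is legitimate only because the cited lemma has already made $D_X$ independent of the order.

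You instead induct on the given sequence, so you must handle whichever vertex happens to be eliminated last. This is why you need the case split ``$P$ avoids $v$'' versus ``$P$ passes through $v$ once, so split it there.'' It is also why your forward direction needs the concatenation step.

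Your route buys self-containment. It proves the characterization for every elimination order, so the lemma follows as a corollary, and the paper's Observation no longer depends on the external citation. Your weighted remark additionally shows that the edge labels, not just the edge set, are order-independent, which the paper's structural statement does not address. The paper's route is shorter only because it takes order-independence from the citation.
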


\Cref{lemma:scarcity-order-doesnt-matter} essentially says that~\MinEC{} is a vertex-selection problem, rather than a vertex-ordering problem. Henceforth, we will write $D_X$ for the DAG resulting from eliminating the vertices $X \subseteq I$ in any order; this notation is well-defined by~\Cref{lemma:scarcity-order-doesnt-matter}.
Our first set of variables encodes the set $X$ of eliminated vertices.

\begin{align}\label{eq:ilp:scarcity:x-vars}
    x_{i} =
    \begin{cases}
        1 \quad\text{ if $i \in X$.}\\
        0 \quad\text{ otherwise.}
    \end{cases}
\end{align}

It remains to encode the structure of the DAG. We use another fundamental observation.

\begin{restatable}{observation}{pathcontraction}\label{obs:directed-path-with-eliminated-internals}
    Let $D = (V = \sources \uplus \internals \uplus \sinks, E)$, $X \subseteq \internals$, and $i, j \in V \setminus X$. Then $(i, j) \in E(D_X)$ if and only if there exists a directed path from $i$ to $j$ for which every internal vertex is in $X$.
\end{restatable}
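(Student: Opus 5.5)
We argue by induction on $|X|$, using \Cref{lemma:scarcity-order-doesnt-matter} so that the vertices of $X$ may be eliminated in whatever order is most convenient. Throughout, a path whose internal vertices all lie in a set $Y$ is called a \emph{$Y$-path}; note that the direct edge $(i,j)$ counts as an $X$-path with no internal vertices. For $|X| = 0$ we have $D_X = D$, and a directed path from $i$ to $j$ with no internal vertices is just the edge $(i,j)$, so the claim is immediate.

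For the inductive step, pick any $v \in X$ and set $X' = X \setminus \{v\}$. By \Cref{lemma:scarcity-order-doesnt-matter} we may eliminate $X'$ first, so $D_X = (D_{X'})_{/v}$. By the definition of elimination, for $i, j \in V \setminus X$ we have $(i,j) \in E(D_X)$ if and only if one of two things holds: $(i,j) \in E(D_{X'})$, or both $(i,v)$ and $(v,j)$ lie in $E(D_{X'})$.

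For the direction from edges to paths, apply the induction hypothesis to $D_{X'}$; this is legitimate since $i, j, v \notin X'$. In the first case we obtain an $X'$-path from $i$ to $j$. In the second case we obtain an $X'$-path from $i$ to $v$ and an $X'$-path from $v$ to $j$. Concatenating them at $v$ gives a directed walk from $i$ to $j$ whose internal vertices lie in $X$. Because $D$ is acyclic, this walk cannot repeat a vertex, so it is a path.

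For the direction from paths to edges, take an $X$-path $P$ from $i$ to $j$. If $v \notin P$, then $P$ is an $X'$-path, so by induction $(i,j) \in E(D_{X'})$. This edge survives the elimination of $v$, since elimination only deletes edges incident to $v$. If instead $v \in P$, then $v$ is an internal vertex of $P$ because $v \neq i, j$. Splitting $P$ at $v$ gives two subpaths, $i \to v$ and $v \to j$, each of which is an $X'$-path. By induction both $(i,v)$ and $(v,j)$ lie in $E(D_{X'})$, so eliminating $v$ creates or keeps the edge $(i,j)$.

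The one point needing care is the use of \Cref{lemma:scarcity-order-doesnt-matter}: it is what lets us choose $v$ freely as the last eliminated vertex, and without it we would have to track an arbitrary elimination order. Once it is available, the rest is routine bookkeeping, with acyclicity guaranteeing that the concatenation in the first direction is a genuine path.
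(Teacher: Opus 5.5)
Your proof is correct and follows essentially the same route as the paper's: induction on $|X|$, eliminating one vertex $v \in X$ last, and splitting or concatenating paths at $v$. The paper differs only in bookkeeping: it picks $v$ as $i$'s successor on $P$ for the path-to-edge direction and argues the converse by contrapositive, whereas you use one arbitrary $v$ for both directions. Because your argument works for any $v$, you can take $v$ to be the last vertex of a given elimination order, so \Cref{lemma:scarcity-order-doesnt-matter} is not actually needed (contrary to your closing remark), and it in fact follows from your argument.
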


We give a proof in~\Cref{appendix:proofs:ilps}. We can now use variables $y_{ij}$ and $z_{ij}$ to encode, respectively, the existence of edges in the final DAG $D_X$, and at intermediate stages.

\begin{align}
    y_{ij} &=
    \begin{cases}
        1 \quad\text{if $(i, j) \in E(D_X)$.}\\
        0 \quad\text{ otherwise.}
    \end{cases}\label{eq:ilp:scarcity:y-vars} \\
    z_{ij} &=
    \begin{cases}
        1 \quad\text{ if there exists $X' \subseteq X$ with $(i, j) \in E(D_{X'})$.}\\
        0 \quad\text{ otherwise.}
    \end{cases}\label{eq:ilp:scarcity:z-vars}
\end{align}
We can now present our new ILP for \MinEC{}.
\begin{alignat}{2}
    \text{minimize}\quad &\sum_{i, j \in V} y_{ij} & &\label{eq:ilp:scarcity:objective}\\
    \text{subject to}\quad &z_{ij} = 1 & &\forall\ (i, j) \in E. \label{eq:ilp:scarcity:initial-z-variables}\\
    &z_{ij} \geq z_{kj} + x_k - 1 & \quad\quad&\forall\ i \in V, \ \forall\ k \in N^+(i), \ \forall\ j \in V. \label{eq:ilp:scarcity:z-variable-correctness}\\
    &y_{ij} \geq z_{ij} - x_i - x_j & &\forall\ i, j \in V. \label{eq:ilp:scarcity:final-edges}\\
    &x_i, y_{ij}, z_{ij} \in \{0, 1\} & &\forall\ i, j \in V. \label{eq:ilp:scarcity:integrality}
\end{alignat}

The objective~(\ref{eq:ilp:scarcity:objective}) is the number of edges in the final DAG. The constraints~(\ref{eq:ilp:scarcity:initial-z-variables}) encode the initial structure of the input DAG $D$. The constraints~(\ref{eq:ilp:scarcity:z-variable-correctness}) ensure correctness of the $z_{ij}$ variables: if $k$ is an out-neighbor of $i$, $k$ is eliminated, and there exists a path from $k$ to $j$ all of whose internal vertices are eliminated, then there exists such a path from $i$ to~$j$. Constraints~(\ref{eq:ilp:scarcity:final-edges}) ensure correctness of the $y_{ij}$ variables: if there exists a path from $i$ to $j$ for which all internal vertices are eliminated, and if neither $i$ nor $j$ is eliminated, then by~\Cref{obs:directed-path-with-eliminated-internals} the arc $(i, j)$ exists in the final DAG. Finally, constraints~(\ref{eq:ilp:scarcity:integrality}) enforce integrality.\looseness=-1

\subsection{Integrality Gaps}\label{sec:integrality-gaps}

By relaxing constraints~(\ref{eq:ilp:order:integrality}) or~(\ref{eq:ilp:scarcity:integrality}) to allow variable values in the range $[0, 1]$, we obtain efficiently solvable linear programming (LP) relaxations for \StrOJA{} and \MinEC{}, respectively. It is natural to wonder whether rounding fractional solutions to these LPs may lead to high-quality approximations.
It turns out that this is unlikely to be a fruitful direction without additional assumptions.

\begin{restatable}{proposition}{flopsintegralitygap}\label{prop:flops-integrality-gap}
    The fractional LP relaxation of the ILP defined by objective (\ref{eq:ilp:order:objective}) and constraints (\ref{eq:ilp:order:partial-order})-(\ref{eq:ilp:order:lb-edges}) has an integrality gap of $\Omega(n^{1/4})$.
\end{restatable}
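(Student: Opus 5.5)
The plan is to show that the relaxation is weak for a simple reason: with all internal order variables set to $\tfrac12$, almost every fill-in and payment constraint becomes slack. I would then pick a bounded-degree family of DAGs on which every integral elimination order is expensive.

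\textbf{Step 1 (cheap fractional solution).} Fix any DAG $D$ with every edge incident to $\internals$. Define a fractional point as follows:
\begin{itemize}
\item $x_{ij}=\tfrac12$ for all distinct $i,j\in\internals$, and the $x$-values forced by definition~(\ref{eq:ilp:order:x-vars}) for pairs involving terminals;
\item $e_{ij}=1$ for $(i,j)\in E$ and for source--sink pairs in $\widetilde E$;
\item $e_{ij}=\tfrac12$ for all other pairs in $\widetilde E$, and $e_{ij}=0$ outside $\widetilde E$;
\item $z_{ijk}=\max\{0,\,x_{ki}+x_{kj}+e_{ik}+e_{kj}-3\}$, raised where necessary to meet (\ref{eq:ilp:order:lb-separators}) and (\ref{eq:ilp:order:lb-edges}).
\end{itemize}
I would check (\ref{eq:ilp:order:partial-order})--(\ref{eq:ilp:order:reachability-ijk}) by a case analysis on whether $i$ and $j$ are internal, sources or sinks. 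For $i,j\in\internals$ the two $x$-terms sum to $1$, so (\ref{eq:ilp:order:existence}) and (\ref{eq:ilp:order:payment}) hold trivially. When exactly one of $i,j$ is a terminal the $x$-terms sum to $\tfrac32$, so the constraint can bind only if both $(i,k)$ and $(k,j)$ are original edges. The source--sink case is handled by setting $e_{ij}=1$; the variable $e_{ij}$ never reappears on the left-hand side with an internal middle vertex, so this causes no cascade.

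It follows that $z_{ijk}>0$ essentially only for triples with both $(i,k),(k,j)\in E$, plus the separator and $|E|/2$ bounds. Hence the LP value is $O\bigl(\sum_k \deg(k)^2+|E|\bigr)$, which is $O(n)$ for bounded-degree graphs, since $\cutsize{\sources}{k}\cdot\cutsize{k}{\sinks}\le\deg^-(k)\deg^+(k)$.

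\textbf{Step 2 (expensive integral instances).} I would take a bounded-degree DAG that behaves like a two-dimensional mesh: an $m\times m$ grid with edges oriented right and down, sources attached along the top-left boundary and sinks along the bottom-right boundary, so $n=\Theta(m^2)$. I then need to show that every total elimination sequence costs $\Omega(n^{5/4})$; the stated gap only needs this exponent, although nested-dissection lower bounds suggest $n^{3/2}$ might be true.

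The argument I would try uses Observation~\ref{obs:directed-path-with-eliminated-internals}. At any moment, the current in- and out-neighbours of a vertex $v$ are the uneliminated vertices reachable from $v$ through eliminated vertices. Consider the first time $t$ at which some connected region of eliminated vertices has size about $\sqrt n$, or touches about $m^{1/2}$ rows. By the grid isoperimetric inequality this region has a vertex boundary of size $\Omega(n^{1/4})$, split between its "upper-left" and "lower-right" parts. Every boundary vertex later eliminated while the region persists pays roughly the product of these two sides. Summing over $\Omega(n/\sqrt n)$ disjoint such regions, or over the boundary vertices of one large region, should give $\Omega(n^{5/4})$ multiplications.

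\textbf{Main obstacle.} Step 2 is the hard part, in particular ensuring that the upper-left and lower-right boundary parts are simultaneously large. Directedness can make one of them tiny, for example when the region hugs the top boundary. To avoid this I would attach sources and sinks around all sides so that every boundary vertex has both a path from $\sources$ and a path to $\sinks$ avoiding the region. I would then pass to the separator bound of Observation~\ref{obs:sep-lower-bound} applied to the graph at time $t$, bounding the Markowitz degree of each boundary vertex below by $\cutsize{\sources}{k}\cdot\cutsize{k}{\sinks}$ computed in $\gElimSet{D}{\sigma_t}$.

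Combining the $O(n)$ LP value with the $\Omega(n^{5/4})$ integral cost gives the $\Omega(n^{1/4})$ integrality gap.
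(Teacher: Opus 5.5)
\textbf{The gap is Step 2.} The $\Omega(n^{5/4})$ integral lower bound for your grid is never established, and the mechanism you sketch does not work for directed elimination.

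By Observation~\ref{obs:directed-path-with-eliminated-internals}, a vertex $w$ next to an eliminated region $R$ gains in-neighbours through $R$ only if $w$ lies downstream of $R$. It gains out-neighbours through $R$ only if $w$ lies upstream. So no single boundary vertex pays the product of the two sides.

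Concretely, let $R$ be an eliminated $\ell\times\ell$ square in your right/down grid.
\begin{itemize}
\item A vertex just below or right of $R$ has in-degree $\Theta(\ell)$ but out-degree at most $2$.
\item A vertex just above or left of $R$ has in-degree at most $2$.
\end{itemize}
Each therefore pays $O(\ell)$, the whole boundary pays $O(|R|)$, and summing over disjoint regions gives only $O(n)$.

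Switching to Observation~\ref{obs:sep-lower-bound} in $D_{\sigma_t}$ does not help, because $\cutsize{\sources}{k}\le|N^-_{D_{\sigma_t}}(k)|\le 2$ for upstream boundary vertices. A superlinear bound would need vertices that are downstream of one large eliminated region and upstream of another, controlled over all orders. That is a directed nested-dissection lower bound, and you do not supply one. Every tool in the paper gives only $O(n)$ on your grid:
\begin{itemize}
\item Observation~\ref{obs:sep-lower-bound} gives $O(1)$ per vertex.
\item Lemma~\ref{lem:LB:sources}, and the bound $\OPT\ge|E(D_I)|$ from its proof, give at most $|S|\cdot|T|=O(n)$, since you have $O(\sqrt n)$ boundary terminals.
\end{itemize}

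\textbf{A secondary error in Step 1.} Setting $e_{ij}=\tfrac12$ on all of $\widetilde E$ causes forcing you did not account for. A triple $(s,t,k)$ with $s\in S$, $t\in T$ has $x_{ks}+x_{kt}=2$. So (\ref{eq:ilp:order:payment}) forces $z_{stk}\ge\tfrac12$ whenever $(s,k)\in E$ and $t$ is merely reachable from $k$. This costs $O(n)$ on your grid, but it can be $\Theta(n^2)$ in general, so your bound $O(\sum_k\deg(k)^2+|E|)$ is false as stated.

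The fix is a sparser choice of $e$:
\begin{itemize}
\item $e_{ij}=\tfrac12$ only on terminal--internal pairs joined by a two-edge path of original edges through an internal vertex;
\item $e_{st}=1$ on reachable source--sink pairs;
\item $e_{ij}=0$ elsewhere.
\end{itemize}
All constraints still hold, and the forced $z$-mass becomes $O(n\Delta^3)$ for maximum degree $\Delta$.

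\textbf{How to complete your strategy.} With that assignment, your bounded-degree approach goes through if you pick an instance whose integral bound is immediate. Take a complete binary in-tree from $\Theta(n)$ sources to a root $v$, glued to a complete binary out-tree from $v$ to $\Theta(n)$ sinks. Lemma~\ref{lem:LB:sources} at $v$ gives $\OPT\ge|S|\cdot|T|=\Theta(n^2)$, against an $O(n)$ fractional value. That is far more than the proposition needs.

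The paper takes a different route and allows unbounded degree. It uses dense layers $S,A,B,M,U,W,T$, where Lemma~\ref{lemma:false-twins} plus a two-case argument give $\Omega(n^{7/4})$ integrally. Its fractional solution costs $O(n^{3/2})$, dominated by constraints (\ref{eq:ilp:order:lb-separators}) and (\ref{eq:ilp:order:lb-edges}).
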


\begin{restatable}{proposition}{scarcityintegralitygap}\label{prop:scarcity-integrality-gap}
    The fractional LP relaxation of the ILP defined by objective (\ref{eq:ilp:scarcity:objective}) and constraints (\ref{eq:ilp:scarcity:initial-z-variables})-(\ref{eq:ilp:scarcity:integrality}) has an integrality gap of $\Omega(n)$.
\end{restatable}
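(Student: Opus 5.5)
We will exhibit a family of DAGs on $n$ vertices whose optimal integral value of the \MinEC{} ILP is $\Omega(n)$, together with a feasible fractional solution of the LP relaxation with value $O(1)$. The natural candidate is a graph where few edges can be removed integrally, yet setting every $x_i$ to an intermediate value lets constraints~(\ref{eq:ilp:scarcity:final-edges}) be satisfied vacuously. The relevant observation is that~(\ref{eq:ilp:scarcity:final-edges}) reads $y_{ij} \geq z_{ij} - x_i - x_j$. So if every internal vertex has $x_i = 1/2$, then every pair $(i,j)$ with both endpoints internal can take $y_{ij}=0$ even when $z_{ij}=1$. Moreover,~(\ref{eq:ilp:scarcity:z-variable-correctness}) only forces $z_{ij} \geq z_{kj} - 1/2$, so fractional $z$ values decay along paths.

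\textbf{Construction.} I would take a graph in which every edge has both endpoints internal except for a constant number, and in which integral elimination cannot reduce the edge count below $\Omega(n)$. A concrete choice is one source $s$ and one sink $t$ joined by a long directed path $s \to v_1 \to \dots \to v_m \to t$, with each $v_i$ also given a private source $s_i$ and a private sink $t_i$ (edges $s_i \to v_i \to t_i$). Integrally, the sources and sinks can never be eliminated, and every $s$--$t$ reachable pair that is connected through eliminated vertices yields an edge. Eliminating $v_i$ replaces its degree-$3\times 3$ star by edges among its surviving neighbours, so each $v_i$ contributes a constant $\geq 1$ regardless of the choice. Hence $\OPT = \Omega(m) = \Omega(n)$. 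I would make this lower bound precise via \Cref{obs:directed-path-with-eliminated-internals}: the pair $(s_i, t_i)$ always has a path through $v_i$ only, so either $v_i$ survives (edge $(s_i,v_i)$ remains) or $(s_i,t_i) \in E(D_X)$. These edges are distinct for distinct $i$, giving $\geq m$ edges.

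\textbf{Fractional solution.} This is the main obstacle: edges incident to the non-eliminable terminals $s_i,t_i$ have $x=0$ on one side, so each such $y$ is at least $z - 1/2 = 1/2$. That gives $\Omega(m)$ fractional cost too, so the above gadget fails. The fix is to use a graph whose only terminals are a constant number of vertices, while the $\Omega(n)$ integral lower bound comes from internal--internal edges. I would therefore choose a graph with $S=\{s\}$, $T=\{t\}$ and a long path or ladder of internals with $\Theta(n)$ edges. Integrally, the result must then be shown to keep $\Omega(n)$ edges, which fails for a plain path since eliminating everything leaves one edge.

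So the construction needs care. Take $S=\{s\}$, $T=\{t\}$, internal vertices $u_1,\dots,u_m$, each with edges $s\to u_i\to t$, and an internal path $u_1\to\dots\to u_m$. Fractionally, set $x_{u_i}=1$ to kill all internal pairs, and pay only the pairs involving $s,t$: here $y_{st}\leq 1$, $y_{s u_i}$, $y_{u_i t}$ become $0$ because $x_{u_i}=1$. This yields fractional cost $1$, but it is integral and gives cost $1$ as well, so it is not a gap. The correct target is to have pairs $(a,b)$ with both endpoints internal, with $z_{ab}=1$ forced, yet integrally not both eliminable without creating many terminal edges. For example, use $\sqrt{n}$ sources and $\sqrt{n}$ sinks forming a complete biclique through a middle layer, so integral elimination produces $\Omega(n)$ edges either way, while $x=1/2$ on the middle layer lets the $z$ values between terminals decay to $0$ through two half-steps.

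Verifying that such a decay suffices to keep all terminal--terminal $y$ at $O(1)$ total, while terminal--internal $y$ values also vanish, is the step I expect to require the most tuning. I would first try $x_i = 1/2$ with $z$ set to the minimal values satisfying~(\ref{eq:ilp:scarcity:z-variable-correctness}), and then compute the objective.
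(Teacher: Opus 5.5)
\textbf{There is a genuine gap.} Your proposal never fixes one construction and verifies both bounds on it, and the final candidate cannot work as described.

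\emph{Why the last candidate fails.} For any original arc $(s,v)$ with $s\in S$ (or $(v,t)$ with $t\in T$), constraint~(\ref{eq:ilp:scarcity:initial-z-variables}) forces $z_{sv}=1$, and $x_s=0$, so $y_{sv}\ge 1-x_v$. At $x_v=1/2$ these terminal--internal $y$ values are at least $1/2$ and can never ``vanish'' as you hope. This causes two problems for your ``complete biclique through a middle layer'':
\begin{itemize}
\item The complete biclique from $\sqrt n$ terminals into a layer $M$ already costs $\Omega(\sqrt n\,|M|)$ fractionally at $x=1/2$.
\item With a single internal layer between $s$ and $t$ there is only one half-step of decay: $z_{st}\ge z_{mt}+x_m-1=1/2$. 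Hence $y_{st}\ge 1/2$ on $\Theta(|S||T|)$ pairs.
\end{itemize}
More fundamentally, your target ``integral $\Omega(n)$, fractional $O(1)$'' is the wrong one. Eliminating all of $I$ leaves at most $|S|\cdot|T|$ arcs, so an $\Omega(n)$ integral optimum needs $|S|+|T|=\Omega(\sqrt n)$ terminals. Each terminal is incident to an arc that pays $1/2$ at $x=1/2$, so the fractional value cannot be $O(1)$ in your design.

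\emph{The missing idea.} Aim for the ratio and let both values grow. Put the density between two internal layers and attach the terminals sparsely. The paper takes $S,A,B,T$ of size $n/4$ each, with matchings $s_i\to a_i$ and $b_i\to t_i$, and a complete bipartite graph $A\to B$.
\begin{itemize}
\item Integrally: when $a_i$ is eliminated, its only in-neighbor is $s_i$, whose only out-neighbor is $a_i$. So the elimination destroys $1+d$ arcs and creates exactly $d$ new ones; the same holds symmetrically for $b_j$. Every elimination therefore removes exactly one arc, regardless of order, and the optimum is $|E|-|I|=n^2/16$.
\item Fractionally: set $x=1/2$ on $A\cup B$. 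Then $z_{sb}=z_{at}=1/2$ and $z_{st}=0$ (two half-steps). Arcs in $A\times B$ are free since $x_a+x_b=1$. Only the $n/2$ matching arcs pay $y=1/2$, giving value $n/4$ and ratio $n/4$.
\end{itemize}

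\emph{Your discarded path also works.} Once you measure the ratio rather than the absolute values, the plain path $s=p_0\to p_1\to\cdots\to p_k\to p_{k+1}=t$ gives the gap. Set $x_{p_i}=1-1/k$ and $z_{p_ip_j}=\max(0,1-(j-i-1)/k)$ for $i<j$. For $k\ge 2$, the only nonzero $y$ values are $y_{sp_1}=y_{p_kt}=1/k$. So the LP value is at most $2/k$, while the integral optimum is $1$.
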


Both of our constructions (see~\Cref{appendix:proofs:ilps}) have vertices with degree linear in $n$, so it may still be interesting to explore rounding schemes in the bounded-degree setting.
Bounded in-degree, in particular, may be practically relevant for computational graphs.
\section{Topological Ordering and Small Separators}\label{sec:separator-analysis}

The most classic vertex elimination schemes are the forward (\fo{}) and reverse (\re{}) modes of AD, which eliminate internal vertices in forward (or reverse, respectively) topological order.
We now give a straightforward upper bound on the cost of these algorithms. Proofs of statements in this section can be found in~\Cref{appendix:proofs:separator-analysis}.

\begin{restatable}{lemma}{costoftopo}\label{lem:topo}
    Eliminating $D = (S \uplus I \uplus T, E)$ in forward topological order incurs cost at most $|S|\cdot |E|$. Eliminating in reverse topological order incurs cost at most $|T|\cdot |E|$.
\end{restatable}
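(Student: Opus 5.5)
We prove the forward-mode bound; the reverse-mode bound follows by applying it to the DAG with all edges reversed, which swaps the roles of $\sources$ and $\sinks$ and turns reverse topological order into forward topological order. The key is an invariant for forward elimination. Order the internal vertices topologically as $v_1,\dots,v_\ell$. When $v_i$ is eliminated, every vertex preceding it topologically has either already been eliminated or is a source. So we first claim that every in-neighbor of $v_i$ in the current DAG $\gElimSequence{D}{(v_1,\dots,v_{i-1})}$ lies in $\sources$.

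To prove the claim, recall that each elimination only adds edges $(u,w)$ where $u$ was an in-neighbor and $w$ an out-neighbor of the eliminated vertex. Hence every edge present at any time joins a pair $(u,w)$ with $w$ reachable from $u$ in $D$, so $u$ precedes $w$ topologically. An in-neighbor $u$ of $v_i$ at its elimination time therefore precedes $v_i$ and is still present. It cannot be an internal vertex, since all internal vertices before $v_i$ have been eliminated, and it cannot be a sink, since sinks reach nothing. So $u\in\sources$. Consequently the cost of eliminating $v_i$ is at most $|\sources|\cdot|\outnbrs{}{v_i}|$, where the out-degree is taken in the current graph.

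Next we bound the current out-neighborhood of $v_i$ by its original one. Eliminating $v_j$ with $j<i$ adds edges only out of in-neighbors of $v_j$. By the claim applied to $v_j$, these in-neighbors are sources, so no edge out of an internal vertex is ever created. Eliminations only delete edges out of $v_i$ (those to eliminated vertices, and none of its out-neighbors precede it anyway). Hence the out-degree of $v_i$ at elimination is at most its out-degree $d^+_D(v_i)$ in $D$.

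Summing over $i$ gives a total cost of at most $|\sources|\sum_{v\in\internals} d^+_D(v)\le|\sources|\cdot|E|$, since each edge is counted at most once, at its tail. The only delicate step is the invariant that no newly created edge ever leaves an internal vertex. It needs the simultaneous induction above, combining the in-neighbor claim with the fact that fill edges emanate from in-neighbors.
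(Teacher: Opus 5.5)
Your proof is correct and matches the paper's: in forward order every in-neighbor of the vertex being eliminated is a source, so fill edges only leave sources, each internal vertex keeps its original out-neighborhood, and summing $|S|\cdot|N^+_D(v)|$ over $v \in I$ gives at most $|S|\cdot|E|$, with the reverse bound by symmetry. Your reachability invariant (every edge ever present joins $u$ to a vertex reachable from $u$ in $D$) only makes explicit a step the paper states without justification.
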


Combining~\Cref{lem:LB:m/2,lem:topo} immediately yields an approximation bound.
\begin{restatable}{theorem}{apxratioforwardreverse}\label{thm:forward-reverse-approximations}
    The forward mode of AD is a $(2\cdot|S|)$-approximation for \StrOJA{}, and reverse mode is a $(2\cdot|T|)$-approximation. Both bounds are tight.
\end{restatable}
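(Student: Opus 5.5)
The upper bound follows directly from \Cref{lem:topo} and \Cref{lem:LB:m/2}; the real work is the tightness construction, which I have only partly worked out.

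\textbf{Upper bound.} First I reduce to the setting of \Cref{lem:LB:m/2}. Every edge with no endpoint in $\internals$ (a source--sink edge) can be deleted. Neither endpoint is ever eliminated, so such an edge never contributes to a Markowitz degree at any elimination. Hence the cost of every total elimination sequence, in particular forward mode, reverse mode and an optimal sequence, is unchanged. Let $D'$ be the resulting DAG with edge set $E'$; the topological orders on $\internals$ are the same as before. By \Cref{lem:topo} applied to $D'$, forward mode costs at most $|\sources|\cdot|E'|$. By \Cref{lem:LB:m/2}, $\OPT(D)=\OPT(D')\ge |E'|/2$. Hence the forward cost is at most $2|\sources|\cdot\OPT(D)$. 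The reverse-mode bound is symmetric, using $|\sinks|$; alternatively, reverse all edges, which swaps $\sources$ and $\sinks$ and turns reverse mode into forward mode while keeping every Markowitz degree.

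\textbf{Tightness.} For each $k=|\sources|$ I need a family of DAGs whose forward cost divided by $\OPT$ tends to $2k$; reversing edges then handles reverse mode. A ratio near $2k$ needs two things at once: forward mode must nearly meet \Cref{lem:topo}, paying about $k$ per edge, and $\OPT$ must nearly meet \Cref{lem:LB:m/2}, about $|E|/2$.

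My starting gadget joins $k$ sources to a hub $v$, followed by a long subdivided path $v\to a_1\to\dots\to a_L\to t$. In forward mode each $a_i$ inherits all $k$ sources as in-neighbours and pays $k$, for a total of about $kL$ against $|E|\approx L$. Eliminating the path backwards from $a_L$ costs only about $L+k$. This gadget gives ratio $k$ only, because each backward elimination of an $a_i$ removes just one net edge.

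To gain the extra factor $2$, I will modify it so that the bulk of the edges lie in structures an optimal order clears at cost $1$ per two edges, such as internal length-two paths $x\to y\to z$ with $y$ of Markowitz degree $1$. These structures must still be arranged so that forward order reaches each of them only after the $k$ sources have been funnelled into its in-neighbourhood. Once a candidate is fixed, I will compute the forward cost exactly by induction along the topological order, tracking the in-neighbourhood of each vertex at its elimination. For $\OPT$ I need an explicit good order for the upper bound, with \Cref{lem:LB:m/2} or \Cref{obs:sep-lower-bound} as the matching lower bound.

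\textbf{Main obstacle.} The hard step is designing this gadget. The hub must not acquire a large out-degree, or $\OPT$ itself pays about $k$ per branch, as happens with many parallel branches leaving $v$. If exact tightness proves impossible, I will aim for a family with ratio $2k-o(1)$, which is presumably the sense of ``tight'' intended.
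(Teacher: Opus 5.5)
Your upper bound is correct and is the paper's argument: combine \Cref{lem:topo} with \Cref{lem:LB:m/2}. Deleting source--sink arcs first is a valid, slightly more careful way to meet the hypothesis of \Cref{lem:LB:m/2}. Such arcs are never incident to an eliminated vertex, so they change neither any Markowitz degree nor the topological orders on $\internals$.

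The genuine gap is the second half of the theorem. You never exhibit a family whose ratio approaches $2|\sources|$, and the only gadget you analyse reaches about $|\sources|$, so tightness is not proved.

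The missing idea is a shortcut arc from every path vertex to one common sink, which turns the path into a chain of triangles. Take sources $s_1,\dots,s_k$ all pointing to $u_1$, a path $u_1\to u_2\to\dots\to u_L$, and arcs $(u_i,t)$ for every $i\in[L]$ into a single sink $t$.

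\textbf{Forward mode.} Each $u_i$ has in-neighbourhood $\{s_1,\dots,s_k\}$ when eliminated. Its out-neighbourhood is $\{u_{i+1},t\}$, or just $\{t\}$ for $u_L$. So forward mode costs $2k(L-1)+k$.

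\textbf{Order $u_L,u_{L-1},\dots,u_1$.} Each $u_i$ with $i\ge 2$ has in-neighbour $u_{i-1}$ and out-neighbour $t$, so it costs $1$. It creates no new arc, because $(u_{i-1},t)$ already exists. Finally $u_1$ costs $k$. Hence $\OPT\le L-1+k$, and the ratio is at least $(2kL-k)/(L-1+k)$, which tends to $2k$ as $L\to\infty$.

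The shortcut arcs do two jobs at once. They double the out-degree seen by forward mode, making \Cref{lem:topo} nearly tight. They also make each cheap elimination delete two arcs while creating none, making \Cref{lem:LB:m/2} nearly tight. This is exactly the ``cost $1$ per two edges'' structure you were looking for; your length-two paths alone only give net reduction $1$. Reversing all arcs gives the reverse-mode family. Up to one inessential arc, this is the edge-reversal of the paper's construction in \Cref{fig:2t-bound graph}.

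Your hedge about exactness is also right. The proof of \Cref{lem:topo} only charges arcs leaving $\internals$. So whenever the forward cost is positive, it lies strictly below $|\sources|\cdot|E'|$, and the ratio $2|\sources|$ is never attained. The paper likewise proves only asymptotic tightness.
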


Prior work has observed that small separators seem algorithmically useful for~\StrOJA{} (e.g., see~\cite{Iri1991HoA,Hovland1997efficient,Tadjouddine2008VoA}).
We now extend our techniques to help explain this phenomenon. In the following, we write $E_D(A, B)$ for the edges from $A$ to $B$ in DAG $D$, dropping the subscript when the context is clear. Let $C$ be an $S$-$T$ separator in the underlying undirected graph of $D$. Consider $D\setminus C$, and let $V^L$ be all vertices reachable from some $s\in S$ and $V^R$ those reachable from some $t\in T$. The vertices in $V\setminus (C\cup V^L \cup V^R)$ then form $k$ connected components of $D\setminus C$ disjoint from $S$ and $T$; we write $V^i$ for the vertices of the $i$th such component. We define DAGs $D^L=D[V^L\cup C]\setminus (E(C)\cup E(S,C))$ and $D^R=D[V^R\cup C]\setminus (E(C)\cup E(C,T))$; thus $C$ are the sinks of $D^L$ and the sources of $D^R$. We also define DAG $D^i=D[V^i\cup C]\setminus E(C)$ for each $i\in [k]$, such that the sources and sinks of $D^i$ are (mutually disjoint) subsets of $C$, again with no source-sink edges. When creating each DAG $D^X$, we prune all vertices $c\in C$ that are isolated in $D^X$. We now state the algorithm:\looseness=-1
\begin{algorithm}
    \caption{\label{alg:middleout}\mo$(D,C)$}
\begin{algorithmic}[1]
\State Find DAGs $D^L$, $D^R$, and $\{D^i\}_{i\in[k]}$ as described above.
\State $\sigma \gets \re(D^L)$
\State $\sigma \gets \sigma + \fo(D^R)$ \Comment{We append to the end of $\sigma$.}
\For{$i\in [k]$}
    \State $\sigma \gets \sigma + \fo(D^i)$
\EndFor
\State $\sigma \gets \sigma + \fo(\gElimSet{D}{I\setminus C})$ \\
\Return $\sigma$
\end{algorithmic}
\end{algorithm}

Observe that both $S$ and $T$ are $S - T$ separators, so \mo{} strictly generalizes~\fo{} and~\re{}. Furthermore, a minimum-size separator may be found in polynomial time via a standard reduction to~\pname{Max Flow}.
To analyze~\mo{}, we prove a new lower bound which states that for any $v\in I$, the Markowitz degree of $v$ after eliminating all other interior vertices is a lower bound on the total cost of all elimination sequences.
\begin{restatable}{lemma}{lowerboundsources}\label{lem:LB:sources}
    Let $D = (S \uplus I \uplus T, E)$ be a DAG satisfying $E\cap (S\times T) = \emptyset$.
    For each $v\in I$, the minimum total elimination cost $\OPT(D)$ is bounded by
    $\OPT(D) \geq |\innbrs{\gElimSet{D}{I\setminus\{v\}}}{v}|\cdot |\outnbrs{\gElimSet{D}{I\setminus\{v\}}}{v}|$.
\end{restatable}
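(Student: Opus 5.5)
The plan is to reduce the bound to a count of edges in the final bipartite graph. Each such edge must be paid for by at least one multiplication, because no source–sink edges exist initially.

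First I identify the two neighborhoods. Let $X = I\setminus\{v\}$. In $\gElimSet{D}{X}$ the only surviving vertices are $S \cup T \cup \{v\}$. By \Cref{obs:directed-path-with-eliminated-internals}:
\begin{itemize}
\item the in-neighbors of $v$ in $\gElimSet{D}{X}$ form the set $A \subseteq S$ of sources $s$ having a directed $s$--$v$ path all of whose internal vertices lie in $I$;
\item the out-neighbors form the set $B \subseteq T$ of sinks $t$ having such a $v$--$t$ path.
\end{itemize}
Since sinks have no out-edges and sources no in-edges, no sink can be an in-neighbor and no source an out-neighbor. The right-hand side of the lemma is therefore exactly $|A|\cdot|B|$.

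Next, fix an arbitrary total elimination sequence $\sigma$. By \Cref{lemma:scarcity-order-doesnt-matter}, its final DAG is $\gElimSet{D}{I}$. For $s\in A$ and $t\in B$, concatenating the two paths gives an $s$--$t$ path whose internal vertices are all in $I$. This concatenation is a genuine path because $D$ is acyclic. Applying \Cref{obs:directed-path-with-eliminated-internals} again with $X=I$ gives $(s,t)\in E(\gElimSet{D}{I})$. Hence $\gElimSet{D}{I}$ contains all $|A|\cdot|B|$ pairs of $A\times B$ as edges.

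The key step is a charging argument. By hypothesis $E\cap(S\times T)=\emptyset$, so each edge $(s,t)$ with $s\in A$, $t\in B$ is absent initially and present at the end. Consider the first moment it appears: some vertex $k$ is being eliminated with $s$ an in-neighbor and $t$ an out-neighbor of $k$ at that time. Thus $(s,t,k)$ is a multiplication at $k$ and contributes $1$ to the Markowitz degree of $k$ at its elimination, i.e.\ to the cost of $\sigma$.

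Distinct pairs $(s,t)$ yield distinct triples $(s,t,k)$, since each triple determines its pair. So the cost of $\sigma$ is at least the number of such triples, which is at least $|A|\cdot|B|$. Taking $\sigma$ optimal gives the bound on $\OPT(D)$.

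The main point needing care is the exact neighborhood characterization in the first step, namely that only sources and sinks can be neighbors of $v$ in $\gElimSet{D}{I\setminus\{v\}}$. The charging step itself is routine once one notes that the no-$S\times T$-edge hypothesis forces every final source–sink edge to be created by some elimination.
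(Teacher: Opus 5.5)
Your proof is correct and follows the same route as the paper. You show that every pair in $N^-\times N^+$ (taken in $D_{I\setminus\{v\}}$) is an edge of the final bipartite DAG $D_I$, and then charge each such source--sink edge, absent from $D$ by hypothesis, to the multiplication $(s,t,k)$ at the vertex $k$ whose elimination first creates it. The differences are only cosmetic: the paper charges all of $E(D_I)$ and states the containment $N^-\times N^+\subseteq E(D_I)$ without justification, whereas you restrict to $A\times B$ and derive the containment explicitly from \Cref{obs:directed-path-with-eliminated-internals} and acyclicity.
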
 

By combining~\Cref{lem:LB:m/2,lem:topo,lem:LB:sources}, we prove the following approximation guarantee for~\mo{}, parameterized by the size of an $S-T$ separator $C$.

\begin{restatable}{theorem}{middleoutapproximation}\label{thm:middle-out}
   Let $C$ be an $S$-$T$ separator in the underlying undirected graph of $D$. Then the algorithm $\mo(D,C)$ is an $O(|C|^2)$-approximation for~\StrOJA{}.
\end{restatable}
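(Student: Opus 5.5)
The plan is to split the cost of $\mo(D,C)$ into two phases and bound each against $\OPT(D)$. The first phase is the calls on $D^L$, $D^R$ and the $D^i$; the second is the final forward pass on $\gElimSet{D}{I\setminus C}$. As preprocessing, I would delete every edge in $S\times T$ and every edge with no internal endpoint. Neither kind of edge ever changes a Markowitz degree of an internal vertex, since such an edge is incident to no internal vertex and fill only ever merges into it. After this, both \Cref{lem:LB:m/2} and \Cref{lem:LB:sources} apply. I also assume every internal vertex lies on some source--sink path, as is standard for computational graphs; other vertices can be discarded.

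\textbf{Step 1: the decomposition is faithful.} I would show that running $\re(D^L)$, $\fo(D^R)$ and $\fo(D^i)$ on the separate DAGs incurs exactly the same cost as performing the same eliminations in $D$ itself. The invariant is: at every moment, for each not-yet-eliminated vertex $v\in V^L$ (resp.\ $V^R$, $V^i$), its in- and out-neighborhoods in the current version of $D$ coincide with those in the current version of its piece. This holds initially because $C$ separates, so every edge at $v$ has both endpoints in $V^L\cup C$ and is kept in $D^L$. The only removed edges are those in $E(C)$, $E(S,C)$ and $E(C,T)$, none of which is incident to $v$.

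The invariant is preserved for three reasons. First, eliminating a vertex of one piece creates fill only between its own neighbors, which lie in that piece together with $C$. Second, fill landing on an edge that exists in $D$ but was removed from the piece (for example an $S$--$C$ edge) only merges and does not change any neighbor set of a non-$C$ vertex. Third, fill between vertices of $C$ is irrelevant to the vertices of the other pieces. I expect this bookkeeping to be the most delicate part, and I would handle it by induction on the number of eliminations.

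\textbf{Step 2: cost of the first phase.} Given Step 1, \Cref{lem:topo} bounds the cost of $\re(D^L)$ by $|T(D^L)|\cdot|E(D^L)|\le |C|\cdot|E(D^L)|$, since the sinks of $D^L$ lie in $C$. Likewise, $\fo(D^R)$ costs at most $|C|\cdot|E(D^R)|$ and each $\fo(D^i)$ costs at most $|C|\cdot|E(D^i)|$, since the sources of these pieces lie in $C$. The edge sets of the pieces are pairwise disjoint subsets of $E$, so the first phase costs at most $|C|\cdot|E|$. By \Cref{lem:LB:m/2} this is at most $2|C|\cdot\OPT(D)$.

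\textbf{Step 3: cost of the final phase.} The remaining DAG $\gElimSet{D}{I\setminus C}$ has at most $|C|$ internal vertices, all in $C\cap I$. Consider an internal vertex $c$ at the moment forward mode eliminates it.

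1. Every internal predecessor of $c$ has already been eliminated, so each in-neighbor of $c$ is a source from which $c$ is reachable through eliminated vertices. By \Cref{obs:directed-path-with-eliminated-internals}, this in-neighbor set is contained in $\innbrs{\gElimSet{D}{I\setminus\{c\}}}{c}$.
2. Each out-neighbor of $c$ is either a not-yet-eliminated vertex of $C$ or a sink. Again by \Cref{obs:directed-path-with-eliminated-internals}, the sinks among them lie in $\outnbrs{\gElimSet{D}{I\setminus\{c\}}}{c}$.

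Writing $a=|\innbrs{\gElimSet{D}{I\setminus\{c\}}}{c}|$ and $b=|\outnbrs{\gElimSet{D}{I\setminus\{c\}}}{c}|$, the cost of eliminating $c$ is therefore at most $a(b+|C|)$. \Cref{lem:LB:sources} gives $ab\le\OPT(D)$. Since $c$ reaches some sink, $b\ge1$, and so $a\le ab\le\OPT(D)$. Hence each elimination costs at most $(1+|C|)\OPT(D)$, and the final phase costs at most $|C|(|C|+1)\OPT(D)$.

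Adding the two phases gives a total of at most $(|C|^2+3|C|)\OPT(D)=O(|C|^2)\cdot\OPT(D)$, which proves the theorem.
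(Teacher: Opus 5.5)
Your proposal is correct and follows essentially the same route as the paper: the first phase is bounded by $|C|\cdot|E|\le 2|C|\cdot\OPT(D)$ via \Cref{lem:topo} and \Cref{lem:LB:m/2}, and each elimination in the final forward pass is bounded by applying \Cref{lem:LB:sources} twice, once directly and once after using that $c$ has at least one sink out-neighbor in $\gElimSet{D}{I\setminus\{c\}}$. The differences are cosmetic. You bound the remaining $C$-out-neighbors by $|C|$ rather than by $|C\setminus\pi(v)|$, which gives $|C|^2+3|C|$ instead of the paper's $(|C|^2+7|C|)/2$. You also make explicit the faithfulness of the decomposition and the $S\times T$ preprocessing that the paper leaves implicit, though your Step~1 invariant should be stated for internal vertices only, since sources in $V^L$ lose their $E(S,C)$ edges in $D^L$ and only internal vertices' neighborhoods affect cost.
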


In \Cref{appendix:proofs:separator-analysis}, we discuss scenarios where this approximation ratio can be improved.

\section{Summary of Evaluated Heuristics}
\label{sec:heuristics}

In addition to \fo{}, \re{}, and separator-based methods,
numerous other heuristics for \StrOJA{} have been proposed, which we summarize here.
A key is presented below.
The original \alg{Greedy Markowitz} (\greedymark{}) rule of Griewank and Reese~\cite{Griewank1991OtC}
greedily minimizes the cost of the current elimination.
Several variations have subsequently appeared. For a vertex $v \in \internals$, let $\sources_v \subseteq \sources$ be the subset of sources from which $v$ is reachable, and let $\sinks_v$ be defined similarly. Naumann~\cite{naumann1999efficient} proposed
to modify \greedymark{} by considering the cost of eliminating $v$ \emph{last}: \alg{Relative Markowitz} (\relmark) greedily minimizes $\markdeg{v} - |S_v|\cdot|T_v|$.
Two more variations, both proposed in~\cite{albrecht2003markowitz}, consider the effect of eliminating $v$ on the Markowitz degrees of the remaining vertices. Let $\totalmark{D} \vcentcolon = \sum_{v \in I} \markdeg{v}$ denote the total Markowitz degree of $D$. 
\alg{Maximum Total Markowitz Reduction} (\mtmr{}) greedily selects $v$ minimizing $M(\gElimVertex{D}{v})$. This idea can be refined as follows.
Define the \emph{damage} $d(v)$ of eliminating $v$ to be $d(v) \vcentcolon = \totalmark{\gElimVertex{D}{v}} - \totalmark{D} + \markdeg{v}$.
\alg{Least Markowitz Minimal Damage} (\lmmd) greedily selects $v$ minimizing $\markdeg{v} + d(v)$.
Finally, we introduce a new variant, which considers the Markowitz degree of $v$ in comparison to its minimum possible elimination cost $\mu_D^*(v) \vcentcolon = \cutsize{\sources}{v}\cdot\cutsize{v}{\sinks}$, as stated in~\Cref{obs:sep-lower-bound}.
\alg{Difference to Minimum Cost} (\diffmincost{}) greedily selects $v$ minimizing $\markdeg{v} - \mu_D^*(v)$.

Another natural strategy arises from an upper bound on the optimal objective value, shown in~\cite{griewank2008evaluating}.
Let $|| \sources \rightarrow \sinks ||_D$ denote the sum of the lengths of all maximal paths in~$D$.
This quantity is computable via a standard dynamic program, leading to
\alg{Maximum Pathlength Reduction} (\pathlen{}), which greedily selects $v$ minimizing $|| \sources \rightarrow \sinks ||_{\gElimVertex{D}{v}}$~\cite{griewank2008evaluating,naumann1999efficient}.
Similarly, writing~$\numpaths{v}$ for the number of paths on which $v$ appears, Naumann~\cite{naumann1999efficient} proposed
\alg{Pathcount} (\pc{}), which greedily selects $v$ minimizing the ratio $\markdeg{v} / \numpaths{v}$.

Finally, Naumann~\cite{naumann1999efficient} explored the intuition that reducing the number of edges should lead to low cost.
\alg{Greedy Edge Reduction} (\edgereduction{}) greedily selects $v$ minimizing $|E(\gElimVertex{D}{v})|$,
while a computationally simplified variant (\markanddegree) greedily selects $v$ minimizing $\markdeg{v} - \deg(v)$.

\begin{description}[leftmargin=2cm, labelwidth=1.8cm]
    \item[\fo{}:] Eliminate the internal vertices $I$ in forward topological order.
    \item[\re{}:] Eliminate the internal vertices $I$ in reverse topological order.
    \item[\greedymark{}:] Greedily select $v \in I$ minimizing $\markdeg{v}$.
    \item[\relmark{}:] Greedily select $v \in I$ minimizing $\markdeg{v} - |S_v|\cdot|T_v|$.
    \item[\mtmr{}:] Greedily select $v \in I$ minimizing $\totalmark{\gElimVertex{D}{v}}$.
    \item[\lmmd{}:] Greedily select $v \in I$ minimizing $\markdeg{v} + d(v)$.
    \item[\diffmincost{}:] Greedily select $v \in I$ minimizing $\markdeg{v} - \mu_D^*(v)$.
    \item[\pathlen{}:] Greedily select $v \in I$ minimizing $|| \sources \rightarrow \sinks ||_{\gElimVertex{D}{v}}$.
    \item[\pc{}:] Greedily select $v \in I$ minimizing $\markdeg{v} / \numpaths{v}$. 
    \item[\edgereduction{}:] Greedily select $v \in I$ minimizing $|E(\gElimVertex{D}{v})|$
    \item[\markanddegree{}:] Greedily select $v \in I$ minimizing $\markdeg{v} - \deg(v)$.
\end{description}

For all of the above except~\greedymark{}, we take a standard approach to tiebreaking, e.g., as seen in~\cite{albrecht2003markowitz}, by choosing a candidate with minimum Markowitz degree. For~\greedymark{}, we again follow the lead of~\cite{albrecht2003markowitz} by choosing a candidate with highest degree.
We also test a simulated annealing approach (\sa{}) proposed by~\cite{naumann1999efficient,naumann2002SA}, a deep reinforcement learning approach (\ag{}) proposed by~\cite{lohoff2024optimizing}, and \mo{} (Algorithm~\ref{alg:middleout}), which we abbreviate as \moshort{}.

\smallskip
\noindent{\textbf{Heuristics for \MinEC{}}}.
A classic strategy for~\MinEC{} is to simply execute one of the greedy~\StrOJA{} heuristics described above,
recording at each step the size of the DAG, and return the smallest DAG encountered.
In this context, the directly greedy approach is represented by \edgereduction{}.

We also consider a dedicated heuristic (\mcmc{}),
which searches for small representations via a Markov Chain Monte Carlo approach.
Our implementation is an adaptation (to the setting of vertex elimination) of an
algorithm~\cite{lyons2012randomized} defined in the setting of
edge elimination~\cite{naumann1999efficient}, which can be viewed as
a family of finer-grained graph edits relative to vertex elimination.

\section{Computational Results}\label{sec:experiments}

We implemented our new ILPs, the ILPs from~\cite{chen2012integer,chen2012scarcity}, and all algorithms except~\ag{} in C++, and tested on a Intel(R) Xeon(R) Gold 6230 CPU @ 2.10GHz. We used Gurobi~\cite{Gurobi} (version $13$; single-threaded) to solve ILPs.
For~\ag{}, we used the existing implementation of~\cite{lohoff2024optimizing} and tested on a NVIDIA RTX A6000.
Complete results can be found in~\Cref{appendix:experiments}.
Implementations and datasets can be found at~\href{https://github.com/theoryinpractice/crosscountryAD}{https://github.com/theoryinpractice/crosscountryAD}.\looseness=-1

\subsection{Datasets}
We evaluated on a corpus of assembled from five collections
of computational graphs.
\CHMUset{} consists of the five graphs considered
in~\cite{chen2012integer,chen2012scarcity}, with $|V| \in [10,16]$ and $|E| \in [12,24]$.
\GWset{} consists of all four nontrivial
graphs from chapter 10 of~\cite{griewank2008evaluating} not already included
in \CHMUset, with $|V| \in [6,34]$ and $|E| \in [6,70]$. 
\NLSset{} has 26 graphs generated from the TEST\_NLS nonlinear
least squares test problems~\cite{testnls}. The original Fortran was translated to C++ and the AD tool CoDiPack~\cite{CoDiPack} was used to capture the computational graphs, with $|V| \in [8,1751]$ and $|E| \in [8,2575]$.
\EGset{} is a collection of ten small instances ($|V| \in [36,1200]$ and $|E| \in [135,5500]$) of the evolution graphs that result
from discretizing a time-dependent partial differential equation using
a 5-point stencil on a torus (two-dimensional grid with periodic
boundary conditions). 
Evolution graphs are used
in~\cite{griewank2008evaluating} as examples where \greedymark{} is an
ineffective heuristic. 
\AGset{} is the set of eleven test problems used in~\cite{lohoff2024optimizing}, generated by modifying Graphax~\cite{graphax} to output dot files from functions implemented in JAX.
For these graphs, $|V| \in [7,140]$ and $|E| \in [8,226]$.

\subsection{ILP Performance}

\begin{table}[b]
\caption{Combined results for \StrOJA{} ILPs. Each row corresponds to one data source, with the bottom row representing the entire corpus. The second column records the number of graphs considered. The next five columns record the number of graphs solved in 10 minutes or less. The final three columns record median speedup percentages, with respect to the fastest version of the ``Old ILP'' from~\cite{chen2012integer,chen2012scarcity}, restricted to those graphs solved by both methods.}
\label{tab:ilp-timing-flops-combined}
\centering
{\small\setlength{\tabcolsep}{4pt}\begin{tabular}{ll|lllll|llll}
\toprule
\multicolumn{2}{c}{} & \multicolumn{5}{c}{Graphs Solved (\% of total)} & \multicolumn{4}{c}{Median Speedup} \\
Dataset & \# & Old ILP & \va{} & \vb{} & \vc{} & \vd{} & \va{} & \vb{} & \vc{} & \vd{} \\
\midrule
\CHMUset{} & 5 & 5 (100) & 5 (100) & 5 (100) & 5 (100) & 5 (100) & 77 & 92 & 90 & 94 \\
\GWset{} & 4 & 2 (50) & 3 (75) & 4 (100) & 3 (75) & 3 (75) & -101 & -117 & -89 & 76 \\
\NLSset{} & 26 & 6 (23) & 12 (46) & 13 (50) & 16 (62) & 22 (85) & 83 & 90 & 90 & 94 \\
\EGset{} & 10 & 0 (0) & 5 (50) & 5 (50) & 5 (50) & 5 (50) & n/a & n/a & n/a & n/a \\
\AGset{} & 11 & 1 (9) & 2 (18) & 2 (18) & 6 (55) & 7 (64) & 27 & 55 & 74 & 72 \\
\midrule
All Graphs & 56 & 14 (25) & 27 (48) & 29 (52) & 35 (62) & 42 (75) & 75 & 88 & 87 & 92 \\
\bottomrule
\end{tabular}}
\end{table}

In~\Cref{tab:ilp-timing-flops-combined}
 we present computational results for our \StrOJA{} ILPs, run on a single core with a 10 minute timeout. Every graph solved optimally by any version of the ILP from~\cite{chen2012integer,chen2012scarcity} is solved optimally by every version of our ILP.
Furthermore, our ILPs solve three times as many graphs within the time limit as the ILPs of~\cite{chen2012integer,chen2012scarcity}.
The variants of our ILP are also much faster than the ILPs of~\cite{chen2012integer,chen2012scarcity} on graphs solved by both; on most such graphs, the unoptimized variant of our ILP is at least $75\%$ faster, and the most optimized variant is at least $92\%$ faster. We note that the speedup results for graphs from the \GWset{} dataset are skewed by a graph with only $6$ vertices.
Finally, we note that our optimizations (recall~\Cref{sec:ilp}) are effective, steadily increasing number of graphs solved optimally within the time limit.
The results for our~\MinEC{} ILP (\Cref{tab:ilp-timing-scarcity-combined}) are even more dramatic. Our ILP solves four times as many graphs within the time limit as the ILP from~\cite{chen2012integer,chen2012scarcity}, and on most of the graphs solved by both, ours is at least $96\%$ faster. 
\begin{table}[t]
\caption{Combined results for \MinEC{} ILPs. Each row corresponds to one data source, with the bottom row representing the entire corpus. The second column records the number of graphs considered. The next two columns record the number of graphs solved in 10 minutes or less. The final column records median speedup percentages with respect to the ``Old ILP'' from~\cite{chen2012integer,chen2012scarcity}.}
\label{tab:ilp-timing-scarcity-combined}
\centering
{\small\setlength{\tabcolsep}{4pt}\begin{tabular}{ll|ll|l}
\toprule
\multicolumn{2}{c}{} & \multicolumn{2}{c}{Graphs Solved (\% of total)} &  \\
Dataset & \# & Old ILP & New ILP & Median Speedup \\
\midrule
\CHMUset{} & 5 & 5 (100) & 5 (100) & 96 \\
\GWset{} & 4 & 2 (50) & 4 (100) & 82 \\
\NLSset{} & 26 & 5 (19) & 25 (96) & 97 \\
\EGset{} & 10 & 0 (0) & 7 (70) & n/a \\
\AGset{} & 11 & 1 (9) & 11 (100) & 83 \\
\midrule
All Graphs & 56 & 13 (23) & 52 (93) & 96 \\
\bottomrule
\end{tabular}}
\end{table}

With longer time limits, we are able to obtain optimal solutions for~\StrOJA{} on $47/56$ ($83\%$) of graphs in our corpus, and for~\MinEC{} on $54/56$ ($96\%$).
When evaluating heuristics, we restrict our attention to these graphs.
\begin{table}[b]
\caption{Mean approximation ratios for \StrOJA{} heuristics, arranged per-dataset. Entries marked with an asterisk are discussed in the text.}
\label{table:flops-heuristics-combined}
\centering
{\small\setlength{\tabcolsep}{3pt}\begin{tabular}{ll|llllllllllllll}
\toprule
Dataset & \# & \fo{} & \re{} & \greedymark{} & r\greedymark{} & \mtmr{} & \lmmd{} & \pathlen{} & \moshort{} & \sa{} & \diffmincost{} & \edgereduction{} & \markanddegree{} & PC & \ag{} \\
\midrule
\CHMUset{} & 5 & 1.16 & 1.24 & 1.09 & 1.00 & 1.16 & 1.09 & 1.19 & 1.10 & 1.02 & 1.25 & 1.09 & 1.09 & 1.25 & 1.00 \\
\GWset{} & 4 & 1.54 & 1.28 & 1.08 & 1.10 & 1.40 & 1.08 & 1.65 & 1.10 & 1.02 & 1.28 & 1.05 & 1.08 & 1.05 & 1.03 \\
\NLSset{} & 24 & 1.52 & 1.35 & 1.13 & 1.02 & 1.10 & 1.02 & 1.35 & 1.40 & 1.04 & 1.35 & 1.11 & 1.11 & 1.02 & 1.03 \\
\EGset{} & 6 & 1.00 & 1.00 & 1.03* & 1.00* & 1.22 & 1.03 & 1.03 & 1.00 & 1.00 & 1.00 & 1.03 & 1.03 & 1.03 & 1.00 \\
\AGset{} & 8 & 2.77 & 1.05 & 1.20 & 1.04 & 1.04 & 1.00 & 1.73 & 1.08 & 1.04 & 1.05 & 1.20 & 1.19 & 1.04 & 1.03 \\
\midrule
All & 47 & 1.63 & 1.24 & 1.12 & 1.03 & 1.14 & 1.03 & 1.38 & 1.24 & 1.03 & 1.24 & 1.11 & 1.11 & 1.05 & 1.02 \\
\bottomrule
\end{tabular}}
\end{table}

\subsection{Heuristic Performance}

In~\Cref{table:flops-heuristics-combined}, we present (arithmetic) mean approximation results for the~\StrOJA{} heuristics. We confirm the longstanding belief that these heuristics, despite failing on contrived examples, perform quite well in practice.
Indeed, while every heuristic performs poorly on some graphs, a fast ensemble approach is remarkably effective.
The worst approximation ratios observed for \fo{}, \re{}, \greedymark{}, and \relmark{} are $7.59, 3.14, 1.52$, and $1.38$, respectively, but an ensemble consisting of just these four algorithms is never worse than a $1.19$-approx on our corpus. Moreover, this ensemble achieves an optimal solution on $39/47$ graphs ($82\%$), and a solution with cost within $5\%$ of optimal on $42/47$ graphs ($89\%$).
The search-based strategies~\sa{} and~\ag{} are also quite effective, though the latter comes with significantly increased computational cost (see~\Cref{appendix:experiments}).

The situation for~\MinEC{} (see~\Cref{table:scarcity-heuristics-combined}) is even more positive. The algorithms \fo{}, \re{}, \greedymark{}, and \relmark{} have worst observed approximation ratios $2.00, 1.95, 1.11$, and $1.24$, respectively. The ensemble consisting of these four algorithms has a worst observed approximation ratio of $1.11$, but solves every other graph in the corpus ($53/54$) optimally. 
It is also worth highlighting the performance of the \mcmc{} method proposed by~\cite{lyons2012randomized}: it solves every graph optimally, with the exception of evolutions.  
\begin{table}[t]
\caption{Mean approximation ratios for \MinEC{} heuristics, arranged per-dataset.}
\label{table:scarcity-heuristics-combined}
\centering
{\small\setlength{\tabcolsep}{4pt}\begin{tabular}{ll|llllllllllllll}
\toprule
Dataset & \# & \fo{} & \re{} & \greedymark{} & r\greedymark{} & \mtmr{} & \lmmd{} & \pathlen{} & \moshort{} & \mcmc{} & \diffmincost{} & \edgereduction{} & \markanddegree{} & PC \\
\midrule
\CHMUset{} & 5 & 1.04 & 1.03 & 1.04 & 1.00 & 4.10 & 3.93 & 1.06 & 1.04 & 1.00 & 1.03 & 1.04 & 1.04 & 1.04 \\
\GWset{} & 4 & 1.02 & 1.06 & 1.02 & 1.00 & 5.36 & 4.07 & 1.06 & 1.02 & 1.00 & 1.06 & 1.02 & 1.02 & 1.02 \\
\NLSset{} & 26 & 1.12 & 1.10 & 1.00 & 1.00 & 3.68 & 3.28 & 1.23 & 1.05 & 1.00 & 1.09 & 1.00 & 1.00 & 1.00 \\
\EGset{} & 8 & 1.00 & 1.00 & 1.00 & 1.00 & 12.13 & 10.72 & 1.00 & 1.00 & 1.45 & 1.00 & 1.00 & 1.00 & 1.00 \\
\AGset{} & 11 & 1.03 & 1.03 & 1.01 & 1.02 & 10.60 & 9.97 & 1.03 & 1.03 & 1.00 & 1.03 & 1.00 & 1.00 & 1.02 \\
\midrule
All & 54 & 1.07 & 1.06 & 1.01 & 1.00 & 6.50 & 5.86 & 1.12 & 1.03 & 1.07 & 1.06 & 1.00 & 1.00 & 1.01 \\
\bottomrule
\end{tabular}}
\end{table}

\smallskip
\noindent\textbf{Evolution Graphs}. We note that the performance of \greedymark{} on evolution graphs is somewhat deceptive, because we can only compute optimal solutions on small evolutions. It is known that the performance of \greedymark{} on these graphs degrades as the graph increases in size. For example, though we cannot compute a minimum-cost elimination sequence on a two-dimensional $20\times20\times20$ evolution ($8800$ vertices), we can observe that the solution \greedymark{} computes is $2.69$ times as expensive as that computed by \fo{}. Similarly, for a planar evolution derived from a $9$-point stencil (see~\cite[Chapter 10]{griewank2008evaluating}), \greedymark{} produces a solution $2.27$ times as expensive as that produced by \fo{}. On the latter graph, \relmark{} also fares poorly ($2.14$ times worse than \fo{}), and we acknowledge that the performance of \relmark{} on the evolution graphs in this paper is buoyed by use of the $5$-point stencil as a template; \relmark{} is known to perform poorly for other stencil choices~\cite{griewank2008evaluating}. 
It has been observed previously that \fo{} and \re{} perform very well on evolutions (of multiple stencils)~\cite{griewank2008evaluating}; we have found no evidence to the contrary, and suggest that it would be interesting to pursue a theoretical justification.

\section{Approximation Lower Bounds}\label{sec:approximation-hardness}

The decision problems underlying \StrOJA{} and \MinEC{} have only recently been shown to be \cclass{NP}-complete~\cite{bentert2025structural}, and little else is known about their complexity.
We now establish the first approximation lower bounds for \StrOJA{}, \MinEC{}, and \MaxER{}.
For all three problems, we exclude a polynomial-time approximation scheme (PTAS) unless $\cclass{P} = \cclass{NP}$. In each case, the hardness holds even in bounded-degree DAGs; for~\MinEC{}, \cclass{NP}-hardness was not previously known in bounded-degree DAGs. Finally, we exclude a sub-polynomial approximation for~\MaxER{}, unless $\cclass{NP} \subseteq \cclass{BPP}$.
Complete proofs can be found in~\Cref{appendix:proofs:approximation-hardness}.

The existing \cclass{NP}-completeness reduction for~\StrOJA{} is from~\VC{}~\cite{bentert2025structural}. 
We show that approximations are preserved when the input graph is cubic.\footnote{A graph is \emph{cubic} if every vertex has degree~$3$. \VC{} remains \APX{}-hard in cubic graphs~\cite{alimonti1997hardness}.}\looseness=-1

\begin{restatable}{theorem}{minflopsapxhard}\label{thm:oja-apx-hard}
  \StrOJA{} is \APX-hard, even when restricted to DAGs with maximum in-degree 7, maximum out-degree 6, and maximum total-degree 9.
\end{restatable}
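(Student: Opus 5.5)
We will give an L-reduction from \VC{} restricted to cubic graphs, which is \APX-hard~\cite{alimonti1997hardness}, by re-analysing the \NP-hardness reduction of~\cite{bentert2025structural} from \VC{}. Given a cubic graph $G=(V_G,E_G)$ with $n$ vertices and $m=3n/2$ edges, we build a DAG $D_G$ from constant-size gadgets. Each vertex $v\in V_G$ gets a gadget with a constant number of internal vertices, which can be eliminated in one of two ``modes''. The \emph{cheap} mode costs some constant $a$, and the \emph{expensive} mode costs $a+1$; the expensive mode corresponds to putting $v$ in the cover. Each edge $uv\in E_G$ gets a gadget attached to the two vertex gadgets. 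Its internal vertices can be eliminated at base cost $b$ if at least one of $u,v$ is handled in the expensive mode, and otherwise they cost at least $b+1$. Since $G$ is cubic, every gadget touches only $O(1)$ others, and we can tally the in-, out- and total degrees explicitly; this is how we obtain the bounds $7$, $6$ and $9$. Our first concrete step is to write the gadgets out and check that the degree counts match the statement.

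The core is an exact cost identity: $\OPT(D_G) = c\cdot n + \tau(G)$ for an explicit constant $c$, where $\tau(G)$ is the minimum vertex cover size.

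For the upper bound $\OPT(D_G)\le cn+\tau(G)$, take a minimum cover $X$. Eliminate the gadgets of $X$ in expensive mode and the others in cheap mode, then clean up the edge gadgets in a prescribed order. The cost is computed directly.

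For the lower bound, we take any total elimination sequence $\sigma$ of cost $\cost(\sigma)$ and extract, in polynomial time, a vertex cover $X_\sigma$ with $cn+|X_\sigma| \le \cost(\sigma)$. This gives the second L-reduction condition with constant $1$: $|X_\sigma|-\tau(G)\le \cost(\sigma)-\OPT(D_G)$. The extraction proceeds gadget by gadget. We charge to each gadget the multiplications at its internal vertices. Using locality (by Observation~\ref{obs:directed-path-with-eliminated-internals}, edges created are only between vertices joined by paths through eliminated vertices, so fill stays within neighbouring gadgets), we show each vertex gadget pays at least $a$. We then show each edge gadget pays at least $b$ plus $1$ for every edge $uv$ whose endpoint gadgets both ``look cheap''. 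We put $v$ in $X_\sigma$ if its gadget paid at least $a+1$. If some edge is uncovered, we add one endpoint, paid for by that edge gadget's extra unit. Where charging is ambiguous, we first apply an exchange argument in the spirit of Proposition~\ref{prop:subdivision-preprocessing} to normalise $\sigma$ without increasing its cost.

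The first L-reduction condition is immediate. Cubicity gives $\tau(G)\ge m/3=n/2$, so $\OPT(D_G)=cn+\tau(G)\le(2c+1)\tau(G)$.

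The main obstacle is the lower-bound charging. Elimination costs are order-dependent, and fill edges created in one gadget can change the Markowitz degrees in neighbouring gadgets, so a clean per-gadget accounting is delicate. We expect to handle this by making the gadgets separated from one another by small vertex sets. We would then use Observation~\ref{obs:sep-lower-bound} and Lemma~\ref{lem:LB:sources} to give per-gadget lower bounds that hold regardless of the global order. Ideally, the original reduction's argument already localises in this way and only the constants need tracking.
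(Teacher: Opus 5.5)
Your skeleton is the shape of the paper's proof: an L-reduction from cubic \VC{} that reuses the construction of~\cite{bentert2025structural}, an exact identity $\OPT(D_G)=cn+\tau(G)$, a polynomial-time extraction of a cover with $cn+|X_\sigma|\le\cost(\sigma)$, and $\tau(G)\ge n/2$. The paper has $c=13$, which gives the factor $27$. The gap is that everything of substance is deferred. You never write down the DAG, so the degree bounds $7,6,9$ and both directions of the identity are unverified. The lower bound, which is the actual content of the theorem, is left as an expectation (``we expect'', ``ideally''). Moreover, the architecture you guess for that reduction is not the one it has. There are no edge gadgets with internal vertices, so the step that repairs an uncovered edge with ``that edge gadget's extra unit'' has nothing to attach to. And cross-gadget fill is not something to isolate with separators or exchange arguments: it is exactly the mechanism that encodes the edges.

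Here is what is needed. For each $v$, create $v_1,\dots,v_5$ with arcs $(v_1,v_2),(v_2,v_3),(v_2,v_4),(v_2,v_5),(v_3,v_4),(v_3,v_5)$. For each edge $uv$, add the arcs $(u_1,v_3),(u_2,v_3),(v_1,u_3),(v_2,u_3)$. Only $v_2$ and $v_3$ are internal; $v_3$ has in-degree $7$ and total degree $9$, and $v_2$ has out-degree $6$.

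Sources and sinks are never eliminated, so the following holds in every order. The vertex $v_2$ keeps in-neighbor $v_1$, out-neighbors $v_4,v_5$, and for each neighbor $u$ either $u_3$ or $\{u_4,u_5\}$; hence it pays at least $\deg(v)+2$. The vertex $v_3$ keeps out-neighborhood $\{v_4,v_5\}$ and in-neighbors $v_2$ or $v_1$ plus every $u_1$; hence it pays at least $2\deg(v)+2$. Summed over all $v$ this is $13n$. These coincide with the bounds of \Cref{obs:sep-lower-bound}, so that part of your plan is fine.

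The missing idea is to read the cover directly off the order. Let $C$ be the set of $v$ such that $v_2$ precedes $v_3$, or $v_3$ precedes $u_2$ for some neighbor $u$. Then $C$ is a vertex cover: an edge $uv$ with $u,v\notin C$ would force $u_3$ before $u_2$ before $v_3$ before $v_2$ before $u_3$ in $\sigma$, a cycle. Each $v\in C$ forces one distinct extra multiplication. Either $v_3$ is still an out-neighbor of $v_2$, or $u_2$ sees both $v_4$ and $v_5$ instead of $v_3$. So $|C|\le\cost(\sigma)-13n$, with no edge-by-edge repair. Your rule ``take $v$ if its gadget paid at least $14$'' also works here, because it yields a superset of $C$; but showing that superset is a cover still requires this cyclic-precedence argument.

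For the upper bound, eliminate $v_2$ for $v\in C$, then $u_3$ and then $u_2$ for $u\notin C$, then $v_3$ for $v\in C$. This costs exactly $14|C|+13(n-|C|)=13n+|C|$.
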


Henceforth, we focus on \MinEC{} and \MaxER{}, for which our constructions deviate from existing work.
First, we need an intermediate result.

\begin{restatable}{theorem}{maxisapxhardrestricted}\label{thm:maxIS-apx-hard-large-girth}
  $\MaxIS{}$ is \APX-hard, even when restricted to graphs with maximum degree $3$, minimum degree $2$, girth at least $g$ for any constant $g \geq 3$, and no pair of degree-$3$ vertices adjacent.
\end{restatable}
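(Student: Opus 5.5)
We reduce from \MaxIS{} on cubic graphs of large girth, which is \APX-hard for every constant girth bound. One way to obtain that is to start from \MaxIS{} in cubic graphs and apply standard girth-raising reductions that preserve approximation up to constants. We then subdivide every edge an even number of times to obtain the degree-$2$ vertices. Concretely, let $G$ be cubic with $n$ vertices and $m = 3n/2$ edges, and with girth at least $3$ (or already large, if that is available). Replace each edge by a path with $2r$ internal vertices, where $r$ is a constant chosen so that $2r+1 \geq g$. The resulting graph $G'$ has the following properties. Its maximum degree is $3$ and its minimum degree is $2$. No two degree-$3$ vertices are adjacent, since each original edge receives at least two subdivision vertices. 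Its girth is at least $(2r+1)\cdot 3 \geq g$, because every cycle of $G'$ comes from a cycle of $G$ of length at least $3$, and each edge is stretched to length $2r+1$.

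The key step is the standard even-subdivision identity $\alpha(G') = \alpha(G) + r m$. For the lower bound, take an independent set $I$ of $G$ and extend it along each subdivided edge $uv$. Since at most one of $u,v$ lies in $I$, we can always place $r$ independent internal vertices on the path of $2r$ internal vertices: if the vertex $u$ is in $I$, skip its neighbour on the path and take every other vertex starting from the second internal vertex. For the converse, take any independent set $J$ of $G'$. On a path with $2r$ internal vertices, $J$ contains at most $r$ internal vertices unless both endpoints are avoided, in which case it can contain at most $r$ as well. If $J$ contains both endpoints $u,v$, then it contains at most $r-1$ internal vertices. Hence, removing one endpoint of every edge that has both endpoints in $J$ yields an independent set $I$ of $G$ with $|I| + rm \ge |J|$. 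This conversion is polynomial, giving the required solution map.

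It remains to check the L-reduction parameters. Since $G$ is cubic, we have $\alpha(G) \ge n/4$ by greedy selection, and $m = 3n/2$. Therefore $\alpha(G') = \alpha(G) + r m \le (1 + 6r)\,\alpha(G)$, a constant factor because $r$ depends only on $g$. The solution map above satisfies $\alpha(G) - |I| \le \alpha(G') - |J|$. Together these give an L-reduction, so \APX-hardness transfers.

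The main obstacle is the large-girth starting point if it is not assumed. If needed, we would instead start from cubic \MaxIS{} with girth $3$ and rely on the subdivision alone, as above. Since every cycle in a simple graph has length at least $3$, stretching each edge by a factor $2r+1$ already gives girth at least $3(2r+1)$. Choosing $r \ge g/6$ therefore suffices, so no separate girth-raising step is actually required. The one detail to verify carefully is the endpoint case analysis in the converse direction, which justifies the per-path bound.
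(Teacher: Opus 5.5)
Your proof is correct and follows essentially the same route as the paper: you subdivide each edge of a cubic graph an even number of times, use $\alpha(G') = \alpha(G) + rm$, and use $\alpha(G) \ge n/4$ and $m = 3n/2$ to get the constant $1+6r$. The paper fixes $2r = g$ subdivision vertices per edge (rounding $g$ up to be even), so its $\varepsilon' = \varepsilon/(3g+1)$ is exactly your constant, and your direct count for the converse direction is a slightly cleaner version of the paper's exchange procedure.
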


Our reduction makes use of a classic idea, e.g., as seen in~\cite{komusiewicz2018tight}. Given an instance of \MaxIS{}, we subdivide edges repeatedly to ensure large girth in the resulting graph.
Our contribution is to show that the reduction is approximation-preserving when the input graph is cubic.

We are now ready to sketch our proof that both \MinEC{} and \MaxER{} are \APX-hard. 
Our construction is inspired by that of~\cite{bentert2025structural}, but we have made some modifications necessary to prove hardness even in bounded-degree DAGs.

\begin{figure}[t]
  \centering
  \resizebox{\columnwidth}{!}{%
  \begin{tikzpicture}
    \node[circle, draw] (v) at (0, 0) {$v'$};
    \node[circle, draw, minimum size=1cm, fill=blue!20, label=above:4] (iv) at (-2, 2) {\large$I_v$};
    \node[circle, draw, minimum size=1cm, fill=blue!20, label=above:6] (ov) at (2, 2) {\large$O_v$};
    \node[] (n1) at (2.25, 1) {};
    \node[] (n2) at (2.25, 0) {};
    \node[] (n3) at (2.25, -1) {};
    \draw[->,line width=1.5pt] (iv)--node[above, yshift=2pt]{4}(v);
    \draw[->,line width=1.5pt] (iv)--node[above, yshift=2pt]{24}(ov);
    \draw[->,line width=1.5pt] (v)--node[above, xshift=-2pt]{6}(ov);
    \draw[->,dashed] (v)--(n1);
    \draw[->,dashed] (v)--(n2);
    \draw[->,dashed] (v)--(n3);
    \node[shift={(6,0)}, circle, draw] (v) at (0, 0) {$v'$};
    \node[shift={(6,0)}, circle, draw, minimum size=1cm, fill=blue!20, label=above:4] (iv) at (-2, 2) {\large$I_v$};
    \node[shift={(6,0)}, circle, draw, minimum size=1cm, fill=blue!20, label=above:4] (ov) at (2, 2) {\large$O_v$};
    \node[shift={(6,0)}] (n1) at (-2.25, .75) {};
    \node[shift={(6,0)}] (n3) at (-2.25, -.75) {};
    \draw[->,shift={(6,0)}, line width=1.5pt] (iv)--node[above, yshift=2pt]{4}(v);
    \draw[->,shift={(6,0)}, line width=1.5pt] (iv)--node[above, yshift=2pt]{15}(ov);
    \draw[->,shift={(6,0)}, line width=1.5pt] (v)--node[above, xshift=-2pt]{4}(ov);
    \draw[->,shift={(6,0)}, dashed] (n1)--(v);
    \draw[->,shift={(6,0)}, dashed] (n3)--(v);
    \node[shift={(12,0)}, circle, draw] (v) at (0, 0) {$v'$};
    \node[shift={(12,0)}, circle, draw, minimum size=1cm, fill=blue!20, label=above:4] (iv) at (-2, 2) {\large$I_v$};
    \node[shift={(12,0)}, circle, draw, minimum size=1cm, fill=blue!20, label=above:4] (ov) at (2, 2) {\large$O_v$};
    \node[shift={(12,0)}] (n1) at (2.25, 0) {};
    \node[shift={(12,0)}] (n2) at (-2.25, 0) {};
    \draw[->,shift={(12,0)}, line width=1.5pt] (iv)--node[above, yshift=2pt]{4}(v);
    \draw[->,shift={(12,0)}, line width=1.5pt] (iv)--node[above, yshift=2pt]{16}(ov);
    \draw[->,shift={(12,0)}, line width=1.5pt] (v)--node[above, xshift=-2pt]{4}(ov);
    \draw[->,shift={(12,0)}, dashed] (v)--(n1);
    \draw[->,shift={(12,0)}, dashed] (n2)--(v);
  \end{tikzpicture}%
  }
  \caption{\label{fig:scarcity-apx-hard-gadget}Gadgets (see the construction below) for vertices of type~$1$ (left), 2a (center), and 3 (right). The vertex $v'$ is a replica vertex, and the sets $I_v$ and $O_v$ contain auxiliary vertices; the cardinalities are indicated above. Bolded arrows represent multiple edges. Dashed arrows represent edges to or from the replica vertices in other gadgets. The gadget for vertices of type 2b is similar to that for type 2a, except that the replica vertex has two edges to, rather than from, other gadgets.}
\end{figure}
\smallskip
\noindent\textbf{Construction:}
See~\Cref{fig:scarcity-apx-hard-gadget}.
Let $G = (V, E)$ be a graph with maximum degree $3$, minimum degree $2$, girth at least $5$, and no pair of degree-$3$ vertices adjacent. 
Let $\pi$ be a total order on $V$ with the property that every vertex of degree $3$ precedes every vertex of degree $2$. For two vertices $u, v \in V$, we write $u <_\pi v$ if $u$ precedes $v$ according to $\pi$.
We define $\deg^-_G(v)$ as the number of neighbors $u$ of $v$ with $u <_\pi v$, and define $\deg^+_G(v)$ similarly. 
We partition the vertices $V$ into~$4$ types. Type 1 vertices have degree $3$, so all neighbors appear later in $\pi$ (recall that no pair of degree-$3$ vertices is adjacent).
Type 2a and 2b vertices have degree $2$, with both neighbors appearing earlier (type 2a) or later (type 2b) in $\pi$.
Type 3 vertices have degree $2$, with one neighbor appearing earlier and one neighbor appearing later in $\pi$.

We construct a DAG $D$ as follows. For each $v \in V$, we create a \emph{replica} vertex $v'$, as well as two sets $I_v$ and $O_v$ of \emph{auxiliary} vertices.
The cardinalities of $I_v$ and $O_v$ depend on the type of $v$; for vertices of type $1$, $|I_v| = 4$ and $|O_v| = 6$; for vertices of types 2a, 2b, and 3, $|I_v| = |O_v| = 4$.
We add arcs from every vertex in $I_v$ to $v'$, and from $v'$ to every vertex in $O_v$. Finally, we add a type-specific number of arcs from $I_v$ to $O_v$. Only the number matters; it does not matter which arcs we choose. The numbers of arcs from $I_v$ to $O_v$ are $24, 15, 15$, and $16$ for vertices of types 1, 2a, 2b, and 3, respectively.
Finally, for each edge $uv \in E$ we add the arc $(u', v')$ if $u <_\pi v$, or $(v', u')$ if $v <_\pi u$.
This completes the construction of $D$.

\smallskip
Observe that $D$ has maximum in-degree $6$, maximum out-degree $9$, and maximum total degree $13$. Also, all auxiliary vertices are either sources or sinks, and every replica vertex is an internal vertex.
To analyze the reduction, we will need several supporting results. 
Similarly to~\cite{bentert2025structural}, for \MinEC{} or \MaxER{} on input $D$ we say that a \emph{nice solution} is a set $X$ of replica vertices such that for every $v' \in X$, $D_{X \setminus \{v'\}}$ has at least as many arcs as $D_X$.
The key step in the analysis is to show a connection between independent sets and nice solutions. For the following three claims, let $G$ be a graph with maximum degree $3$, minimum degree $2$, girth at least $5$, and no pair of degree-$3$ vertices adjacent, and let $D$ be the DAG generated by the construction described above.

\begin{restatable}{claim}{nicesolutionclaim}\label{claim:scarcity-reduction-nice-solutions}
  Let $X$ be a subset of the internal vertices of $D$. If $X$ is a nice solution, then it is an independent set in $D$ and has cardinality $|E(D)| - |E(D_X)|$. Further, every independent subset of the internal vertices of $D$ is a nice solution.
\end{restatable}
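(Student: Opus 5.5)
The plan is to first compute the exact effect of eliminating a single replica vertex with no eliminated replica neighbours, and then derive both directions of the claim from that computation. By \Cref{lemma:scarcity-order-doesnt-matter}, $D_X$ does not depend on the elimination order, so for any $v'\in X$ I may write $D_X = (D_{X\setminus\{v'\}})_{/v'}$. Niceness therefore says exactly that eliminating $v'$ \emph{last} never increases the edge count. Edges of every intermediate DAG are described by \Cref{obs:directed-path-with-eliminated-internals}: $(a,b)$ is an edge of $D_Y$ iff some $a$--$b$ path has all its internal vertices in $Y$.

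\textbf{Step 1 (single elimination of an isolated replica).} Let $Y$ be a set of replica vertices with no replica of a $G$-neighbour of $v$, and $v'\notin Y$. Then $v'$ has the same neighbourhood in $D_Y$ as in $D$: a path to $v'$ with all internal vertices in $Y$ would have to enter $v'$ from an eliminated neighbour. I then count, type by type, the arcs removed ($\deg(v')$) and created (pairs in $N^-\times N^+$ not already present).
\begin{itemize}
\item Type 1: $13$ arcs are removed. $I_v\times O_v$ is already complete ($24$ arcs), and $I_v\times\{w'\}$ contributes $12$ new arcs.
\item Types 2a and 2b: $10$ arcs are removed and $1+8=9$ are added, since exactly one pair of $I_v\times O_v$ is missing.
\item Type 3: $10$ arcs are removed and $4+4+1=9$ are added.
\end{itemize}
In each case the net change is exactly $-1$. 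The step needing care is that the ``new'' arcs are genuinely absent in $D_Y$. Arcs into $O_v$ or out of $I_v$ can only arise through $v'$ itself. An arc $u'\to w'$ between the two replica neighbours (type 3) would need either $uw\in E$ (a triangle) or a path $u'\to y'\to w'$ with $y'\in Y$ (a $4$-cycle $u,v,w,y$). Both are excluded by girth at least $5$.

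\textbf{Step 2 (independent $\Rightarrow$ nice, with the count).} Let $X$ be independent and eliminate it one vertex at a time. By \Cref{obs:directed-path-with-eliminated-internals} and independence, no two vertices of $X$ ever become adjacent, so Step 1 applies at every step. Hence $|E(D)|-|E(D_X)|=|X|$. Moreover $D_X=(D_{X\setminus\{v'\}})_{/v'}$ has one arc fewer than $D_{X\setminus\{v'\}}$, so $X$ is nice.

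\textbf{Step 3 (nice $\Rightarrow$ independent), the main obstacle.} Suppose $X$ is nice but contains adjacent replicas. I will choose $v'\in X$ having an $X$-neighbour, and show that eliminating $v'$ last in $D_{X\setminus\{v'\}}$ strictly increases the edge count, which contradicts niceness. Suppose that in $D_{X\setminus\{v'\}}$ the vertex $v'$ reaches, through eliminated replicas, $p$ replica predecessors and $q$ replica successors, with $p+q\ge 1$.
\begin{itemize}
\item Each such predecessor contributes its whole set $I_u$ as in-neighbours of $v'$, and each such successor contributes its whole set $O_w$ as out-neighbours.
\item None of the pairs $I_u\times O_v$, $I_v\times O_w$, or $I_u\times O_w$ is an arc without passing through $v'$.
\end{itemize}
So the number of created arcs is at least $|N^-||N^+|$ minus the at most $24$ existing arcs of $I_v\times O_v$, minus a bounded number of replica--replica arcs. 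This should dominate $\deg(v')=|N^-|+|N^+|$, which grows only linearly in $p,q$ together with at most $3$ non-eliminated replica neighbours. For example, with $p=1,q=0$ at a type-2 vertex we get $32-15=17>12+r$, where $r\le 3$ counts the non-eliminated replica neighbours. I expect the main work to be making this inequality uniform over types and over $p,q$; a clean way is a lower bound of the form $16(p+1)(q+1)-24$ against $4(p+q+2)+O(1)$.

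Finally, independence together with Step 2 gives the cardinality statement $|X| = |E(D)|-|E(D_X)|$.
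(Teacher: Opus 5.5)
Your Steps 1 and 2 are correct and are essentially the paper's argument for the independent case. They use the same type-by-type count of $\deg_D(v')$ destroyed versus $\deg_D(v')-1$ created arcs, the same triangle/$C_4$ reasoning for the type-3 arc $u'\to w'$, and order-independence to read off niceness. As you use the $C_4$ argument, it relies on $Y$ being independent, so that any $u'$--$w'$ path through $Y$ has a single internal vertex.

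The gap is Step 3, the only nontrivial direction. You set up the right contradiction but never prove the inequality, and the bookkeeping you propose rests on a false claim. Pairs in $I_u\times O_w$ \emph{can} already be arcs of $D_{X\setminus\{v'\}}$, because girth $5$ still permits a second route from $u'$ to $w'$ through eliminated replicas. For example, let $G$ be the $5$-cycle $u,v,w,b,a$ with $\pi=(u,a,v,b,w)$, and let $X$ consist of all five replicas. Then $I_u\to u'\to a'\to b'\to w'\to O_w$ has all internal vertices in $X\setminus\{v'\}$, so all $16$ pairs of $I_u\times O_w$ are present. Here $p=q=1$, and eliminating $v'$ creates only $64-16-16=32$ arcs, below your proposed bound $16(p+1)(q+1)-24=40$. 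The contradiction still holds in this example ($32>16$), but your estimate does not. In general you have no control over pre-existing pairs between the sets contributed by eliminated predecessors and eliminated successors, or between those sets and other neighbours of $v'$. So a bound of the form $|N^-||N^+|-24-O(1)$ is not available.

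The fix is exactly the paper's proof: count only the new arcs incident to $I_v\cup O_v$. You already showed these are genuinely new, since arcs into $O_v$ or out of $I_v$ can arise only through $v'$. Let $A$ be the set of neighbours of $v'$ in $D_{X\setminus\{v'\}}$ outside $I_v\cup O_v$, and let $d=|A|$.
\begin{itemize}
\item Every in-neighbour in $A$ gains all of $O_v$ as new out-neighbours, and every out-neighbour in $A$ gains all of $I_v$ as new in-neighbours. So at least $4d$ arcs are created.
\item Exactly $d+|I_v|+|O_v|\le d+10$ arcs are destroyed.
\item An $X$-neighbour $u'$ of $v'$ puts all of $I_u$ or all of $O_u$ into $A$, so $d\ge 4$.
\end{itemize}
Hence $4d>d+10$, the arc count strictly increases, and this contradicts niceness. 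No case analysis over vertex types or over $p,q$ is needed.
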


Claim~\ref{claim:scarcity-reduction-nice-solutions} follows from an analysis of our construction: we show that no nice solution contains two adjacent replica vertices, and that if the set of eliminated vertices is independent, then every elimination reduces the total number of edges by exactly~$1$.

Notice that given any solution $X$, a nice solution $X' \subseteq X$ may be obtained in polynomial time by greedily identifying and removing vertices which cause $X$ to violate the definition of nice solution. 
Combining this procedure with Claim~\ref{claim:scarcity-reduction-nice-solutions}, we produce independent sets in $G$ given vertex elimination sets in $D$.
\begin{restatable}{claim}{produceisclaim}\label{claim:scarcity-apx-produce-is}
  Let $X$ be a subset of the internal vertices of $D$, and let $k = |E(D)| - |E(D_{X})|$. Then it is possible, in polynomial time, to produce an independent set in $G$ of size at least $k$.
\end{restatable}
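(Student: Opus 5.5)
The plan is to turn an arbitrary elimination set $X$ into a nice solution $X'\subseteq X$ without losing edge reduction, and then read off an independent set in $G$ via Claim~\ref{claim:scarcity-reduction-nice-solutions}.

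First, we may assume $X$ consists only of internal vertices of $D$. Every auxiliary vertex is a source or a sink, so every internal vertex is a replica vertex. Thus $X$ is a set of replica vertices.

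Next we run the greedy pruning procedure described before the statement. Set $X_0 = X$. While some $v'\in X_i$ satisfies $|E(D_{X_i\setminus\{v'\}})| < |E(D_{X_i})|$, put $X_{i+1} = X_i\setminus\{v'\}$. Each step strictly decreases $|E(D_{X_i})|$, so the quantity $|E(D)|-|E(D_{X_i})|$ strictly increases and in particular never drops below $k$. Each step also removes a vertex, so there are at most $|X|$ iterations. Each iteration tests at most $|X|$ candidates, and each test computes $D_Y$ for some $Y$; by Observation~\ref{obs:directed-path-with-eliminated-internals} (or by eliminating in any order, using Lemma~\ref{lemma:scarcity-order-doesnt-matter}) this takes polynomial time. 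When the loop stops, the resulting set $X'$ is by definition a nice solution and satisfies $|E(D)|-|E(D_{X'})|\ge k$.

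We then apply Claim~\ref{claim:scarcity-reduction-nice-solutions} to $X'$. It gives that $X'$ is an independent set in $D$ and that $|X'| = |E(D)|-|E(D_{X'})| \ge k$. Finally we map $X'$ to $S=\{v\in V : v'\in X'\}$. This map is injective, so $|S| = |X'| \ge k$. For independence, suppose $uv\in E$. Then $D$ contains exactly one of the arcs $(u',v')$ or $(v',u')$, so $u'$ and $v'$ are adjacent in $D$ and cannot both lie in $X'$. Hence $S$ is an independent set in $G$ of size at least $k$.

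There is essentially no hard step left, since all the real work sits in Claim~\ref{claim:scarcity-reduction-nice-solutions}. The only points needing care are the monotonicity of the pruning, the polynomial running time, and the fact that adjacency in $G$ transfers to adjacency of replicas in $D$, which holds directly by the construction.
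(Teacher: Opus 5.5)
Your proposal is correct and follows the paper's proof. Like the paper, you greedily remove vertices whose removal strictly lowers the arc count until a nice solution $X'$ remains, then invoke Claim~\ref{claim:scarcity-reduction-nice-solutions} to obtain the independent set $\{v \mid v' \in X'\}$ in $G$ of size $|E(D)|-|E(D_{X'})| \ge k$; you also spell out the polynomial running time and the transfer of independence from $D$ to $G$, which the paper leaves implicit.
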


In the following, we write $\optGmin{D}$ and $\optGmax{D}$, respectively, for the optimum objective values for $\MinEC{}$ and $\MaxER{}$ on input $D$.
Again using~\Cref{claim:scarcity-reduction-nice-solutions}, we now relate these quantities to the independence number $\alpha(G)$ of $G$.
\begin{restatable}{claim}{relateoptsclaim}\label{claim:scarcity-apx-relate-opts}
  $\optGmax{D} = \alpha(G)$, and $\optGmin{D} = |E(D)| - \alpha(G)$.
\end{restatable}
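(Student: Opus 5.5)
We prove $\optGmax{D} = \alpha(G)$ by two inequalities. The second equality is then immediate: for every $X$, $|E(D_X)| = |E(D)| - (|E(D)| - |E(D_X)|)$, so minimizing the edge count is the same as maximizing the reduction, and $\optGmin{D} = |E(D)| - \optGmax{D}$.

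For $\optGmax{D} \ge \alpha(G)$, take a maximum independent set $J$ of $G$ and let $X = \{v' : v \in J\}$. Every arc of $D$ between replica vertices comes from an edge of $G$. Auxiliary vertices are never internal. Hence $X$ is an independent subset of the internal vertices of $D$. By Claim~\ref{claim:scarcity-reduction-nice-solutions}, $X$ is a nice solution and satisfies $|E(D)| - |E(D_X)| = |X| = \alpha(G)$. Therefore $\optGmax{D} \ge \alpha(G)$.

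For $\optGmax{D} \le \alpha(G)$, let $X$ be an optimal solution for \MaxER{} with value $k = \optGmax{D}$. Claim~\ref{claim:scarcity-apx-produce-is} gives an independent set of $G$ of size at least $k$, so $\alpha(G) \ge k$. If one prefers not to invoke that claim, the argument runs as follows. The internal vertices of $D$ are exactly the replica vertices. Repeatedly remove from $X$ any $v'$ with $|E(D_{X\setminus\{v'\}})| < |E(D_X)|$. Each such removal strictly decreases $|E(D_X)|$, so the reduction only grows, and the process terminates in a nice solution $X' \subseteq X$ with $|E(D)| - |E(D_{X'})| \ge k$. Claim~\ref{claim:scarcity-reduction-nice-solutions} says $X'$ is independent in $D$ with $|X'| = |E(D)| - |E(D_{X'})| \ge k$. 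The map $v' \mapsto v$ sends independent sets of replica vertices in $D$ to independent sets in $G$, because adjacency is preserved both ways. So $\alpha(G) \ge k$.

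The only point needing care is this correspondence between independence in $D$ (among replica vertices) and independence in $G$. It is immediate from the construction: there is exactly one arc $(u',v')$ or $(v',u')$ for each edge $uv \in E$, and no other replica–replica arcs. All the substantive work is contained in Claim~\ref{claim:scarcity-reduction-nice-solutions}, which we assume.
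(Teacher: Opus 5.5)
Your proof is correct and follows the paper's argument: you reduce to $\optGmax{D} = \alpha(G)$, obtain $\alpha(G) \ge \optGmax{D}$ from Claim~\ref{claim:scarcity-apx-produce-is}, and obtain the reverse inequality by lifting a maximum independent set of $G$ to its replica vertices and applying Claim~\ref{claim:scarcity-reduction-nice-solutions}. Your optional inline argument for the upper bound simply reproduces the proof of Claim~\ref{claim:scarcity-apx-produce-is}.
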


Combining Claims~\ref{claim:scarcity-reduction-nice-solutions}, \ref{claim:scarcity-apx-produce-is}, and~\ref{claim:scarcity-apx-relate-opts} yields the main results.
\begin{restatable}{theorem}{maxedgereductionapxhard}\label{thm:scarcity-maximization-apx-hard}
  \MaxER{} is \APX-hard, even when restricted to DAGs with maximum in-degree~$6$, maximum out-degree~$9$, and maximum total-degree $13$.
\end{restatable}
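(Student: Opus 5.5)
We prove \Cref{thm:scarcity-maximization-apx-hard} with an L-reduction (or more precisely an approximation-preserving reduction with constant parameters) from \MaxIS{}. By \Cref{thm:maxIS-apx-hard-large-girth} with $g = 5$, \MaxIS{} is \APX-hard on graphs $G$ with maximum degree $3$, minimum degree $2$, girth at least $5$, and no two adjacent degree-$3$ vertices. These are exactly the hypotheses of the construction above. So, given such a $G$, we build the DAG $D$ in polynomial time: fix $\pi$ by listing degree-$3$ vertices first, classify vertices into types 1, 2a, 2b and 3, and add the gadgets and the replica arcs. We already observed that $D$ has maximum in-degree $6$, out-degree $9$ and total degree $13$, so every instance produced lies in the restricted class of the theorem.

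\emph{Preserving the optimum.} By \Cref{claim:scarcity-apx-relate-opts}, $\optGmax{D} = \alpha(G)$. Hence the first L-reduction condition, $\optGmax{D} \le \alpha\cdot \alpha(G)$, holds with $\alpha = 1$.

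\emph{Mapping solutions back.} Given any feasible solution $X$ for \MaxER{} on $D$ with value $k = |E(D)| - |E(D_X)|$, we apply \Cref{claim:scarcity-apx-produce-is} to obtain in polynomial time an independent set $J$ in $G$ with $|J| \ge k$. This step prunes $X$ to a nice solution, which is an independent set of replica vertices by \Cref{claim:scarcity-reduction-nice-solutions}, and then maps each replica $v'$ back to $v$. Therefore
\[
\alpha(G) - |J| \le \optGmax{D} - k,
\]
which is the second L-reduction condition with $\beta = 1$.

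With $\alpha = \beta = 1$, the reduction transfers any $(1-\varepsilon)$-approximation for \MaxER{} on the restricted class to a $(1-\varepsilon)$-approximation for \MaxIS{} on the restricted graphs. \APX-hardness then follows from \Cref{thm:maxIS-apx-hard-large-girth}.

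The only delicate point is to check that the map from $J$'s vertices back to $G$ is well defined. Nice solutions consist solely of replica vertices, which correspond bijectively to $V(G)$. Independence in $D$ among replicas is the same as independence in $G$, because replica arcs correspond exactly to the edges of $G$. All the genuine difficulty is already contained in \Cref{claim:scarcity-reduction-nice-solutions}, which we take as given here; the theorem itself is a bookkeeping combination of the three claims.
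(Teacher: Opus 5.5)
Your proposal is correct and follows essentially the same route as the paper's proof. Both reduce from the restricted \MaxIS{} of \Cref{thm:maxIS-apx-hard-large-girth}, use Claim~\ref{claim:scarcity-apx-relate-opts} for $\optGmax{D} = \alpha(G)$, and use Claim~\ref{claim:scarcity-apx-produce-is} to turn a $(1-\varepsilon)$-approximate solution for \MaxER{} into an independent set of size at least $(1-\varepsilon)\alpha(G)$; framing this as an L-reduction with both constants equal to $1$ just repackages the paper's direct contradiction argument.
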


\begin{restatable}{theorem}{minedgecountapxhard}\label{thm:scarcity-minimization-apx-hard}
  \MinEC{} is \APX-hard, even when restricted to DAGs with maximum in-degree~$6$, maximum out-degree~$9$, and maximum total-degree $13$.
\end{restatable}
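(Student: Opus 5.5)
We give an L-reduction from \MaxIS{} on the restricted graphs of \Cref{thm:maxIS-apx-hard-large-girth}, instantiated with girth $g = 5$, to \MinEC{} on the DAG $D$ produced by the construction. The degree bounds (in-degree $6$, out-degree $9$, total degree $13$) were already observed for $D$, and the construction is clearly polynomial. It then suffices to check the two L-reduction conditions.

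\textbf{Condition 1: $\optGmin{D} \le \alpha\,\alpha(G)$ for a constant $\alpha$.} By \Cref{claim:scarcity-apx-relate-opts}, $\optGmin{D} = |E(D)| - \alpha(G) \le |E(D)|$. Each gadget contributes a constant number of arcs (at most $4+6+24$ internal arcs plus its replica arcs), so $|E(D)| = O(|V(G)|)$. Since $G$ has maximum degree $3$, the greedy bound gives $\alpha(G) \ge |V(G)|/4$. Together these yield $\optGmin{D} \le c\cdot\alpha(G)$ for an explicit constant $c$, which we would compute from the arc counts of each gadget type.

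\textbf{Condition 2: the additive-error transfer.} Let $X$ be any elimination set, with $|E(D_X)| = \optGmin{D} + \varepsilon$. Set $k = |E(D)| - |E(D_X)|$. Then
\[
k = |E(D)| - \optGmin{D} - \varepsilon = \alpha(G) - \varepsilon
\]
by \Cref{claim:scarcity-apx-relate-opts}. \Cref{claim:scarcity-apx-produce-is} converts $X$, in polynomial time, into an independent set of size at least $k$. Its deficit from $\alpha(G)$ is therefore at most $\varepsilon$, so the second constant is $1$.

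Since \MaxIS{} on these graphs is \APX-hard and L-reductions preserve \APX-hardness (a PTAS for \MinEC{} would yield a PTAS for \MaxIS{}), the theorem follows.

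The one delicate point is Condition 1. Because the objective $|E(D)| - \alpha(G)$ is dominated by $|E(D)|$, the argument depends on $|E(D)|$ being linear in $\alpha(G)$. This requires bounded degree together with the linear lower bound $\alpha(G) \ge |V(G)|/4$, and both hold for the graphs in question.
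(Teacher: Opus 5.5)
Your proposal is correct and follows essentially the same argument as the paper. Both combine Claims~\ref{claim:scarcity-apx-produce-is} and~\ref{claim:scarcity-apx-relate-opts} with the linear bounds $|E(D)| \le 37n$ and $\alpha(G) \ge n/4$; the paper writes this as a direct PTAS-transfer computation with $\varepsilon' = \varepsilon/147$ rather than as an L-reduction with $\beta = 1$, and your justification of $\alpha(G)\ge n/4$ from maximum degree $3$ is slightly more accurate than the paper's appeal to $G$ being cubic.
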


Our last result is a much stronger lower bound for~\MaxER{}.

\begin{restatable}{theorem}{maxedgereductionpolyapxbound}\label{thm:scarcity-poly-apx-hardness}
  There exists a positive constant $c$ such that, unless $\NP{} \subseteq \BPP$, there is no polynomial-time $n^{c}$-approximation for \MaxER{}.
\end{restatable}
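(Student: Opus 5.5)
We reduce from \MaxIS{} in general graphs. Under $\NP \not\subseteq \BPP$ (in fact under $\Poly \neq \NP$), \MaxIS{} admits no $n^{1-\varepsilon}$-approximation. The reduction keeps the overall shape of the \APX-hardness construction: each vertex $v$ of $G$ becomes a replica vertex $v'$ with auxiliary source and sink sets $I_v, O_v$, and each edge becomes an arc between replicas oriented according to a total order $\pi$.

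The gadgets no longer need bounded degree, so we can tune them per vertex. The goal is the analogue of Claim~\ref{claim:scarcity-reduction-nice-solutions}: nice solutions are exactly independent sets, and each eliminated replica in an independent set reduces the edge count by exactly $1$. This determines the gadget parameters. Write $d^-(v)$ and $d^+(v)$ for the in- and out-degree of $v'$ among replicas. Eliminating an isolated-from-$X$ replica $v'$ removes $|I_v|+|O_v|+d^-(v)+d^+(v)$ arcs and creates at most $(|I_v|+d^-(v))(|O_v|+d^+(v))$ arcs. The arcs from $I_v$ to $O_v$ are pre-inserted, so each elimination adds exactly $|I_v||O_v| - |E(I_v,O_v)|$ new arcs among auxiliaries. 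The remaining cross arcs run from in-neighbors of $v'$ to out-neighbors of $v'$. Choosing $|E(I_v,O_v)|$ appropriately, given $|I_v|,|O_v|$ polynomial in $\deg(v)$, makes the net reduction exactly $1$. We also need eliminating two adjacent replicas to be strictly worse, so that nice solutions must be independent.

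The danger is that arcs created between neighbouring replicas (a path $u'\to v'\to w'$) interact with other gadgets. The \APX{} proof handled this with girth $\ge 5$, which we cannot assume for general graphs. This is where randomness enters, explaining the $\BPP$ hypothesis. Apply a randomized girth-increasing or sparsifying step, for example random subdivision or randomized graph products and amplification. This yields instances where $\alpha$ is preserved up to controlled factors with high probability, while the inapproximability gap remains polynomial. An alternative is to start from the known $n^{c}$-hardness of \MaxIS{} on appropriately structured graphs, obtained via randomized reductions. We then subdivide edges, which forces girth $\ge 5$ but shifts $\alpha$ additively by a known amount $|E|$-dependent term. That additive shift is what would destroy a polynomial gap, so we must instead use gadget-level weighting, for example many parallel copies of the gadget or duplicated replicas, so that the additive term is dominated.

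With the analogue of Claims~\ref{claim:scarcity-apx-produce-is} and~\ref{claim:scarcity-apx-relate-opts} ($\optGmax{D}=\alpha(G)$ up to the controlled transformation), any $n^{c}$-approximation for \MaxER{} yields an $N^{c'}$-approximation for \MaxIS{} on the source family. Here $|V(D)|=\mathrm{poly}(N)$, which contradicts the hardness assumption. The main obstacle is the middle step: guaranteeing that nice solutions are exactly independent sets, with unit gain each, without bounded degree or large girth, while keeping the reduction gap-preserving at polynomial scale. I would first try replacing girth with randomized sparsification that kills short cycles among high-value vertices with high probability.
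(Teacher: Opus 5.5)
Your overall architecture matches the paper: replicas with auxiliary sets $I_v,O_v$, nice solutions corresponding to independent sets with unit gain, then transfer of the gap. However, the step you treat as routine and the step you leave open are exactly where the proof needs ideas you do not supply.

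\textbf{The gadget cannot be balanced as you describe.} Take the girth-$5$ setting you are aiming for. Let $v'$ be a replica with replica in/out-degrees $d^-,d^+$, and set $a=|I_v|$, $b=|O_v|$. Eliminating $v'$ when it is independent of earlier eliminations destroys $a+b+d^-+d^+$ arcs. It creates exactly $d^-d^++ad^++bd^-+ab-|E(I_v,O_v)|$ arcs, because arcs from $I_v$ to out-replicas and from in-replicas to $O_v$ can only be created by $v'$ itself. Unit gain with $|E(I_v,O_v)|\le ab$ therefore forces $(d^--1)(d^+-1)+a(d^+-1)+b(d^--1)\le 0$. This fails for every $a,b\ge 1$ as soon as $d^-,d^+\ge 1$ and $d^-+d^+\ge 3$. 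That is why the \APX{} proof is restricted to types 1, 2a, 2b, 3. In any non-bipartite graph of minimum degree $3$, every order $\pi$ produces such vertices.

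The missing idea is padding that raises the number of destroyed arcs without raising the number created. The paper adds one shared set $T$ of $3n^2$ sinks, with arcs into $T$ from every replica and from every vertex of every $I_v$. Every in-neighbour of $v'$ already dominates $T$, so eliminating $v'$ destroys $|T|$ extra arcs but creates none into $T$. With $|I_v|=|O_v|=2n$ one can then take $|E(I_v,O_v)|=\gamma(v)=\deg^-_G(v)\deg^+_G(v)+(2n-1)\deg_G(v)+n^2-4n+1$, which lies in $(0,4n^2)$ for $n\ge 4$.

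The large auxiliary sets also yield independence of nice solutions, which you only assert. If a neighbour of $v'$ was eliminated first, then $v'$ has $d\ge 2n$ neighbours outside its gadget and $T$. Each of them gains $2n$ arcs, so at least $2nd$ arcs are created, against only $d+4n+3n^2$ destroyed.

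\textbf{The girth step is not resolved.} As you note yourself, subdivision shifts $\alpha$ additively and kills a polynomial gap. Your alternatives (randomized sparsification, graph products, parallel gadget copies) are not specified, and you do not show that any of them preserves $\alpha$ within the needed factors. The paper does no randomization of its own. It reduces deterministically from \MaxIS{} on graphs of girth $5$, where $n^{c}$-inapproximability unless $\NP\subseteq\BPP$ is already known~\cite{bonnet2020algorithmic}. That cited result is the sole source of the $\BPP$ hypothesis.

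Girth $5$ then makes the created-arc count exact: there is no triangle among $v'$ and its replica neighbours, and no $C_4$ through an earlier-eliminated replica. Since $|V(D)|<8n^2$, an $n^{c}$-approximation for \MaxER{} gives an $n^{2c+\varepsilon}$-approximation for \MaxIS{} on girth-$5$ graphs.
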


We show how to generalize certain aspects of the construction in~\cite{bentert2025structural} so that we may reduce from \MaxIS{} in graphs of girth at least $5$, i.e., we eliminate the restrictions on vertex degrees.
By~\cite{bonnet2020algorithmic}, a suitable approximation barrier exists for \MaxIS{}, even when restricted to graphs of girth $5$.

\section{Conclusion}
We have introduced and implemented novel ILP formulations for~\StrOJA{} and~\MinEC{}, enabling the assembly of a corpus of graphs for which optimal solutions are known.
We used this corpus to evaluate a wide range of heuristics, presenting measured approximation ratios.
On the theoretical side, we have given a tight analysis of the forward and reverse modes of AD, and have extended this analysis to provide a simple parameterized approximation guarantee. 
Also, we have established the first approximation lower bounds for vertex elimination problems arising in AD.
Several interesting questions remain open, including the design of rounding schemes for our ILPs (in restricted settings), the tightness of the bound in~\Cref{thm:middle-out}, and an analysis of the forward and reverse modes of AD on evolution graphs.
Indeed, we find it surprising that these highly structured graphs are not better understood.
Finally, while our results exclude polynomial-time approximation schemes, the (non) existence of constant-factor approximations remains open.

\section*{Acknowledgments}
We would like to thank Jakob Rødal Skaar who contributed some of the code for the experiments.

This work was supported in part by the U.S. Department of Energy, Office of Science, Advanced Scientific Computing Research, under contract number DE-AC02-06CH11357.

Blair D. Sullivan gratefully acknowledges partial financial support for this research by the Fulbright Program, which is sponsored by the U.S. Department of State and the Franco-American Commission -- Fulbright France. The contents of this work are solely the responsibility of the authors and do not necessarily represent the official views of the Fulbright Program, the Government of the United States, or the Franco-American Commission.

\bibliography{refs}

\newpage
\appendix
\section{Deferred Proofs}\label{appendix:proofs}

\subsection{Deferred Proofs from~\Cref{sec:ilp}}\label{appendix:proofs:ilps}

\pathcontraction*
\begin{proof}
    If $|X| = 1$, then the claim is true by definition of vertex elimination, so suppose $|X| > 1$. Further, we may assume that $(i, j) \not\in E$, as otherwise the claim is vacuously true.
    If there is a directed path $P$ from $i$ to $j$ for which every internal vertex is in $X$, then let $v \in X$ be $i$'s out-neighbor along $P$. By induction, the arc $(x, j)$ exists in $D_{X \setminus v}$, and thus by definition of vertex elimination the arc $(i, j)$ exists in $D_X$.
    For the other direction, suppose that every directed path (if any exist) from $i$ to $j$ contains at least one internal vertex not in $X$, and let $v$ be an arbitrary vertex in $X$. Note that either there is no directed path from $i$ to $v$ with every internal vertex in $X$, or there is no directed path from $v$ to $j$ with every internal vertex in $X$; otherwise, concatenating these paths produces one from $i$ to $j$.
    By induction, $(i, j) \not\in E(D_{X\setminus v})$, and furthermore either $i$ is not an in-neighbor of $v$ or $j$ is not an out-neighbor of $v$ in $D_{X\setminus v}$. The result follows by definition of vertex elimination.
\end{proof}

\movertwolowerbound*
\begin{proof}
    We will prove that $\OPT(D)\geq |E(D)|/2$ by induction on the number $k$ of internal nodes. For $k=0$, there are no internal nodes and so $E(D)=0$ by our stipulation that $E(S,T)=\emptyset$. Thus, $\OPT(D)\geq E(D)/2$ trivially. 
    
    Now, assume the claim holds for every graph with $k$ internal nodes. Let $D$ be a DAG with $k+1$ internal nodes, and let $\sigma=(v_1,v_2,\dots,v_{k+1})$ be an optimal vertex elimination sequence of $D$. We observe that $\sigma'=(v_2,\dots,v_{k+1})$ must be an optimal elimination sequence of $D'=\gElimVertex{D}{v_1}$. First, we re-express $\OPT(D)$ using the optimality of $\sigma'$.
    \[
        \OPT(D)= \cost_D(\sigma)=\mu_D(v_1)+\cost_{D'}(\sigma')= |\innbrs{D}{v}|\cdot|\outnbrs{D}{v}|+\OPT(D')
    \]
    Since $D'$ has $k$ internal nodes, our inductive hypothesis gives $\OPT(D)\geq E(D')/2$. Now, observe that for  $a,b\in \mathbb{N}^+$, we have $ab\geq\frac{a+b}{2}$ by $a+b\leq 2a+2b-2\leq 2a+2b-2 + 2(a-1)(b-1)=2ab$. Thus,
    \begin{align*}        
        \OPT(D)&\geq \frac{|\innbrs{D}{v}|+|\outnbrs{D}{v}|}{2}+E(D')/2\\
        &\geq \frac{1}{2}\cdot \left(|\innbrs{D}{v}|+|\outnbrs{D}{v}| + |E(D)\cap E(D')|\right)=E(D)/2
    \end{align*}
    The final equality holds by the fact that $E(D)$ can be partitioned into in-edges to $v$, out-edges from $v$, and edges preserved in $D'$. This concludes the proof.
 \end{proof}

To show integrality gaps for our linear programs, we provide explicit constructions. First, we present a useful lemma.
In a DAG $D$, two vertices $u$ and $v$ are \emph{false twins} if their in-neighborhoods are identical and their out-neighborhoods are identical.
We extend this term to sets by saying that a set $X$ of vertices are \emph{false twins} if every pair of vertices in $X$ are false twins.

\begin{lemma}[\cite{bentert2025structural}]\label{lemma:false-twins}
  Let $D = (S \uplus I \uplus T, E)$ be a DAG and let $X \subseteq I$ be a set of false twins. Then there exists an optimal elimination sequence (for \StrOJA{}) in which the vertices of $X$ are eliminated consecutively.
\end{lemma}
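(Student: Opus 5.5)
The plan is an exchange argument on an optimal sequence. Fix an optimal total elimination sequence $\sigma$ for $D$. Let $x_1 \in X$ be the first vertex of $X$ to be eliminated under $\sigma$, at some step $t$. Write $D(t)$ for the DAG just before that step. The first thing to check is that the vertices of $X$ remain false twins in every intermediate DAG, as long as none of them has been eliminated. This follows from the definition of elimination: eliminating a vertex $w \notin X$ adds the arc $(a,b)$ for every in-neighbour $a$ and out-neighbour $b$ of $w$. If some $x \in X$ is an out-neighbour of $w$, then so is every other $x' \in X$, since their in-neighbourhoods agree. Hence every $x \in X$ gains the same new in-neighbours, and symmetrically the same new out-neighbours. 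Deleting $w$ removes it from all twin neighbourhoods simultaneously. Note also that twins are never adjacent to one another in a DAG, because $u \in N^-(v) = N^-(u)$ is impossible.

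Next, I would show that once $x_1$ is eliminated, the remaining vertices of $X \setminus \{x_1\}$ are unaffected by it. Since no arcs join twins, eliminating $x_1$ only adds arcs between the common in-neighbours $A$ and common out-neighbours $B$. Those arcs do not touch any other $x \in X$. So the remaining twins keep neighbourhoods $A$ and $B$, and each has Markowitz degree $|A|\cdot|B|$.

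The key step, and the main obstacle, is to move all of $X$ to be eliminated consecutively right after $x_1$ without increasing the cost. My first attempt would be the following. Let $\sigma'$ eliminate $X$ as a block at position $t$ and then continue with the other vertices in their $\sigma$-order. I would then compare the two sequences using an averaging or convexity argument.

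The argument goes like this. Consider each twin $x_j$ and the sequence $\sigma_j$ in which the whole block is placed at the position where $x_j$ appears in $\sigma$. After $x_1$ is eliminated, the remaining twins are indistinguishable copies with identical behaviour. I would show that the cost of $\sigma$ is at least the minimum over $j$ of $\cost(\sigma_j)$. To do this, view the cost contribution of the twins as a function of their positions. Eliminating a twin when the others are still present behaves the same regardless of the other twins, except through the fill-in arcs $A \times B$. Those arcs are identical for all twins, so they are created once, by the first twin.

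Concretely, I would argue that the other twins can be interpreted as \emph{fill-only-once} objects. After the first twin, later twins create no new arcs, because $A\times B$ already exists at the time. The only effect they have on other vertices is to enlarge their degrees while they remain present. Therefore removing twins earlier only shrinks the neighbourhoods of other vertices, provided the neighbourhoods $A$ and $B$ of the twins do not grow in the meantime. The latter is exactly where care is needed. Intermediate eliminations may enlarge $A$ or $B$, which makes postponed twins more expensive. That only helps, since it means early elimination is weakly cheaper for each twin, and it also removes those twins as neighbours of later-eliminated vertices. I would make this precise by induction: swap the next twin $x_2$ backwards one position at a time, checking each adjacent transposition in a local case analysis on whether the swapped vertex $w$ is adjacent to $x_2$.
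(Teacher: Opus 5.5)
The paper does not prove this lemma itself; it imports it from \cite{bentert2025structural}, so I assess your argument on its own terms. Your preliminary observations are correct: non-eliminated twins stay twins, no directed path joins two twins, and once $x_1$ is eliminated the remaining twins are fill-free, a property that persists.

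\textbf{The gap is the key step.} The claim that early elimination is weakly cheaper for each twin is false. Intermediate eliminations can also \emph{shrink} the twins' neighbourhoods: if $w\in A$ is eliminated and $N^-(w)\subseteq A$, the twins simply lose the in-neighbour $w$. Postponing a twin past $w$ then saves $|B|$ on the twin's own cost, whereas eliminating it before $w$ saves only $|N^-(w)|$ on the cost of $w$.

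Concretely, take a source $s$, an internal vertex $a$ with the arc $(s,a)$, and twins $x_1,x_2$ whose in-neighbours are $\{s,a\}$ and whose out-neighbours are two sinks $t_1,t_2$. Then $(x_1,a,x_2)$ costs $4+3+2=9$, while $(x_1,x_2,a)$ costs $4+4+2=10$. So the very adjacent transposition your induction relies on increases the cost. Your justification never uses the optimality of $\sigma$, so it would prove $\cost(\sigma_1)\le\cost(\sigma)$ for \emph{every} $\sigma$, which this example refutes.

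\textbf{What is missing is a global additive accounting.} Fix the order of $I\setminus X$, and let $\tau_\ell$ be the number of non-twins eliminated before $x_\ell$. Using the path characterization of $D_Y$ (an arc $(i,j)$ exists in $D_Y$ iff some path from $i$ to $j$ has all internal vertices in $Y$), one checks three things.
\begin{enumerate}
\item[(i)] A twin eliminated at slot $\tau$ has common neighbourhoods $A_\tau,B_\tau$ and costs $g(\tau)=|A_\tau|\cdot|B_\tau|$, independently of the other twins.
\item[(ii)] A non-twin $u_r$ that is an in-neighbour (resp.\ out-neighbour) of the twins when eliminated pays an extra $\delta_r=|N^-(u_r)|$ (resp.\ $|N^+(u_r)|$) for every twin still present. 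This $\delta_r$ does not depend on where the twins are placed.
\item[(iii)] The remaining cost depends only on $\min_\ell\tau_\ell$ and is non-increasing in it, since an eliminated twin only adds paths.
\end{enumerate}
Hence $\cost(\sigma)=\mathcal{B}(\min_\ell\tau_\ell)+\sum_\ell H(\tau_\ell)$, where $H(\tau)=g(\tau)+\sum_{r\le\tau}\delta_r$ and $\mathcal{B}$ is non-increasing. Placing all twins at the slot $\tau^*$ of a twin minimizing $H$ gives $\mathcal{B}(\tau^*)+|X|\,H(\tau^*)\le\cost(\sigma)$.

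This is your ``first attempt'' $\min_j\cost(\sigma_j)\le\cost(\sigma)$, made rigorous. Note that it may require moving $x_1$ \emph{forward}, and this is exactly where the monotonicity of $\mathcal{B}$ is needed. In the example above, the block lands after $a$, giving cost $6$. For an optimal $\sigma$ one can deduce afterwards that $\tau_1$ minimizes $H$, so $\sigma_1$ is optimal as well. But that follows only from this global comparison, not from a chain of cost-non-increasing transpositions.
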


\flopsintegralitygap*
\begin{figure}
  \centering
  \resizebox{\columnwidth}{!}{%
  \begin{tikzpicture}
    \draw[rounded corners=5pt] (0,0) rectangle (1,4);
    \draw[rounded corners=5pt, shift={(3,.5)}] (0,0) rectangle (1, 3);
    \draw[rounded corners=5pt, shift={(6,.5)}] (0,0) rectangle (1, 3);
    \draw[rounded corners=5pt, shift={(9,1)}] (0,0) rectangle (1, 2);
    \draw[rounded corners=5pt, shift={(-3,.5)}] (0,0) rectangle (1, 3);
    \draw[rounded corners=5pt, shift={(-6,.5)}] (0,0) rectangle (1, 3);
    \draw[rounded corners=5pt, shift={(-9,1)}] (0,0) rectangle (1, 2);
    \node[shift={(.5, 2)}] (m) at (0,0) {\LARGE$M$};
    \node[shift={(3.5, 2)}] (u) at (0,0) {\LARGE$U$};
    \node[shift={(6.5, 2)}] (w) at (0,0) {\LARGE$W$};
    \node[shift={(9.5, 2)}] (t) at (0,0) {\LARGE$T$};
    \node[shift={(-2.5, 2)}] (u) at (0,0) {\LARGE$B$};
    \node[shift={(-5.5, 2)}] (w) at (0,0) {\LARGE$A$};
    \node[shift={(-8.5, 2)}] (t) at (0,0) {\LARGE$S$};

    \node[shift={(1, 3.75)}] (mtr) at (0, 0) {};
    \node[shift={(1, .25)}] (mbr) at (0, 0) {};
    \node[shift={(3, 3.25)}] (utl) at (0, 0) {};
    \node[shift={(3, .75)}] (ubl) at (0, 0) {};

    \draw[-Latex] (mtr)--(utl);
    \draw[-Latex] (mtr)--(ubl);
    \draw[-Latex] (mbr)--(utl);
    \draw[-Latex] (mbr)--(ubl);

    \node[shift={(4, 3.25)}] (utr) at (0, 0) {};
    \node[shift={(4, .75)}] (ubr) at (0, 0) {};
    \node[shift={(6, 3.25)}] (wtl) at (0, 0) {};
    \node[shift={(6, .75)}] (wbl) at (0, 0) {};

    \draw[-Latex] (utr)--(wtl);
    \draw[-Latex] (utr)--(wbl);
    \draw[-Latex] (ubr)--(wtl);
    \draw[-Latex] (ubr)--(wbl);

    \node[shift={(7, 3.25)}] (wtr) at (0, 0) {};
    \node[shift={(7, .75)}] (wbr) at (0, 0) {};
    \node[shift={(9, 2.75)}] (ttl) at (0, 0) {};
    \node[shift={(9, 1.25)}] (tbl) at (0, 0) {};

    \draw[-Latex] (wtr)--(ttl);
    \draw[-Latex] (wtr)--(tbl);
    \draw[-Latex] (wbr)--(ttl);
    \draw[-Latex] (wbr)--(tbl);
    \node[shift={(0, 3.75)}] (mtl) at (0, 0) {};
    \node[shift={(0, .25)}] (mbl) at (0, 0) {};
    \node[shift={(-2, 3.25)}] (btr) at (0, 0) {};
    \node[shift={(-2, .75)}] (bbr) at (0, 0) {};

    \draw[-Latex] (btr)--(mtl);
    \draw[-Latex] (btr)--(mbl);
    \draw[-Latex] (bbr)--(mtl);
    \draw[-Latex] (bbr)--(mbl);

    \node[shift={(-3, 3.25)}] (btl) at (0, 0) {};
    \node[shift={(-3, .75)}] (bbl) at (0, 0) {};
    \node[shift={(-5, 3.25)}] (atr) at (0, 0) {};
    \node[shift={(-5, .75)}] (abr) at (0, 0) {};

    \draw[-Latex] (atr)--(btl);
    \draw[-Latex] (atr)--(bbl);
    \draw[-Latex] (abr)--(btl);
    \draw[-Latex] (abr)--(bbl);

    \node[shift={(-6, 3.25)}] (atl) at (0, 0) {};
    \node[shift={(-6, .75)}] (abl) at (0, 0) {};
    \node[shift={(-8, 2.75)}] (str) at (0, 0) {};
    \node[shift={(-8, 1.25)}] (sbr) at (0, 0) {};

    \draw[-Latex] (str)--(atl);
    \draw[-Latex] (str)--(abl);
    \draw[-Latex] (sbr)--(atl);
    \draw[-Latex] (sbr)--(abl);
  \end{tikzpicture}%
  }
  \caption{\label{fig:flops-integrality-gap}The construction used in the proof of~\Cref{prop:flops-integrality-gap}. The edges between horizontally adjacent sets indicate complete bipartite connections. The sets $S$ and $T$ have cardinality $n^{1/4}$. The sets $A, B, U$, and $W$ have cardinality $n^{1/2}$. The set $M$, containing all remaining vertices, has cardinality $\Omega(n)$.}
\end{figure}
\begin{proof}
    Consider a DAG $D$ with $n$ vertices partitioned into~$7$ sets $S, A, B, M, U, W, T$ with cardinalities $|S| = |T| = n^{1/4}$, $|A| = |B| = |U| = |W| = n^{1/2}$, and $|M| = n - 2n^{1/4} - 4n^{1/2}=\Omega(n)$. We form complete bipartite graphs between each consecutive set, adding arcs from every vertex in $S$ to every vertex in $A$, from every vertex in $A$ to every vertex in $B$, and so on. Notice that $D$ has sources $S$, sinks $T$, and interior vertices $I=A\cup B\cup M\cup U\cup W$. See~\Cref{fig:flops-integrality-gap} for a visual depiction.

    We argue first that every total elimination sequence for $D$ has cost $\Omega(n^{7/4})$.
    By~\Cref{lemma:false-twins}, we may assume that the vertices of $M$ are eliminated consecutively.
    We now split into cases. For the first case, assume that the vertices of $M$ are eliminated before at least half of the vertices of $B$ and before at least half of the vertices of $U$.
    Then the elimination of the vertices in $M$ has cost at least
    \[
      |M|\cdot\left(\frac{1}{2}|B|\cdot\frac{1}{2}|U|\right) = |M| \cdot \frac{1}{4}\cdot n = \Omega(n^2).
    \]
    In the second case, there is a subset $B' \subseteq B$ of $B$, with $|B'| \geq \frac{1}{2}|B|$, such that every vertex in $B'$ is eliminated before any vertex in $M$. The case in which half of $U$ is eliminated before $M$ is symmetric.
    Notice that every vertex in $B'$ has in-degree at least $|S| = n^{1/4}$ at the time of its elimination. Then the elimination of the vertices of $B'$ has cost at least
    \[
      |B'|\cdot(n^{1/4}\cdot|M|) \geq \frac{1}{2}\cdot n^{3/4}\cdot |M| = \Omega(n^{7/4}).
    \]

    To complete the proof, we show that there is a fractional solution with cost $O(n^{3/2})$, and hence, that this instance has an integrality gap of $\Omega(n^{1/4})$.

    For each $i \in I$ and $j\in S\cup T$, set $x_{ij} = 1$ and $x_{ji} = 0$, and for each $i, j \in S \cup T$, set $x_{ij} = x_{ji} = 0$. 
    Then, for each $i, j \in I$, set $x_{ij} = x_{ji} = \frac{1}{2}$. Observe that constraints~(\ref{eq:ilp:order:partial-order}) and~(\ref{eq:ilp:order:total-order}) are satisfied.

    Next, we assign the $e$ variables. For each $(i, j) \in E$, set $e_{ij} = 1$, as required by constraint~(\ref{eq:ilp:order:initial-edges}).
    For each $s \in S$ and $b \in B$, set $e_{sb} = \frac{1}{2}$, satisfying constraint~(\ref{eq:ilp:order:existence}) because $x_{as}+x_{ab} = \frac{1}{2}$ for every $a \in A$. Similarly, set $e_{ut} = \frac{1}{2}$ for every $u \in U$ and $t \in T$.
    Next, for each $s \in S$ and $t \in T$, set $e_{st} = 1$. 
    Finally, set all remaining $e_{ij}$ values to $0$. This is safe with respect to constraint $(\ref{eq:ilp:order:existence})$ because of our fractional assignments of the $x_{ij}$ variables.
    Notice that constraint~(\ref{eq:ilp:order:reachability-ij}) is also satisfied.

    It remains to assign the $y$ variables. To satisfy constraint~(\ref{eq:ilp:order:payment}), we set $z_{sba} = 1$ for each $s \in S, b \in B$, and $a \in A$. Similarly, we set $z_{utw} = 1$ for each $u \in U, t \in T$, and $w \in W$. The sum of these $z$ variables is exactly $2\cdot n^{5/4}$.

    Constraint~(\ref{eq:ilp:order:payment}) is now satisfied for any values of the remaining $z$ variables, due to our fractional values on the $x$ and $e$ variables.
    However, we must also consider constraints~(\ref{eq:ilp:order:lb-separators}) and~(\ref{eq:ilp:order:lb-edges}).
    Notice that for every internal vertex $k$, $|S - k| = |S| = n^{1/4}$ and $|k - T| = |T| = n^{1/4}$. Thus, constraint~(\ref{eq:ilp:order:lb-separators}) is already satisfied for every $k \in A \cup D$. 
    Now, for every $k \in B \cup M \cup U$, set $z_{stk} = 1$ for every $s \in S$ and for every $t \in T$, and set $z_{ijk} = 0$ whenever $i \notin S$ or $j \notin T$.

    For every $k \in B \cup M \cup U$, we have $\sum_{i, j} z_{ijk} = n^{1/2}$. Thus, constraint~(\ref{eq:ilp:order:lb-separators}) is satisfied. Furthermore, every internal vertex $k$, $\sum_{i, j} z_{ijk} \geq \frac{1}{2}\deg(k)$, so constraint~(\ref{eq:ilp:order:lb-edges}) is also satisfied.

    Finally, the total sum of the $z$ variables is $2\cdot n^{5/4} + n^{1/2}\cdot|B \cup M \cup U|$, so our fractional solution has cost $O(n^{3/2})$, as desired.
\end{proof}

\scarcityintegralitygap*

\begin{figure}[t]
		\centering

\tikzmath{\xs = 0; \xa =2.25; \xb = 5.25; \xt = 7.5;
\y1 = 2; \y2 =1.3; \ydots = 1.05; \y3 = 0.7; \yn = 0;}
\begin{tikzpicture}[
    faint/.style={draw = gray, fill = gray, draw opacity=0.5, fill opacity = 0.5},
     v/.style={minimum size = 0.75cm}
    ]

\node[v] (s1) at (\xs, \y1) {$s_1$};
\node[v] (s2) at (\xs,\y2) {};
\node[v] (sdots) at (\xs,\ydots) {$\vdots$};
\node[v] (s3) at (\xs,\y3) {};
\node[v] (sn) at (\xs,\yn) {};
\node[v] (snlabel) at (\xs,\yn) {$s_{n/4}$};

\node[v] (a1) at (\xa, \y1) {$a_1$};
\node[v] (a2) at (\xa,\y2) {};
\node[v] (adots) at (\xa,\ydots) {$\vdots$};
\node[v] (a3) at (\xa,\y3) {};
\node[v] (an) at (\xa,\yn) {};
\node[v] (anlabel) at (\xa,\yn) {$a_{n/4}$};

\node[v] (b1) at (\xb, \y1) {$b_1$};
\node[v] (b2) at (\xb,\y2) {};
\node[v] (bdots) at (\xb,\ydots) {$\vdots$};
\node[v] (b3) at (\xb,\y3) {};
\node[v] (bn) at (\xb,\yn) {};
\node[v] (bnlabel) at (\xb,\yn) {$b_{n/4}$};

\node[v] (t1) at (\xt, \y1) {$t_1$};
\node[v] (t2) at (\xt,\y2) {};
\node[v] (tdots) at (\xt,\ydots) {$\vdots$};
\node[v] (t3) at (\xt,\y3) {};
\node[v] (tn) at (\xt,\yn) {};
\node[v] (tnlabel) at (\xt,\yn) {$t_{n/4}$};

\draw[-Latex] (s1)--(a1);
\draw[-Latex] (sn)--(an);
\draw[-Latex] (b1)--(t1);
\draw[-Latex] (bn)--(tn);

\draw[-Latex, faint] (s2)--(a2);
\draw[-Latex, faint] (s3)--(a3);
\draw[-Latex, faint] (b2)--(t2);
\draw[-Latex, faint] (b3)--(t3);

\draw[-Latex] (a1)--(b1);
\draw[-Latex] (a1)--(bdots);
\draw[-Latex] (a1)--(bn);

\draw[-Latex, faint] (an)--(b1);
\draw[-Latex, faint] (an)--(bdots);
\draw[-Latex, faint] (an)--(bn);

\draw[-Latex, faint] (adots)--(b1);
\draw[-Latex, faint] (adots)--(bdots);
\draw[-Latex, faint] (adots)--(bn);

\end{tikzpicture}
	\caption{\label{fig:scarcity-integrality-gap}The construction used in the proof of~\Cref{prop:scarcity-integrality-gap}. The sets $S,A,B$, and $T$ each have cardinality $n/4$. There are edges between $s_i$ and $a_i$ as well as $b_i$ and $t_i$ for each $1\leq i \leq n/4$. There is a complete bipartite graph between $A$ and $B$.}
\end{figure}
\begin{proof}
    Consider a DAG $D$ with $n$ vertices partitioned into sets $S, A, B$, and $T$, each of cardinality $n/4$. Impose an arbitrary order on the vertices of $S = \{s_1, s_2 \ldots, s_{n/4}\}$. Impose orders on $A, B,$ and $T$ similarly.
    We add the arcs $(s_i, a_i)$ and $(b_i, t_i)$ for each $1 \leq i \leq n/4$, and the arcs $(a_i, b_j)$ for every $i, j$.
    This completes the construction.

    Notice that each $s_i$ has only one out-neighbor $a_i$. Thus, the elimination of $a_i$ (regardless of previous eliminations) creates new arcs from $s_i$ to each out-neighbor of $a_i$ at the time of its elimination. Then this elimination reduces the total number of edges by exactly~$1$. Similarly, the elimination of any $b_i$ reduces the total number of edges by exactly~$1$. It follows that the minimum edge count is achieved by eliminating every internal vertex, yielding $|S|\cdot|T| = n^2/16$ edges.

    Now, we show that there is a fractional solution with objective value $O(n)$, and hence, that this instance has an integrality gap of $\Omega(n)$.

    For each vertex $k \in A \cup B$, we set $x_k = \frac{1}{2}$. Next, for each arc $(u, v)\in E$, we set $z_{uv} = 1$, satisfying constraint~(\ref{eq:ilp:scarcity:initial-z-variables}).
    Then, for each $a \in A$ and $t \in T$, we set $z_{at} = \frac{1}{2}$. This is safe with respect to constraint~(\ref{eq:ilp:scarcity:z-variable-correctness}) because $x_b = \frac{1}{2}$ for every $b \in B$.
    Similarly, for each $s \in S$ and $b \in B$, we set $z_{sb} = \frac{1}{2}$.
    Finally, for each $s \in S$ and $t \in T$, we set $z_{st} = 0$, which is safe with respect to (\ref{eq:ilp:scarcity:z-variable-correctness}) because for each $a \in A$, $x_a = \frac{1}{2}$ and $z_{at} = \frac{1}{2}$.

    It remains to assign the $y$ variables. For each $s \in S$ and $a \in A$, we set $y_{sa} = \frac{1}{2}$ if $(s, a) \in E$ and $y_{sa} = 0$ otherwise. Similarly, for each $b \in B$ and $t \in T$, we set $y_{bt} = \frac{1}{2}$ if $(b,t) \in E$ and $y_{bt} = 0$ otherwise. Both assignments are safe with respect to constraint~(\ref{eq:ilp:scarcity:final-edges}) because $x_k = \frac{1}{2}$ for every internal vertex $k$. 
    The sum of these $y$ variables is exactly $n/4$.

    Next, for each $s \in S$ and $b \in B$, we set $y_{sb} = 0$. This satisfies constraint~(\ref{eq:ilp:scarcity:final-edges}) because $z_{sb} = \frac{1}{2}$ and $x_k = \frac{1}{2}$ for every internal vertex $k$.
    By a symmetric argument, we may set $y_{at} = 0$ for every $a \in A$ and $t \in T$.
    Then, for each $s \in S$ and $t \in T$, we set $y_{st} = 0$. Again, this satisfies constraint~(\ref{eq:ilp:scarcity:final-edges}) because $z_{st} = 0$.
    Finally, for each $a \in A$ and $b \in B$, we set $y_{ab} = 0$. This is safe with respect to constraint~(\ref{eq:ilp:scarcity:final-edges}) because $x_a = x_b = \frac{1}{2}$.

    This completes the variable assignments, and the sum of the $y$ variables is exactly $n/4$. This completes the proof.
\end{proof}

\subsection{Deferred Proofs from~\Cref{sec:separator-analysis}}\label{appendix:proofs:separator-analysis}
\begin{figure}[t]
		\centering
\begin{tikzpicture}
\node[] (s0) at (0,0) {$s_1$};
\node[] (v) at (4,0) {$v_k$};
\node[] (s1) at (6,2) {$t_1$};
\node[] (st) at (6,0) {$t_{\ell}$};

\node[] (u1) at (0.2,3) {$v_1$};
\node[] (u2) at (1.4,2.3) {$v_2$};
\node[] (uk) at (3.8,0.9) {$v_{k-1}$};

\draw[-Latex] (s0)--(v);
\draw[-Latex] (v)--(s1);
\draw[-Latex] (v)--(st); 

\draw[-Latex] (s0)--(u1);
\draw[-Latex] (u1)--(u2);
\draw[-Latex] (s0)--(u2);
\draw[-Latex] (s0)--(uk);
\draw[-Latex] (uk)--(v);
\draw[-Latex] (u1)--(s1);

\node[] () at (6,1) {$\vdots$};
\node[] () at (2.8,1.8) {$\cdots$};
\end{tikzpicture}
	\caption{\label{fig:2t-bound graph}The construction used to show that the bound of~\Cref{thm:forward-reverse-approximations} is tight.}
\end{figure}

\costoftopo*
Indeed, it is straightforward to express the exact cost of the forward mode of AD. For an internal vertex $v \in \internals$, let $\sources_v \subseteq \sources$ be defined as the set of sources from which $v$ is reachable. Then the cost of eliminating vertices in forward topological order is the sum accross $v \in I$ of $|S_v|\cdot|\outnbrs{D}{v}|$. The cost of reverse mode can be expressed similarly. The proof of~\Cref{lem:topo} relates this quantity to the number of edges with at least one internal endpoint. 
\begin{proof}
    Consider elimination order $\pi$ of $\fo(D)$. When an internal vertex $v\in I$ is eliminated, it is by definition the first internal vertex in $\gElimSeq{D}{\pi(v)}$. Thus, $\innbrs{\gElimSeq{D}{\pi(v)}}{v}\subseteq S$, i.e., all of $v$'s in-neighbors are sources. Hence:
    \[
    \cost_D(\pi)=\sum_{v\in \pi(v)}\mu_{\gElimSeq{D}{\pi(v)}}(v)= \sum_{v\in \pi(v)}|\innbrs{\gElimSeq{D}{\pi(v)}}{v}| \cdot |\outnbrs{\gElimSeq{D}{\pi(v)}}{v}| \leq  |S|\cdot \sum_{v\in \pi(v)} |\outnbrs{\gElimSeq{D}{\pi(v)}}{v}|
    \]
    As $\innbrs{\gElimSeq{D}{\pi(v)}}{v}\subseteq S$, eliminating $v$ will only create source-internal edges, and so the out-neighborhood of internal nodes are unaffected by the elimination of $v$. It follows that $\outnbrs{\gElimSeq{D}{\pi(v)}}{v}=\outnbrs{D}{v}$, giving:
    \[
    \cost_D(\pi)\leq |S|\cdot \sum_{v\in \pi(v)} |\outnbrs{D}{v}|=|S|\cdot |E(I\cup T)|\leq |S|\cdot |E(D)|\ .\qedhere
    \]

    Thus, eliminating $D$ in topological order incurs cost at most $|S|\cdot |E|$. By symmetry, a bound of $|T|\cdot|E|$ holds for reverse topological order.
\end{proof}

\noindent{\textbf{Tightness of}}~\Cref{thm:forward-reverse-approximations}. The proof of~\Cref{thm:forward-reverse-approximations} is a straightforward combination of~\Cref{lem:LB:m/2,lem:topo}. We pause here to remark that the bound of the Theorem is asymptotically tight.
Consider a DAG (see~\Cref{fig:2t-bound graph}) with one source $s_1$, $k$ internal vertices $v_1, v_2, \ldots, v_k$, and $\ell = |T|$ sinks $t_1, t_2, \ldots, t_\ell$. We add edges from $s_1$ to every internal vertex. Next, for each $i \in [k-1]$, we add the edge $(v_i, v_{i+1})$. Finally, we add the edges $(v_1, t_1)$ and $(v_k, t_i)$ for each $i \in [\ell]$. Note that the edge $(v_1, t_1)$ is included only to ensure that~\Cref{prop:subdivision-preprocessing} cannot be applied.
Eliminating the internal vertices in forward topological order $(v_1, v_2, \ldots, v_k)$ incurs cost $2 + (k - 2) + \ell$.
Eliminating the vertices in reverse topological order $(v_k, v_{k-1}, \ldots, v_1)$ incurs cost $2\cdot\ell\cdot(k - 1) + \ell =  2\ell(k - 1/2)$.
Then the approximation ratio achieved by reverse mode is $2\ell(k-1/2)/(k + \ell)$, which, for every fixed $\ell$, converges to $2\ell = 2\cdot|T|$ as $k$ approaches $\infty$.
A symmetric construction shows that the bound for forward mode is tight.

\lowerboundsources*
\begin{proof}
  Fix an optimal elimination sequence $\pi$ for $D$
  and observe that
  \[
    |\innbrs{\gElimSet{D}{I\setminus\{v\}}}{v}|\cdot |\outnbrs{\gElimSet{D}{I\setminus\{v\}}}{v}|
   \leq |E(D_I)|
   \]
  due to the fact that $\innbrs{\gElimSet{D}{I\setminus\{v\}}}{v} \times \outnbrs{\gElimSet{D}{I\setminus\{v\}}}{v} \subseteq  E(D_I)$. Now, we show that $|E(D_I)|\leq \cost_D(\pi)$ by a charging argument. We will charge each edge $(s,t)\in E(D_I)$ to the unique vertex of~$I$ whose elimination in $\pi$ first introduces edge $(s,t)$, which by assumption is not in $D$. Let $c(u)$ be the number of times vertex $u\in I$ was charged, and let $\pi(v)$ denote the set of vertices preceding $v$ according to $\pi$. Then by the observation that $u$ can only be charged once for each pair of in- and out-neighbors, we have $c(u)\leq \mu_{\gElimSeq{D}{\pi(v)}}(v)$. To complete the proof,
   \[
  |E(D_I)|=\sum_{u\in I}c(u)\leq \sum_{u\in I}\mu_{\gElimSeq{D}{\pi(v)}}(v)=\cost_D(\pi)=\OPT(D)\ .\qedhere \]
\end{proof}

\middleoutapproximation*
\begin{proof}
    We begin by bounding the cost of steps 2-5. Let $\sigma_L$ and $\sigma_R$, respectively, be the elimination orders produced by $\re(D^L)$ and $\fo(D^R)$, and let $\sigma_i$ be the elimination order produced by $\fo(D^i)$. First, we observe that $\cost_{\gElimSeq{D}{\sigma_L}}(\sigma_R)=\cost_D(\sigma_R)$, as $C$ separates~$D^L$ from $D^R$. In particular, while eliminations in $D^L$ can add edges adjacent to vertices in $C$, the cost of vertex elimintion is only affected by edges \emph{adjacent} to an eliminated vertex, and so $D^R$ is unaffected. The same conclusion holds for each component $D^i$, i.e., $\cost_{\gElimSeq{D}{(\sigma_L,\sigma_R,\sigma_1,\dots,\sigma_{i-1})}}(\sigma_i)=\cost_D(\sigma_i)$. Thus, using the notation $\sigma(v)$ for the set of vertices preceding $v$ in elimination sequence $\sigma$, we have
    \begin{align*}
        \cost_D&((\sigma_L,\sigma_R,\sigma_1,\dots,\sigma_k))
        =\cost_D(\sigma_L)+\cost_D(\sigma_R)+\sum_{i\in [k]}\cost_D(\sigma_i)\\
        &=\sum_{u\in \sigma_L}\mu_{\gElimSeq{D}{\sigma_L(u)}}(u)+\sum_{u\in \sigma_R}\mu_{\gElimSeq{D}{\sigma_R(u)}}(u)+\sum_{i\in [k]}\sum_{u\in \sigma_i}\mu_{\gElimSeq{D}{\sigma_i(u)}}(u)\\
        &\leq\sum_{u\in \sigma_L}|\innbrs{\gElimSeq{D}{\sigma_L(u)}}{u}|\cdot |C|+\sum_{u\in \sigma_R}|C|\cdot|\outnbrs{\gElimSeq{D}{\sigma_R(u)}}{u}|+\sum_{i\in [k]}\sum_{u\in \sigma_i}|C|\cdot|\outnbrs{\gElimSeq{D}{\sigma_i(u)}}{u}|\\
        &\leq |C|\cdot |E(D^L)|+|C|\cdot |E(D^R)|+\sum_{i\in[k]}|C|\cdot |E(D^i)|\\
        &\leq |C|\cdot |E(D)\setminus (E(C)\cup E(S,C)\cup E(C,T))|\ .\\
        &\leq |C|\cdot |E(D)|\leq 2|C|\cdot \OPT(D)\ .
    \end{align*}
    The first inequality is by the fact that we are eliminating vertices in topological order away from $C$, so the out-neighborhood (resp., in-neighborhood) of internal vertices in $D^L$ ($D^R$) is a subset of $C$. The second inequality follows from \Cref{lem:topo}, and the third inequality uses the fact that the the subgraphs (a) have disjoint edges and (b) exclude all source-sink edges. Lastly, we use an upper bound of $E(D)$ and then \Cref{lem:LB:m/2}. 
    Now, let $D'=\gElimSet{D}{I\setminus C}$ be the remaining graph before step 6. Thus,
    \begin{equation}\label{eq:halfwaypoint}
      \cost_D(\mo(D,C))\leq 2|C|\cdot\OPT(D)+\cost_{D'}(\fo(D'))
    \end{equation}   
    Let $\pi$ be the elimination sequence of $\fo(D')$. Note,
    \[
        \text{cost}_{D'}(\fo(D'))=\sum_{v\in C}\mu_{\gElimSeq{D}{\pi(v)}'}(v)=\sum_{v\in C}|\innbrs{\gElimSeq{D}{\pi(v)}'}{v}|\cdot |\outnbrs{\gElimSeq{D}{\pi(v)}'}{v}|\ .
    \]
    
    Define $S_v:=\innbrs{\gElimSet{D}{I\setminus \{v\}}}{v}$ to be the set of $v$'s in-neighbors after eliminating all vertices other than $v$, and similarly define $T_v:=\outnbrs{\gElimSet{D}{I\setminus \{v\}}}{v}$. We observe that $\innbrs{D_{\pi(v)}'}{v}=S_v$, as all vertices topologically before $v$ have been eliminated. Thus,
    \begin{align}
        \text{cost}_{D'}(\fo(D'))&=\sum_{v\in C}|S_v|\cdot \left(|\outnbrs{\gElimSeq{D}{\pi(v)}'}{v}\cap C|+ |\outnbrs{\gElimSeq{D}{\pi(v)}'}{v}\cap T|\right)\label{eq:pickup}\\
        &\leq\sum_{v\in C} |S_v|\cdot\left(|C\setminus \pi(v)|+|T_v|\right)\notag\\
        &\leq\sum_{v\in C} \OPT(D)\cdot |C\setminus \pi(v)|+\OPT(D)\notag\\
        &= \OPT(D)\cdot\sum_{v\in C} \left(|C\setminus \pi(v)| +1\right) \notag\\
        &=\left(\frac{|C|(|C|+1)}{2}+|C|\right)\cdot \OPT(D)\notag\ .
    \end{align}
    The first inequality makes use of the fact that $\outnbrs{\gElimSeq{D'}{\pi(v)}}{v}\cap C\subseteq C\setminus \pi(v)$ (trivially) and $\outnbrs{\gElimSeq{D'}{\pi(v)}}{v}\cap T\subseteq T_v$, since no edges between $v$ and $T$ can be removed by eliminating other vertices. The second inequality uses \Cref{lem:LB:sources} twice, with a liberal application by the fact that $|T_v|\geq 1$ giving $|S_v|\leq \OPT(D)$.

    Combining both costs, we have a total cost of at most:
    \begin{align*}
    \text{cost}_D(\mo(D,C))&\leq \left(2|C|+\frac{|C|(|C|+1)}{2}+|C|\right)\cdot \OPT(D)\\
    &= \frac{|C|^2+7|C|}{2}\cdot \OPT(D)\ .\qedhere
    \end{align*}
\end{proof}
\Cref{fig:middleout-approx} shows that there are graphs where the quadratic approximation ratio is asymptotically tight. However, we are able to refine our analysis when our separator is well-behaved.
\begin{definition}
  Let $D=(V,E)$ be a DAG. A set of vertices $C\subseteq V$ is \emph{convex} if for every pair of vertices $u,w\in C$ such that vertex $v$ is on a $u$ to $w$ path in $D$, we also have $v\in C$.
\end{definition}
Assuming that our separator is convex, we are able to prove a stronger bound.
\begin{corollary}\label{cor:middleout}
    Let $C$ be a convex $S$-$T$ separator in the underlying undirected graph of $D$. Then $\mo(D,C)$ is an $O(|C|+|E(C)|)$-approx. for~\StrOJA{}.
\end{corollary}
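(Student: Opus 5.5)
Reuse the proof of \Cref{thm:middle-out} verbatim up to inequality~(\ref{eq:halfwaypoint}). Steps 2--5 of \mo{} cost at most $2|C|\cdot\OPT(D)$, and that part of the argument never used anything about $C$ beyond being a separator. What remains is a sharper bound on $\cost_{D'}(\fo(D'))$, where $D'=\gElimSet{D}{I\setminus C}$. Only the term $|S_v|\cdot|\outnbrs{\gElimSeq{D'}{\pi(v)}}{v}\cap C|$ in~(\ref{eq:pickup}) produced the quadratic factor. So the goal is to replace the trivial bound $|C\setminus\pi(v)|$ by something whose sum over $v\in C$ is $O(|E(C)|)$. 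The term $|S_v|\cdot|T_v|\le\OPT(D)$ from \Cref{lem:LB:sources} still contributes $|C|\cdot\OPT(D)$ in total.

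The key claim concerns an internal $v\in C$ at the moment of its elimination in $\fo(D')$: every out-neighbour of $v$ in $C$ is an original out-neighbour of $v$ in $D[C]$. In other words, $\outnbrs{\gElimSeq{D'}{\pi(v)}}{v}\cap C\subseteq \outnbrs{D[C]}{v}$. To prove it, use \Cref{obs:directed-path-with-eliminated-internals}. An arc $(v,w)$ with $w\in C$ exists in the current graph if and only if $D$ has a $v$--$w$ path all of whose interior vertices have been eliminated. Those interior vertices lie either in $I\setminus C$ or in $C\cap\pi(v)$. Convexity forces every vertex on a $v$--$w$ path to lie in $C$, so no interior vertex lies in $I\setminus C$. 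Any interior vertex $u\in C$ would be a descendant of $v$, hence later than $v$ in topological order, hence not yet eliminated under forward mode. So the path has no interior vertices and $(v,w)\in E(D)$.

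With the claim, the sum becomes
\[
\sum_{v\in C}|S_v|\cdot|\outnbrs{D[C]}{v}|\le \OPT(D)\sum_{v\in C}|\outnbrs{D[C]}{v}| = |E(C)|\cdot\OPT(D),
\]
again using $|S_v|\le\OPT(D)$ from \Cref{lem:LB:sources} (valid since $|T_v|\ge1$). The edges from $v$ into $T$ are bounded by $T_v$ exactly as before. Adding the three contributions gives $(3|C|+|E(C)|)\cdot\OPT(D)$.

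I expect the main obstacle to be the claim itself, specifically ruling out fill-in among $C$-vertices created during steps 2--5. Two cases need care. Components $D^i$ can connect two vertices of $C$ through non-$C$ vertices; convexity excludes exactly this, since such a connecting path would have to stay inside $C$. Eliminations in $D^L$ and $D^R$ only create arcs between $C$ and $S$ or $T$. Applying \Cref{obs:directed-path-with-eliminated-internals} to the whole set of eliminated vertices, rather than tracking the individual steps, handles both cases at once.
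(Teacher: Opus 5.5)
Your proposal is correct and follows essentially the same route as the paper: keep the bound up to~(\ref{eq:halfwaypoint}), use convexity to show that when $v\in C$ is eliminated its out-neighbours inside $C$ are just its original out-neighbours in $D[C]$, and sum to obtain $(3|C|+|E(C)|)\cdot\OPT(D)$. Your justification via \Cref{obs:directed-path-with-eliminated-internals} makes the paper's one-line observation $E_{D'}(C)=E_D(C)$ and its consequence more explicit, in particular the topological-order step ruling out fill-in from earlier eliminations within $C$, which the paper passes over with a ``hence''.
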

\begin{proof}
  Recall definitions $D'=\gElimSeq{D}{I\setminus C}, S_v=\innbrs{\gElimSet{D}{I\setminus \{v\}}}{v}$, and $T_v=\outnbrs{\gElimSet{D}{I\setminus \{v\}}}{v}$. By assuming that $C$ is convex, we make the key observation that $E_{D'}(C)=E_D(C)$, and hence $\outnbrs{\gElimSeq{D}{\pi(v)}'}{v}\cap C=\outnbrs{D}{v}\cap C$. Now, we pick up from \Cref{eq:pickup}, once more bounding $|\outnbrs{\gElimSeq{D}{\pi(v)}'}{v}\cap T|$ by $|T_v|$:
    \begin{align*}
        \text{cost}_{D'}(\fo(D'))&\leq \sum_{v\in C}|S_v|\cdot \left(|\outnbrs{D}{v}\cap C|+ |T_v|\right)\\
        &\leq \sum_{v\in C}\OPT(D)\cdot |\outnbrs{D}{v}\cap C|+ \OPT(D)\\
        &=\OPT(D)\cdot \sum_{v\in C}\cdot \left(|\outnbrs{D}{v}\cap C|+ 1\right)\\
        &= \left(|E_D(C)|+|C|\right)\OPT(D)
    \end{align*}
    We again use \Cref{lem:LB:sources} twice for the second inequality, and the final inequality follows by observing that the sum of out-edges counts each edge exactly once. Combining this with \Cref{eq:halfwaypoint}, we get:
  \begin{align*}
    \text{cost}_D(\mo(D,C))\leq \left(3|C|+|E(C)|\right)\cdot \OPT(D)\ .&\qedhere
    \end{align*}
\end{proof}
Of course, $|E(C)|$ can also be possibly quadratic in $|C|$. However, \Cref{cor:middleout} gives an approximation linear in $|C|$ so long as $C$ is convex \emph{and} independent in $D$; this captures the case when $C=S$ (resp. $C=T$) and \mo{} reduces to \fo{} (resp. \re{}), matching the approximation guarantees from \Cref{thm:forward-reverse-approximations}, up to a constant factor.
\begin{observation}
     Let $C$ be a convex, independent $S$-$T$ separator in the underlying undirected graph of $D$. Then $\mo(D,C)$ is a $3|C|$-approx. for~\StrOJA{}.
\end{observation}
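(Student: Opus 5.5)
This follows from \Cref{cor:middleout} once we show that an independent $C$ contributes no edges inside itself. Since $C$ is independent in $D$, we have $E_D(C)=\emptyset$. The corollary's final bound then reads
\[
\cost_D(\mo(D,C)) \le \left(3|C| + |E_D(C)|\right)\cdot \OPT(D) = 3|C|\cdot\OPT(D).
\]
This gives a $3|C|$-approximation, not merely an $O(|C|)$ one.

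To be safe, I would re-trace the proof of \Cref{cor:middleout} under the added independence hypothesis, checking two places.

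First, the cost of steps 2--5. The proof of \Cref{thm:middle-out} bounds this by $|C|\cdot|E(D)\setminus(E(C)\cup E(S,C)\cup E(C,T))|\le |C|\cdot|E(D)|\le 2|C|\cdot\OPT(D)$, using \Cref{lem:topo} and \Cref{lem:LB:m/2}. This part does not depend on $C$, so the constant $2|C|$ carries over unchanged.

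Second, the final forward pass on $D'=\gElimSet{D}{I\setminus C}$. Convexity gives $E_{D'}(C)=E_D(C)$. Here I must check that no new edges between $C$-vertices appear after eliminating $I\setminus C$. By \Cref{obs:directed-path-with-eliminated-internals}, such an edge would need a directed path between two $C$-vertices whose interior lies outside $C$, which convexity forbids. Hence $E_{D'}(C)=\emptyset$. So in \eqref{eq:pickup} the $C$-term vanishes, each $v\in C$ pays at most $|S_v|\cdot|T_v|\le \OPT(D)$ by \Cref{lem:LB:sources}, and the total is $|C|\cdot\OPT(D)$.

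Adding the two parts gives $(2|C|+|C|)\OPT(D)=3|C|\cdot\OPT(D)$. The one point to double-check is that \Cref{lem:LB:sources} applies, which requires $E\cap(S\times T)=\emptyset$. This is assumed throughout the section via the preprocessing that deletes source--sink edges, as in \Cref{lem:LB:m/2}. I expect no real obstacle; the statement is a direct specialization of \Cref{cor:middleout}.
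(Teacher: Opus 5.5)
Your proposal is correct and takes the paper's route: the observation follows from \Cref{cor:middleout}, whose proof ends with the bound $(3|C|+|E(C)|)\cdot\OPT(D)$, and independence of $C$ gives $E(C)=\emptyset$. Your re-trace of the two cost components agrees with the paper's proof of that corollary: $2|C|\cdot\OPT(D)$ for steps 2--5, and $|C|\cdot\OPT(D)$ for the final forward pass, where convexity together with \Cref{obs:directed-path-with-eliminated-internals} rules out $C$--$C$ edges.
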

It is an interesting and open question whether the analysis of $\mo$ can be improved when $C$ is a minimum separator.

\begin{figure}[t]
		\centering

\tikzmath{\xs = 0; \xa =2; \xb = 4; \xc = 6; \xd = 8; \xt = 10;
\y1 = 2; \y2 =1.3; \ydots = 1.05; \y3 = 0.7; \yn = 0; \ylabel = 3;}
\begin{tikzpicture}[
    faint/.style={draw = gray, fill = gray, draw opacity=0.5, fill opacity = 0.5},
     v/.style={minimum size = 0.75cm}
    ]


\node[v] (s1) at (\xs, \y1) {$s_1$};
\node[v] (s2) at (\xs,\y2) {};
\node[v] (sdots) at (\xs,\ydots) {$\vdots$};
\node[v] (s3) at (\xs,\y3) {};
\node[v] (sn) at (\xs,\yn) {$s_n$};

\node[v] (adots) at (\xa,\ydots) {$a$};

\node[v] (b1) at (\xb, \y1) {$b_1$};
\node[v] (b2) at (\xb,\y2) {};
\node[v] (bdots) at (\xb,\ydots) {$\vdots$};
\node[v] (b3) at (\xb,\y3) {};
\node[v] (bn) at (\xb,\yn) {$b_n$};

\node[v] (cdots) at (\xc,\ydots) {$u$};

\node[v] (d1) at (\xd, \y1) {$w_1$};
\node[v] (d2) at (\xd,\y2) {};
\node[v] (ddots) at (\xd,\ydots) {$\vdots$};
\node[v] (d3) at (\xd,\y3) {};
\node[v] (dn) at (\xd,\yn) {$w_n$};

\node[v] (tdots) at (\xt,\ydots) {$t$};

\draw[-Latex] (s1)--(adots);
\draw[-Latex] (sn)--(adots);
\draw[-Latex] (adots)--(b1);
\draw[-Latex] (adots)--(bn);
\draw[-Latex] (b1)--(cdots);
\draw[-Latex] (bn)--(cdots);
\draw[-Latex] (cdots)--(d1);
\draw[-Latex] (cdots)--(dn);
\draw[-Latex] (d1)--(tdots);
\draw[-Latex] (dn)--(tdots);

\draw[-Latex, faint] (s2)--(adots);
\draw[-Latex, faint] (s3)--(adots);
\draw[-Latex, faint] (adots)--(b2);
\draw[-Latex, faint] (adots)--(b3);
\draw[-Latex, faint] (b2)--(cdots);
\draw[-Latex, faint] (b3)--(cdots);
\draw[-Latex, faint] (cdots)--(d2);
\draw[-Latex, faint] (cdots)--(d3);
\draw[-Latex, faint] (d2)--(tdots);
\draw[-Latex, faint] (d3)--(tdots);

\end{tikzpicture}
	\caption{\label{fig:middleout-approx} Example of a DAG $D$ where $\mo$ is an $\Omega(|C|^2)$-approximation. Let $C=\{b_1,\dots,b_n,w_1,\dots,w_n\}$. Then $\mo(D,C)$ gives elimination sequence $(a,u,b_1,\dots,b_n,w_1,\dots,w_n)$ and incurs cost $n^3+3n^2=\Theta(|C|^3)$, while the optimal elimination sequence $(b_1,\dots,b_n,w_1,\dots,w_n,u,a)$ incurs cost $3n+1=\Theta(|C|)$.}
\end{figure}

\subsection{Deferred Proofs from~\Cref{sec:approximation-hardness}}\label{appendix:proofs:approximation-hardness}

\minflopsapxhard*
\begin{proof}
  We reduce from cubic \VC{}. Let $G = (V, E)$ be a cubic graph. We construct a DAG $D = (V', E')$ as follows; note that this construction is identical to~\cite{bentert2025structural}. For each $v \in v$, we create five vertices $v_1, v_2, v_3, v_4, v_5 \in V'$. We add the arcs $(v_1, v_2)$, $(v_2, v_3)$, $(v_2, v_4)$, $(v_2, v_5)$, $(v_3, v_4)$, and $(v_3, v_5)$. Next, for each edge $uv \in E$, we add the arcs $(u_1, v_3), (u_2, v_3), (v_1, u_3), (v_2, u_3)$. This completes the construction. 
  Note that the maximum in-degree is $7$ (achieved by each $v_3$), the maximum out-degree is $6$ (achieved by each $v_2$), and the maximum total-degree is $9$ (achieved by each $v_3$).

  Let $n = |V|$ and $m = |E|$. Note that $|V'| = 5n$, so the reduction is computable in polynomial time. Let $\sigma$ be a total elimination ordering of cost $k$. We claim that it is possible, in polynomial time, to produce a vertex cover of $G$ with cardinality at most $k - 13n$. Let $C \subseteq V$ be the set of vertices such that for each $v \in C$, either $v_2$ precedes $v_3$ in $\sigma$, or $v_3$ precedes $u_2$ in $\sigma$ for some neighbor $u$ of $v$.

  First we claim that $C$ is a vertex cover in $G$. Assume toward a contradiction that there is some $uv \in E$ with $\{u, v\} \cap C = \emptyset$. Then $u_3$ is eliminated before $u_2$, $u_2$ is eliminated before $v_3$, $v_3$ is eliminated before $v_2$, and $v_2$ is eliminated before $u_3$. But this implies that $v_2$ is eliminated both before and after $u_3$, so we have a contradiction.

  Next we claim that $C$ has size at most $k - 13n$. First, we provide a lower bound on the cost of eliminating any vertex, regardless of order. Note that $v_3$ for any vertex $v \in V$ has exactly two out-neighbors $v_4$ and $v_5$, and this remains true for the duration of the elimination process, as $v_4, v_5$ are both sinks. Furthermore, $v_3$ has an in-neighbor $v_2$ (or $v_1$, if $v_2$ is eliminated before $v_3$) and for each neighbor $u$ of $v$, $v_3$ has an in-neighbor $u_1$. Thus, eliminating $v_3$ costs at least $2\deg(v) + 2$ for any elimination sequence. Now we analyze $v_2$. $v_1$ is always an in-neighbor of $v_2$, as it is in $D$ and it is a sink. Similarly, $v_4$ and $v_5$ are always out-neighbors of $v_2$. Finally, for each neighbor $u$ of $v$ $v_2$ has the out-neighbor $u_3$ (or $u_4$ and $u_5$ if $u_3$ is eliminated before $v_2$). Thus, eliminating $v_2$ costs at least $\deg(v) + 2$, regardless of order. Since these lower bounds hold for any vertex $v \in V$, we have that every total elimination sequence costs at least
  \begin{align*}
    \sum_{v \in V} (\deg(v) + 2) + \sum_{v \in V} (2\deg(v) + 2) = 4n + \sum_{v \in V} 3\deg(v) = 4n + 9n = 13n.
  \end{align*}
  Now we observe that for each $v \in C$, the cost of $\sigma$ increases by at least one over the analyzed lower bound. If $v_3$ is eliminated after $v_2$, then $v_3$ is an out-neighbor of $v_2$ at the time of $v_2$'s elimination, and we did not include this in our lower bound. If $v_3$ is eliminated before $u_2$ for some neighbor $u$ of $v$, then $u_2$ has out-neighbors $v_4$ and $v_5$, rather than $v_3$, at the time of its elimination.
  Thus, $|C| \leq k - 13n$, as otherwise the cost of $\sigma$ is larger than $k$, a contradiction.

  Let $\optG{D}$ denote the minimum cost of a total elimination sequence for $D$, and let $\tau(G)$ denote the vertex cover number of $G$. We have just proven that $\tau(G) \leq \optG{D} - 13n$. We now show that $\optG{D} \leq \tau(G) + 13n$. Let $C$ be a minimum vertex cover for $G$. Consider an elimination sequence in which we eliminate $v_2$ for all $v \in C$ first, followed by $u_3$ for all $u \in V\setminus C$, followed by $u_2$ for all $u \in V\setminus C$, and finally $v_3$ for each $v \in C$. 
  Observe that the cost of eliminating $v_2$ for $v \in C$ is $\deg(v) + 3$, as at the time of its elimination $v_2$ has a single in-neighbor $v_1$, three out-neighbors $v_3, v_4, v_5$, and $\deg(v)$ out-neighbors $u_3$ for each neighbor $u$ of $v$.
  The cost of eliminating $u_3$ for all $u \in V\setminus C$ is $2\deg(u) + 2$, as at the time of its elimination $u_3$ has two out-neighbors $u_4$ and $u_5$, one in-neighbor $u_2$, and $\deg(u)$ in-neighbors $v_1$ for each neighbor $v$ of $u$ (note that $N(u) \subseteq C$, so $v_2$ has already been eliminated).
  The cost of eliminating $u_2$ for all $u \in V\setminus C$ is $\deg(u) + 2$, as at the time of its elimination $u_2$ has one in-neighbor $u_1$, two out-neighbors $u_4, u_5$, and $\deg(u)$ out-neighbors $v_3$ for each neighbor $v$ of $u$.
  Finally, the cost of eliminating $v_3$ for all $v \in V$ is $2\deg(v) + 2$, as at the time of its elimination $v_3$ has two out-neighbors $v_4, v_5$, one in-neighbor $v_1$, and $\deg(v)$ in-neighbors $u_1$ for each neighbor $u$ of $v$.
  Summing across all vertices, we see that the cost of this elimination sequence is exactly
  \begin{align*}
    \sum_{v \in C} (\deg(v) + 3) &+ \sum_{u \in V\setminus C} (2\deg(u) + 2) + \sum_{u \in V\setminus C} (\deg(u) + 2) + \sum_{v \in C} 2\deg(v) + 2 \\
    &= \sum_{v \in C} (3\deg(v) + 5) + \sum_{u \in V\setminus C} (3\deg(u) + 4) \\
    &= \sum_{v \in C} 14 + \sum_{u \in V\setminus C} 13 \\
    &= |C| + 13n = \tau(G) + 13n.
  \end{align*}

  We are now ready to complete the proof. Let $\varepsilon > 0$ be a constant such that a $(1 + \varepsilon)$-approximation for \VC{} in cubic graphs would imply $\Poly{} = \NP$.
  Such a constant exists, by~\cite{alimonti1997hardness}.
  Let $\varepsilon' = \varepsilon/27$, and suppose that there exists a $(1 + \varepsilon')$-approximation for \StrOJA{}. Then given a cubic graph $G$, we can construct a DAG $D$ as described above, and obtain a vertex cover $C$ of $G$ such that
  \[
    |C| \leq (1 + \varepsilon')\cdot\optG{D} - 13n \leq (1 + \varepsilon')\cdot(\tau(G) + 13n) - 13n = (1 + \varepsilon')\cdot\tau(G) +\varepsilon'\cdot13n.
  \]

  Since $G$ is cubic, $\tau(G) \geq m / 3 = n/2$. Combining this with the above, we obtain
  \[
    |C| \leq (1+\varepsilon')\cdot\tau(G) + \varepsilon'\cdot26\cdot\tau(G) = (1 + 27\varepsilon')\cdot\tau(G) = (1 + \varepsilon)\cdot\tau(G).
  \]
  Recalling our choice of $\varepsilon$, the inequality above would imply that $\Poly{} = \NP$. This completes the proof.
\end{proof}

\maxisapxhardrestricted*
\begin{proof}[Proof of~\Cref{thm:maxIS-apx-hard-large-girth}]
  We reduce from \MaxIS{} in cubic graphs, which is \APX-hard~\cite{alimonti1997hardness}. Let $G = (V, E)$ be a cubic graph, and let $g \geq 3$. If $g$ is odd, replace it by $g+1$. We create a graph $G' = (V', E')$ as follows. For each vertex $v \in V(G)$, we create an \emph{original} vertex $v' \in V'$. For each edge $uv \in E$, we create $g$ \emph{subdivision} vertices, and edges such that these vertices induce a path, with one endpoint of the path adjacent to $u'$ and the other to $v'$. We refer to this path as the unique $(g+1)$-length $u'v'$-path in $G'$, and we say that its internal vertices are those subdivision vertices corresponding to the edge $uv$ in $G$. This completes the construction.
  Note that $G'$ has girth at least $g$ (in fact, at least $3g + 3$). Note also that $|V'| = |V| + g\cdot|E|$, and $|E'| = (g + 1)\cdot|E|$.
  Finally, note that in $G'$, every original vertex has degree $3$ and every subdivision vertex has degree $2$, and thus no pair of degree-$3$ vertices forms an edge.

  Let $\alpha(G), \alpha(G')$ denote the independence numbers of $G$ and $G'$, respectively. We now claim that $\alpha(G') \geq \alpha(G) + \frac{g}{2}\cdot|E|$. Let $X$ be an independent set in $G$. Then $X' = \{v' \mid v \in X\}$ is an independent set in $G'$. Furthermore, for each edge $uv \in E$, we may add to $X'$ $g/2$ subdivision vertices from the unique $(g+1)$-length $u'v'$-path in $G'$. We have now constructed an independent set $X'$ in $G'$ of size $|X| + \frac{g}{2}\cdot|E|$, which completes the proof of the claim.
  
  Now we claim that given any independent set $X'$ in $G'$, we may in polynomial time produce an independent set $X$ in $G$ of size at least $|X'| - \frac{g}{2}\cdot|E|$. First suppose that $X'$ contains two original vertices $u', v'$, with $uv \in E$. Then $X'$ must contain at most $\frac{g}{2} - 1$ subdivision vertices corresponding to the edge $uv$. We alter $X'$ by removing one of $u'$ or $v'$ (chosen arbitrarily) as well as any subdivision vertices corresponding to $uv$, and inserting exactly $g/2$ subdivision vertices corresponding to $uv$, chosen such they induce an independent set (it is possible that some removed subdivision vertices are reinserted). This operation does not reduce the size of $X'$. We repeat it exhaustively, and henceforth assume that $X'$ contains no pair of original vertices $u', v'$ such that $uv \in E$. Then $X = \{ v \mid v' \in X'\}$ is an independent set in $G$, and moreover it has size at least $|X'| - \frac{g}{2}\cdot|E|$, since $X'$ contains at most $g/2$ subdivision vertices corresponding to any edge in $E$.
  
  We are now ready to complete the proof. Let $\varepsilon > 0$ be a constant such that the existence of a polynomial-time $(1 - \varepsilon)$-approximation for \MaxIS{} in cubic graphs implies that \Poly = \NP; such a constant exists, by~\cite{alimonti1997hardness}. Let $\varepsilon' = \varepsilon/(3g + 1)$, and suppose that a $(1 - \varepsilon')$-approximation exists for \MaxIS{} restricted to graphs with maximum degree $3$, minimum degree $2$, and girth at least $g$. Then we apply this algorithm to generate an independent set $X'$ in $G'$. We may assume that $|X'| \geq 1 + \frac{g}{2}\cdot|E|$; otherwise, replace $X'$ with the set consisting of an arbitrary original vertex and $g/2$ subdivision vertices (chosen appropriately) corresponding to each edge in $E$.
  By our second claim, we produce an independent set $X$ in $G$ of size at least $|X'| - \frac{g}{2}\cdot|E|$.
  Then by our first claim, we have
  \begin{align*}
    |X| \geq |X'| - \frac{g}{2}\cdot|E| \geq (1 - \varepsilon')\cdot\alpha(G') - \frac{g}{2}\cdot|E| &\geq (1 - \varepsilon')\cdot(\alpha(G) + \frac{g}{2}\cdot|E|) - \frac{g}{2}\cdot|E| \\
    &= (1 - \varepsilon')\cdot\alpha(G) - \varepsilon'\cdot\frac{g}{2}\cdot|E|.
  \end{align*}

  Observe that as $G$ is cubic, its independence number is at least $|V|/4$, and $|E| = \frac{3}{2}\cdot|V|$. Combining this with the above, we have
  \begin{align*}
    |X| \geq (1 - \varepsilon')\cdot\alpha(G) - \varepsilon'\cdot\frac{g}{2}\cdot\frac{3}{2}\cdot|V| &\geq (1 - \varepsilon')\cdot\alpha(G) - 3\varepsilon'\cdot g\cdot \alpha(G) \\
    &= (1 - (3g + 1)\varepsilon')\cdot\alpha(G) \\
    &= (1 - \varepsilon)\cdot\alpha(G),
  \end{align*}
  which would imply \Poly{} = \NP. This completes the proof. 
\end{proof}

\begin{table}[h]
  \centering           
\begin{tabular}{|c|c|c||c|c|c|c|}    
  \hline
  Vertex Type & $\deg^-_G(v)$ & $\deg^+_G(v)$ & $|I_v|$ & $|O_v|$ & $|E(I_v, O_v)|$ & $\deg_D(v')$  \\  
  \hline
  1 & 0 & 3 & 4 & 6 & 24 & 13 \\
  \hline
  2a & 2 & 0 & 4 & 4 & 15 & 10 \\
  2b & 0 & 2 & 4 & 4 & 15 & 10 \\
  \hline
  3 & 1 & 1 & 4 & 4 & 16 & 10 \\
  \hline
\end{tabular}
\caption{\label{table:scarcity-reduction-params} Parameter settings for our construction of a DAG $D$, used to prove~\Cref{thm:scarcity-minimization-apx-hard,thm:scarcity-maximization-apx-hard}. The leftmost three columns are a reminder of the definitions of vertex types, where $\deg^-_G(v)$ and $\deg^+_G(v)$ are defined with respect to a vertex ordering described in the text. The next two columns define the cardinalities of $I_v$ and $O_v$. The second-from-right column defines the number of arcs created between $I_v$ and $O_v$, and the rightmost column records the degree of the replica vertex $v'$ associated with $v$.}
\end{table}
\nicesolutionclaim*
\begin{proof}
  Suppose toward a contradiction that $X$ is not an independent set in $D$. Then let $u', v'$ be a pair of replica vertices in $X$ such that the corresponding vertices $u$ and $v$ are adjacent in $G$.
  Let $A = N_{D_{X \setminus \{v'\}}}(v') \setminus (I_v \cup O_v)$, and let $d = |A|$. Note that $d > \deg_G(v)$, since $\deg_G(v) \leq 3$ and $v'$ has four in-neighbors $I_u$ (if $u <_\pi v$) or four out-neighbors $O_u$ (if $v <_\pi u$) in $D_{X\setminus\{v'\}}$.
  Eliminating $v'$ in $D_{X\setminus \{v'\}}$ destroys $|A| + |I_v| + |O_v|$ arcs. However, this elimination creates at least $4d$ arcs, since every element of $A$ either gains out-neighbors $O_v$ or in-neighbors $I_v$. 
  If $v$ is type 1 we have $d \geq 4$, and thus the elimination destroys $d + 10$ arcs but creates $4d = d + 3d \geq d + 12$ arcs, contradicting that $X$ was a nice solution.
  So $v$ is not of type $1$. But then we have $d \geq 3$, and thus the elimination destroys $d + 8$ arcs but creates $4d = d + 3d \geq d + 9$ arcs, again contradicting that $X$ was a nice solution.   

  We have proven that any nice solution $X$ is an independent set in $D$. We now prove that $|X| = |E(D)| - |E(D_X)|$. We eliminate the vertices of $X$ in an arbitrary order.
  By independence, the degree of any vertex at the time of its elimination is equal to its degree in $D$. Now we analyze the number of arcs created by the elimination of $v' \in X$. Note that none of the replica neighbors of $v'$ are adjacent in $D$, as otherwise we have found a triangle in $G$. Furthermore, none of the replica neighbors of $v'$ are adjacent at the moment when $v'$ is eliminated, as otherwise we have found a $C_4$ in $G$.
  Then the elimination of $v'$ removes $\deg_D(v')$ arcs and creates \[\deg^-_G(v)\cdot\deg^+_G(v) + |I_v|\cdot\deg^+(v) + |O_v|\cdot\deg^-(v) + |I_v|\cdot|O_v| - |E(I_v, O_v)|\] arcs. 
  Using the values in~\Cref{table:scarcity-reduction-params}, it is simple to check that for each vertex type, the latter quantity is equal to $\deg_D(v') - 1$. 
  Thus, the DAG has exactly one less arc after eliminating $v'$. Since this is true for every vertex in $X$, it follows that $|E(D)| - |E(D_X)| = |X_G|$, as desired.

  Finally, let $X$ be any independent subset of $D's$ internal vertices. By the argument above, $|E(D_X)| = |E(D_{X \setminus \{v'\}})| - 1$ for every $v' \in X$, so $X$ is a nice solution.
\end{proof}

\produceisclaim*
\begin{proof}
  If $X$ is a nice solution, then by Claim~\ref{claim:scarcity-reduction-nice-solutions} and our construction, the set $X_G = \{v \mid v' \in X\}$ is an independent set in $G$ of cardinality $k$. 
  If $X$ is not a nice solution, then identify a vertex $v' \in X$ such that $|E(D_{X\setminus \{v'\}})| < |E(D_{X})|$, and replace $X$ by $X' = X\setminus \{v'\}$. Repeat this operation until $X'$ is a nice solution, and note that $|E(D)| - |E(D_{X'})| > k$.
  Then apply the argument above to $X'$.
\end{proof}

\relateoptsclaim*
\begin{proof}
  From the definitions, $\optGmin{D} = |E(D)| - \optGmax{D}$, so it suffices to show that $\optGmax{D} = \alpha(G)$. By Claim~\ref{claim:scarcity-apx-produce-is}, $\alpha(G) \geq \optGmax{D}$. For the other direction, let $X_G$ be a maximum independent set in $G$, and let $X_D = \{v' \mid v \in X_G\}$, i.e., $X_D$ is the set of replica vertices corresponding to $X_G$.
  By our construction and Claim~\ref{claim:scarcity-reduction-nice-solutions}, $X_D$ is a nice solution, so $\optGmax{D} \geq |E(D)| - |E(D_{X_D})| = |X_D| = |X_G| = \alpha(G)$.
\end{proof}

\maxedgereductionapxhard*
\begin{proof}
  We reduce from \MaxIS{} in graphs with maximum degree 3, minimum degree 2, girth at least 5, and no two degree-$3$ vertices adjacent. Given an instance $G$, we create a DAG $D$ via the construction described above.
  Let $\varepsilon > 0$ be a constant such that a polynomial-time $(1 - \varepsilon)$-approximation for \MaxIS{} with the stated restrictions would imply \Poly = \NP; such a constant exists by~\Cref{thm:maxIS-apx-hard-large-girth}.
  Now suppose that there exists a $(1 - \varepsilon)$-approximation for \MaxER{}. Then we may obtain a vertex elimination set $X_D$ with $|E(D)| - |E(D_{X_D})| \geq (1 - \varepsilon)\cdot\optGmax{D}$.
  By Claim~\ref{claim:scarcity-apx-produce-is}, we obtain an independent set $X_G$ in $G$ of size at least $(1 - \varepsilon)\cdot\optGmax{D}$.
  But then by Claim~\ref{claim:scarcity-apx-relate-opts}, $|X| \geq (1 - \varepsilon)\cdot \alpha(G)$, which contradicts our choice of~$\varepsilon$.
\end{proof} 

\minedgecountapxhard*
\begin{proof}
  We reduce from \MaxIS{} in graphs with maximum degree $3$, minimum degree $2$, girth at least $5$, and no two degree-$3$ vertices adjacent. Given an instance $G$, we create a DAG $D$ via the construction described above. 
  Let $\varepsilon > 0$ be a constant such that a polynomial-time $(1 - \varepsilon)$-approximation for \MaxIS{} with the stated restrictions would imply \Poly = \NP; such a constant exists by~\Cref{thm:maxIS-apx-hard-large-girth}.
  Now let $\varepsilon' = \varepsilon/147$, and suppose that there exists a $(1 + \varepsilon')$-approximation for \MinEC{}.
  Then we obtain a vertex elimination set $X_D$ with $|E(D_{X_D})| \leq (1 + \varepsilon')\cdot\optGmin{D}$, where $\optGmin{D}$ denotes the optimum objective value for $\MinEC{}$ on $D$. Let $m'$ denote the number of edges in $D$, and observe that $\optGmin{D} = m' - \optGmax{D}$. Then by using Claim~\ref{claim:scarcity-apx-produce-is} and then Claim~\ref{claim:scarcity-apx-relate-opts}, we obtain an independent set $X_G$ in $G$ such that
  \begin{align*}
    |X_G| \geq m' - (1 + \varepsilon')\cdot\optGmin{D} = \optGmax{D} - \varepsilon'\cdot\optGmin{D} &= \alpha(G) - \varepsilon'\cdot\optGmin{D} \\
    &= \alpha(G) - \varepsilon'\cdot(m' - \optGmax{D}) \\
    &= \alpha(G) - \varepsilon'\cdot(m' - \alpha(G)).
\end{align*}
Let $n$ denote the number of vertices in $G$. Observe that in the construction described above, $m' \leq 37n$. Observe also that as $G$ is cubic, $\alpha(G) \geq n/4$.
Then we have
\begin{align*}
  |X_G| \geq \alpha(G) - \varepsilon'\cdot(37n - \alpha(G)) \geq \alpha(G) - 147\varepsilon'\cdot\alpha(G) = (1 - \varepsilon)\cdot\alpha(G),
\end{align*}
which implies \Poly{} = \NP. This completes the proof.
\end{proof}

\maxedgereductionpolyapxbound*
\begin{proof}[Proof of~\Cref{thm:scarcity-poly-apx-hardness}]
  Let $G = (V, E)$ be a graph with girth $5$, and let $n = |V|$. Let $\pi$ be an arbitrary total order on $V$. For $v \in V$, we write $\deg^-_G(v)$ for the number of neighbors of $v$ which precede $v$ in $\pi$. We define $\deg^+_G(v)$ similarly.
  We create a DAG $D = (V', E')$ as follows.
  We add to $V'$ a set $T$ of $3n^2$ \emph{auxiliary} vertices.
  Then, for each $v \in V$, we add to $V'$ a \emph{replica} vertex $v'$, and two sets $I_v$ and $O_v$ of auxiliary vertices. Both $I_v$ and $O_v$ have cardinality $2n$.
  We add arcs from every vertex in $I_v$ to every vertex in~$\{v'\} \cup T$, and from $v'$ to every vertex in $T \cup O_v$.
  We also add $\gamma(v)$ arcs from $I_v$ to $O_v$ (it does not matter which arcs are chosen), where
  \[
    \gamma(v) = \deg^-_G(v)\cdot\deg^+_G(v) + (2n - 1)\cdot\deg_G(v) + n^2 - 4n + 1.
  \] 
  Note that this construction is always possible if $n \geq 4$, as we then have $0 < \gamma(v) < 4n^2 = |I_v|\cdot|O_v|$.
  Finally, for each edge $uv \in E$, we add the arc $(u', v')$ if $u$ precedes $v$ in $\pi$, or the arc $(v', u')$ if $v$ precedes $u$ in $\pi$.
  This completes the construction of $D$. Note that every auxiliary vertex is a source or a sink, and that every replica vertex is an internal vertex. Note also that $|V'| < 8n^2$, so the construction is computable in polynomial time.

  Let $X'$ be a subset of the replica vertices of $D$. As in~\Cref{claim:scarcity-reduction-nice-solutions}, we will say that $X'$ is a \emph{nice solution} if $D_{X'}$ has fewer arcs than $D_{X'\setminus\{v'\}}$ for every $v' \in X$.
  Note that given any solution $X'_1$, a nice solution $X'_2$ with $|E(D_{X'_2})| < |E(D_{X'_1})|$ may be computed in polynomial time by exhaustively removing a vertex $v'$ with $|E(D_{X'_1})| > |E(D_{X'_1\setminus\{v'\}})|$.

  Let $X'$ be a nice solution. We claim that $X = \{v \mid v' \in X'\}$ is an independent set in $G$.
  Otherwise, there are two vertices $u', v' \in X'$ such that $uv \in E$. Without loss of generality, assume $v$ precedes $u$ in $\pi$.
  Let $A = N_{D_{X' \setminus\{v'\}}}(v') \setminus (I_v \cup O_v \cup T)$, and let $d = |A|$. 
  Since every vertex of $O_u$ is an out-neighbor of $v'$ in $D_{X' \setminus\{v'\}}$, $d \geq 2n$.
  When $v'$ is eliminated in $D_{X' \setminus \{v'\}}$, exactly $d + |I_v| + |O_v| + |T| = d + 4n + 3n^2 \leq (\frac{3n}{2} + 3)d$ arcs are destroyed.
  Meanwhile, at least $2n\cdot d$ arcs are created, since every vertex in $A$ gains an arc to either every vertex in $I_v$ or to every vertex in $O_v$.
  As long as $n \geq 8$, the number of created arcs is larger than the number of destroyed arcs, contradicting that $X'$ is a nice solution. Thus, we may assume that $X'$ and $X$ are independent sets in $D$ and $G$, respectively.

  Now we prove that $|X| = |E(D)| - |E(D_{X'})|$. Consider the vertices of $X'$, eliminated in an arbitrary order. By independence, when $v' \in X'$ is eliminated, none of its neighbors have been eliminated. Then at the time of its elimination, $v'$ has degree $\deg_D(v') = \deg_G(v) + 3n^2 + 4n$. This is the number of arcs destroyed by the elimination of $v'$.
  Now we compute the number of arcs created by the elimination of $v'$.
  Note that no pair of replica neighbors of $v'$ is adjacent in $D$, as otherwise we have identified a triangle in $G$. Similarly, no pair of replica neighbors of $v'$ is adjacent at the moment $v'$ is eliminated, as otherwise we have identified a $C_4$ in $G$.
  Then the elimination of $v'$ creates precisely $\deg^-_G(v)\cdot\deg^+_G(v) + 2n\cdot\deg_G(v) + 4n^2 - \gamma(v)$ arcs.
  Substituting our choice for $\gamma(v)$, the latter quantity is $\deg_G(v) + 3n^2 + 4n - 1$.
  Comparing the numbers of destroyed and created arcs, we see that eliminating $v'$ reduces the total number of arcs by exactly $1$. Since this is true for every vertex in $X'$ and $|X| = |X'|$, it follows that $|X| = |E(D)| - |E(D_{X'})|$.

  Let $\alpha(G)$ denote the independence number of $G$ and let $\optGmax{D}$ denote the optimum objective value for \MaxER{} on $D$.
  Let $X$ be a maximum independent set in $G$, and let $X' = \{v' \mid v \in X\}$. By the argument in the previous paragraph, $X'$ is a nice solution and $|E(D)| - |E(D_{X'})| = |X'| = |X|$. It follows that $\optGmax{D} \geq \alpha(G)$.

  Now let $X'$ be subset of the internal vertices of $D$ such that $|E(D)| - |E(D_{X'})| \geq \frac{\optGmax{D}}{|V'|^c}$. By the arguments above, we may assume that $X'$ is a nice solution.
  Then $X = \{v \mid v' \in X'\}$ is an independent set in $G$, with
  \[
    |X| = |X'| = |E(D)| - |E(D_{X'})| \geq \frac{\optGmax{D}}{|V'|^c} \geq \frac{\alpha(G)}{|V'|^c} \geq \frac{\alpha(G)}{(8n^2)^c} \geq \frac{\alpha(G)}{n^{2c + \varepsilon}}
  \]
  for any constant $\varepsilon > 0$, as long as $n$ is sufficiently large. Thus, a $n^{c}$-approximation for \MaxER{} implies a $n^{2c + \varepsilon}$-approximation for \MaxIS{} restricted to graphs of girth~$5$.
  The result follows.
\end{proof}


\begin{figure}[t]
  \centering
  \resizebox{0.33\columnwidth}{!}{%
  \begin{tikzpicture}
    \node[circle, draw] (v) at (0, 0) {$v'$};
    \node[circle, draw, minimum size=1cm, fill=blue!20, label=above:{$2n$}] (iv) at (-2.2, 2) {\large$I_v$};
    \node[circle, draw, minimum size=1cm, fill=blue!20, label=above:{$2n$}] (ov) at (2.2, 2) {\large$O_v$};
    \node[draw, rounded corners, fill = cyan!40!gray, minimum size=1.25cm, label=right:{$3n^2$}] (t) at (0, -2.6) {\large$T$};
    \node[] (m1) at (-2, 0.7) {};
    \node[] (m2) at (-2, 0) {};
    \node[] (m3) at (-2, -0.7) {};
    \node[] (n1) at (2, 0.7) {};
    \node[] (n2) at (2, 0) {};
    \node[] (n3) at (2, -0.7) {};
    \draw[->,line width=1.5pt] (iv)--node[above, yshift=2pt]{$\ \ 2n$}(v);
    \draw[->,line width=1.5pt] (iv) -- node[above]{$\gamma(v)$} (ov);
    \draw[->,line width=1.5pt] (iv) edge[bend right=60] node[left]{$6n^3$} (t);
    \draw[->,line width=1.5pt] (v)--node[above, xshift=-2pt]{$2n\ $}(ov);
    \draw[->,line width=1.5pt] (v) edge [] node[right, yshift=3pt]{$3n^2$}(t);
    \draw[->,dashed] (m1)--(v);
    \draw[->,dashed] (m2)--(v);
    \draw[->,dashed] (m3)--(v);
    \draw[->,dashed] (v)--(n1);
    \draw[->,dashed] (v)--(n2);
    \draw[->,dashed] (v)--(n3);
  \end{tikzpicture}%
  }
  \caption{\label{fig:scarcity-poly-apx-hard-gadget}Gadget for vertex $v$ used in the proof of \Cref{thm:scarcity-poly-apx-hardness}. The vertex $v'$ is a replica vertex, and the sets $I_v$ and $O_v$ contain auxiliary vertices; the cardinalities of these sets are indicated above. Bolded arrows represent multiple edges. Dashed arrows represent edges to or from the replica vertices in other gadgets. $T$ is a global set of cardinality $3n^2$, shared by all vertex gadgets.}
\end{figure}

\smallskip
\noindent\textbf{A new preprocessing rule.} We conclude by introducing a new safe preprocessing rule for~\StrOJA{}. To our knowledge, the only other known rule is~\Cref{prop:subdivision-preprocessing}.
Recall that we write $\mu_D^*(v) \vcentcolon = \cutsize{S}{v}\cdot\cutsize{v}{T}$ for the minimum possible Markowitz degree of $v$. By~\Cref{obs:sep-lower-bound},
the cost of eliminating $v$ in any elimination sequence is at least $\mu_D^*(v)$.
We write $\cost(\sigma)$ for the cost of an elimination sequence $\sigma$.

For two DAGs $D = (V=\sources\uplus\internals\uplus\sinks,E)$ and $D'= (V'=\sources'\uplus\internals'\uplus\sinks',E)$, we relax the subgraph relation by saying that $D$ is \emph{included in} $D'$ (denoted $D \subseteq D'$) if $V \subseteq V'$ and $(E \setminus (S \times T)) \subseteq E'$. 
Finally, for every internal node $v \in \internals$, we write 
\[
  P^+_D(v) \vcentcolon = \{ w \in \outnbrs{D}{v} \mid \exists u\in \innbrs{D}{v} \text{ with } (u, w) \notin E \cup (S \times T)\}.
\]

Intuitively, $P^+_D(v)$ is the set of out-neighbors of $v$ which cause $\gElimVertex{D}{v} \not\subseteq D$.
The following definition is symmetric.
\[
  P^-_D(v) \vcentcolon = \{ u \in \innbrs{D}{v} \mid \exists w\in \outnbrs{D}{v} \text{ with } (u, w) \notin E \cup (S \times T)\}.
\]

We can now state our result.
\begin{theorem}
  Let $D = (V = \sources\uplus\internals\uplus\sinks, E)$ be a DAG, and let $v \in I$ be such that $\mu_D(v) = \mu_D^*(v)$, $P^+_D(v) \subseteq T$, and $|P^+_D(v)| \leq 1$.
  Then there exists a minimum-cost total elimination sequence for $D$ in which $v$ appears first.
\end{theorem}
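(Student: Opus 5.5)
The plan is an exchange argument. Take any minimum-cost total elimination sequence $\sigma = (v_1,\dots,v_\ell)$ of $D$ with $v = v_j$. Compare it with $\sigma' = (v, v_1,\dots,v_{j-1},v_{j+1},\dots,v_\ell)$ and show $\cost(\sigma')\le\cost(\sigma)$. Eliminating $v$ first costs $\mu_D(v)=\mu_D^*(v)$. By \Cref{obs:sep-lower-bound}, $v$ costs at least $\mu_D^*(v)$ in $\sigma$. So it suffices to show that each other vertex $v_i$ ($i\neq j$) costs no more in $\sigma'$ than in $\sigma$.

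The key tool is a monotonicity lemma, proved by induction on the length of a common elimination sequence $\tau$ of vertices in $I\setminus\{v\}$. If $D_1\subseteq D_2$ (in the relaxed inclusion sense), then $(D_1)_\tau\subseteq(D_2)_\tau$, and every vertex of $\tau$ has in/out-neighborhoods in $D_1$ (up to source–sink pairs) contained in those in $D_2$. By \Cref{obs:directed-path-with-eliminated-internals}, edges in $D_\tau$ correspond to paths through eliminated vertices, so inclusion is preserved. Source–sink pairs never matter for Markowitz degrees, since such vertices are never eliminated; this is why the relaxation drops $S\times T$.

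I will split the comparison at position $j$. For $i<j$, compare $\gElimVertex{D}{v}$ eliminated along $(v_1,\dots,v_{i-1})$ with $D$ eliminated along the same prefix, where $v$ is still present. If $P^+_D(v)=\emptyset$, then $\gElimVertex{D}{v}\subseteq D$ minus $v$, and the lemma gives that $v_i$ has no larger degree. In general, $|P^+_D(v)|\le 1$ with $P^+_D(v)=\{t\}\subseteq T$. The new edges $(u,t)$ created by eliminating $v$ first then replace, in the neighborhoods of each $u\in\innbrs{D}{v}$, the edge $(u,v)$ that existed in $D$. So $u$'s out-degree does not grow: it loses $v$ and gains at most $t$. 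The vertex $t$ is a sink and is never eliminated, so its in-degree is irrelevant. The formal device is a map from $\gElimVertex{D}{v}$ into $D$ that sends $t$ to $v$ on edges into $t$ from $\innbrs{D}{v}$. This injects neighborhoods of internal vertices, and I will show it survives the prefix eliminations via the path characterization: paths ending at $t$ through eliminated vertices lift to paths ending at $v$. For $i>j$, both graphs have had exactly the same vertex set eliminated, so by \Cref{lemma:scarcity-order-doesnt-matter} they are identical and the costs match.

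The main obstacle is the $i<j$ case with $|P^+_D(v)|=1$. Here I must check the neighborhood injection carefully. An internal vertex $u$ might already be adjacent to $v$ and, after some eliminations, also to $t$ through another route. In that situation the edge to $t$ in the $\sigma'$ graph may correspond to an edge already present in the $\sigma$ graph, and it would collide with the image of $v$. I expect this collision to resolve in our favor: $u$ then has both $v$ and $t$ as out-neighbors in $\sigma$'s graph but only $t$ in $\sigma'$'s graph, so its degree only drops. Getting this bookkeeping right, counting $|\outnbrs{}{u}|$ via the path characterization, is the step I would write out most carefully.
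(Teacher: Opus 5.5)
Your proposal is correct and follows the same exchange argument as the paper: move $v$ to the front and bound its cost via \Cref{obs:sep-lower-bound}. Later vertices then see identical graphs by \Cref{lemma:scarcity-order-doesnt-matter}, and each earlier vertex loses $v$ from its neighborhoods and gains at most the sink $t$. Your monotonicity lemma and path-characterization argument, including the collision case (which indeed only lowers the degree), make rigorous the neighborhood identities that the paper simply asserts ``by our hypotheses.''
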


The rule can also be applied in the case where $\markdeg{v} = \mu_D^*(v)$, $P^-_D(v) \subseteq S$, and $|P^-_D(v)| \leq 1$.

\begin{proof}
  Let $\sigma^* = (v_1, v_2, \ldots, v_{|I|})$ be a minimum-cost total elimination sequence, and let $v_k = v$.
  Let $\sigma = (v, v_1, \ldots, v_{k-1}, v_{k+1}, \ldots, v_{|I|})$. We will show that $\cost(\sigma) \leq \cost(\sigma^*)$.
  
  We write $X_i = \{v_1, v_2, \ldots v_{i-1}\}$ for the set consisting of the first $i-1$ vertices eliminated by $\sigma^*$.
  Observe that for all $i > k$, we have $\mu_{\gElimSet{D}{X_i \cup \{v\}}}(v_i) = \mu_{\gElimSet{D}{X_i}}(v_i)$, since $X_i = X_i \cup \{v\}$.
  Next, we observe that by our hypotheses, for all $i < k$ we have
  $$N^-_{D_{X_i \cup \{v\}}}(v_i) = \begin{cases}
	N^-_{X_i}(v_i) \backslash \{v\} & \text{ If } v\in N^-_{X_i}(v_i)\\
	N^-_{X_i}(v_i) & \text{ Otherwise. }
\end{cases}$$
$$N^+_{D_{X_i \cup \{v\}}}(v_i) = \begin{cases}
    \left( N^+_{D_{X_i}}(v_i) \backslash \{v\}\right) \cup \{t\} & \text{ If } v\in N^+_{D_{X_i}}(v_i) \text{ and } P^+_{D}(v)=\{t\}\\
    N^+_{D_{X_i}}(v_i) \backslash \{v\} & \text{ If } v\in N^+_{D_{X_i}}(v_i) \text{ and } P^+_{D}(v)=\emptyset\\
    N^+_{D_{X_i}}(v_i) & \text{ Otherwise.}
\end{cases}$$

Thus, $\forall i<k$, $\mu_{D_{X_i \cup \{v\}}}(v_i) \leq \mu_{D_{X_i}}(v_i)$. Lastly, since $\mu_D(v) = \mu^*_{D}(v) \leq \mu_{D_{X_k}}(v)$, we have:
$$
\cost(\sigma) = \mu_D(v) +  \underset{i\neq k}{\underset{i=1}{\overset{|I|}{\sum}}}\mu_{D_{X_i \cup \{v\}}}(v_i) \leq \mu_{D_{X_k}}(v) +  \underset{i\neq k}{\underset{i=1}{\overset{n}{\sum}}}\mu_{D_{X_i}}(v_i) \leq \cost(\sigma^*),
$$
as desired.
\end{proof}
\section{Computational Results}\label{appendix:experiments}
\subsection{ILP Timing}

The timing data presented in~\Cref{tab:ilp-timing-flops-all-but-nls,tab:ilp-timing-flops-TEST-NLS,tab:ilp-timing-scarcity-all-but-nls,tab:ilp-timing-scarcity-TEST-NLS} was obtained using Gurobi version 13, single threaded, on an Intel(R) Xeon(R) Gold 6230 CPU @ 2.10GHz with a 10 minute timeout.
\begin{table}[!htbp]
\caption{Running time and speedup percentages for our \StrOJA{} ILP on graphs from \CHMUset{} (top), \GWset{} (second from top), \EGset{} (second from bottom), and \AGset{} (bottom), run on a single core with a 10 minute timeout. Speedup percentages are reported with respect to the fastest version of the ILP from~\cite{chen2012integer,chen2012scarcity}. A speedup of $\infty$ indicates that our ILP found an optimal solution, but every version of the previous ILP timed out. A speedup of n/a indicates that every version of both ILPs timed out.}
\label{tab:ilp-timing-flops-all-but-nls}
\centering
{\small\setlength{\tabcolsep}{3pt}\begin{tabular}{lll|llll|llll}
\toprule
\multicolumn{3}{c}{} & \multicolumn{4}{c}{New ILP time (s)} & \multicolumn{4}{c}{New ILP Speedup} \\
Graph & n & m & \va{} & \vb{} & \vc{} & \vd{} & \va{} & \vb{} & \vc{} & \vd{} \\
\midrule
fig10.4 & 10 & 12 & 0.03 & 0.04 & 0.04 & 0.28 & 38.30 & 12.77 & 8.51 & -506.38 \\
fig10.1 & 11 & 12 & 0.23 & 0.37 & 0.51 & 0.37 & 67.55 & 47.69 & 29.09 & 48.25 \\
ex10.8 & 12 & 14 & 0.12 & 0.07 & 0.08 & 0.05 & 86.33 & 92.00 & 90.27 & 94.32 \\
re\vb{}ound & 12 & 21 & 0.90 & 0.27 & 0.19 & 0.03 & 76.80 & 93.01 & 95.09 & 99.23 \\
butterfly & 16 & 24 & 0.42 & 0.11 & 0.11 & 0.07 & 99.80 & 99.95 & 99.95 & 99.97 \\
\midrule
fig10.2 & 6 & 6 & 0.30 & 0.34 & 0.29 & 0.03 & -274.07 & -319.75 & -264.20 & 62.96 \\
fig10.1.orig & 11 & 12 & 0.22 & 0.10 & 0.11 & 0.09 & 72.62 & 86.69 & 85.55 & 88.21 \\
fig10.3 & 18 & 20 & 0.46 & 0.11 & 0.14 & 0.09 & ∞ & ∞ & ∞ & ∞ \\
fig10.6 & 34 & 70 & t & 275.71 & t & t & n/a & ∞ & n/a & n/a \\
\midrule
2d4x2x2 & 32 & 96 & 0.73 & 0.29 & 0.41 & 0.36 & ∞ & ∞ & ∞ & ∞ \\
2d3x3x2 & 36 & 135 & 1.13 & 0.43 & 0.58 & 0.56 & ∞ & ∞ & ∞ & ∞ \\
2d4x2x3 & 40 & 128 & 190.95 & 24.08 & 107.98 & 107.17 & ∞ & ∞ & ∞ & ∞ \\
2d3x3x3 & 45 & 180 & 431.56 & 105.85 & 159.94 & 157.69 & ∞ & ∞ & ∞ & ∞ \\
2d5x5x2 & 100 & 375 & 41.19 & 10.66 & 45.94 & 43.77 & ∞ & ∞ & ∞ & ∞ \\
2d5x5x3 & 125 & 500 & t & t & t & t & n/a & n/a & n/a & n/a \\
2d5x5x5 & 175 & 750 & t & t & t & t & n/a & n/a & n/a & n/a \\
2d10x10x2 & 400 & 1500 & t & t & t & t & n/a & n/a & n/a & n/a \\
2d10x10x5 & 700 & 3000 & t & t & t & t & n/a & n/a & n/a & n/a \\
2d10x10x10 & 1200 & 5500 & t & t & t & t & n/a & n/a & n/a & n/a \\
\midrule
Helmholtz & 7 & 8 & 0.08 & 0.05 & 0.03 & 0.03 & 27.27 & 54.55 & 73.64 & 71.82 \\
PropaneComb. & 27 & 29 & 5.07 & 0.67 & 0.22 & 0.18 & ∞ & ∞ & ∞ & ∞ \\
Robot-6DOF & 34 & 43 & t & t & 0.57 & 0.32 & n/a & n/a & ∞ & ∞ \\
BlackScholes & 35 & 44 & t & t & 1.49 & 0.45 & n/a & n/a & ∞ & ∞ \\
HumanHeart & 36 & 43 & t & t & 0.66 & 0.42 & n/a & n/a & ∞ & ∞ \\
Perceptron & 44 & 47 & t & t & 352.60 & 1.05 & n/a & n/a & ∞ & ∞ \\
f & 46 & 61 & t & t & t & t & n/a & n/a & n/a & n/a \\
Encoder & 98 & 114 & t & t & t & 101.49 & n/a & n/a & n/a & ∞ \\
RoeFlux-1d & 100 & 164 & t & t & t & t & n/a & n/a & n/a & n/a \\
g & 102 & 155 & t & t & t & t & n/a & n/a & n/a & n/a \\
RoeFlux-3d & 140 & 226 & t & t & t & t & n/a & n/a & n/a & n/a \\
\bottomrule
\end{tabular}}
\end{table}

\begin{table}[!htbp]
\caption{Running time and speedup percentages for our \StrOJA{} ILP on \NLSset{} graphs, run on a single core with a 10 minute timeout. Speedup percentages are reported with respect to the fastest version of the ILP from~\cite{chen2012integer,chen2012scarcity}. A speedup of $\infty$ indicates that our ILP found an optimal solution, but every version of the previous ILP timed out. A speedup of n/a indicates that every version of both ILPs timed out.}
\label{tab:ilp-timing-flops-TEST-NLS}
\centering
{\small\setlength{\tabcolsep}{3pt}\begin{tabular}{lll|llll|llll}
\toprule
\multicolumn{3}{c}{} & \multicolumn{4}{c}{New ILP time (s)} & \multicolumn{4}{c}{New ILP Speedup} \\
Graph & n & m & \va{} & \vb{} & \vc{} & \vd{} & \va{} & \vb{} & \vc{} & \vd{} \\
\midrule
nls04 & 8 & 7 & 0.06 & 0.06 & 0.05 & 0.03 & 39.18 & 41.24 & 46.39 & 69.07 \\
nls07 & 12 & 16 & 0.12 & 0.08 & 0.28 & 0.04 & 83.19 & 88.30 & 58.33 & 94.15 \\
nls26 & 12 & 13 & 0.07 & 0.05 & 0.05 & 0.04 & 82.35 & 87.47 & 87.47 & 89.77 \\
nls05 & 15 & 13 & 1.36 & 0.16 & 0.07 & 0.05 & 34.91 & 92.25 & 96.60 & 97.61 \\
nls06 & 16 & 16 & 0.10 & 0.08 & 0.08 & 0.06 & 90.02 & 91.98 & 91.78 & 94.03 \\
nls21 & 20 & 29 & 1.02 & 0.25 & 0.16 & 0.10 & 98.26 & 99.57 & 99.73 & 99.83 \\
nls22 & 31 & 51 & 16.25 & 1.72 & 0.49 & 0.27 & ∞ & ∞ & ∞ & ∞ \\
nls03 & 48 & 47 & t & t & t & 2.17 & n/a & n/a & n/a & ∞ \\
nls01 & 50 & 59 & t & t & t & 460.83 & n/a & n/a & n/a & ∞ \\
nls13 & 52 & 60 & 38.50 & 3.15 & 2.21 & 0.91 & ∞ & ∞ & ∞ & ∞ \\
nls16 & 59 & 76 & t & t & t & 435.93 & n/a & n/a & n/a & ∞ \\
nls02 & 60 & 59 & t & t & t & 28.98 & n/a & n/a & n/a & ∞ \\
nls12 & 63 & 80 & 43.80 & 4.35 & 4.41 & 1.93 & ∞ & ∞ & ∞ & ∞ \\
nls08 & 63 & 90 & 25.04 & 3.60 & 3.63 & 2.30 & ∞ & ∞ & ∞ & ∞ \\
nls19 & 77 & 90 & t & 138.99 & 9.15 & 3.14 & n/a & ∞ & ∞ & ∞ \\
nls09 & 81 & 110 & 598.39 & 25.17 & 11.03 & 4.99 & ∞ & ∞ & ∞ & ∞ \\
nls14 & 84 & 140 & 37.78 & 9.19 & 11.00 & 8.12 & ∞ & ∞ & ∞ & ∞ \\
nls10 & 99 & 128 & t & t & 22.32 & 8.31 & n/a & n/a & ∞ & ∞ \\
nls23 & 115 & 189 & t & t & t & 212.65 & n/a & n/a & n/a & ∞ \\
nls20 & 123 & 150 & t & t & 44.43 & 14.57 & n/a & n/a & ∞ & ∞ \\
nls24 & 147 & 213 & t & t & 71.87 & 36.94 & n/a & n/a & ∞ & ∞ \\
nls15 & 172 & 340 & t & t & t & t & n/a & n/a & n/a & n/a \\
nls25 & 237 & 324 & t & t & t & 144.94 & n/a & n/a & n/a & ∞ \\
nls17 & 395 & 520 & t & t & t & t & n/a & n/a & n/a & n/a \\
nls11 & 593 & 1019 & t & t & t & t & n/a & n/a & n/a & n/a \\
nls18 & 1751 & 2383 & t & t & t & t & n/a & n/a & n/a & n/a \\
\bottomrule
\end{tabular}}
\end{table}

\begin{table}
\caption{Running time and speedup percentages for our \MinEC{} ILP on graphs from \CHMUset{} (top), \GWset{} (second from top), \EGset{} (second from bottom), and \AGset{} (bottom), run on a single core with a 10 minute timeout. Speedup percentages are reported with respect to the ILP from~\cite{chen2012integer}. A speedup of $\infty$ indicates that our ILP found an optimal solution, but the ILP of~\cite{chen2012integer} timed out. A speedup of n/a indicates that both ILPs timed out.}
\label{tab:ilp-timing-scarcity-all-but-nls}
\centering
{\small\setlength{\tabcolsep}{3pt}\begin{tabular}{lll|l|l}
\toprule
\multicolumn{3}{c}{} & \multicolumn{2}{c}{New ILP} \\
Graph & n & m & time (s) & Speedup \\
\midrule
fig10.4 & 10 & 12 & 0.03 & 57.14 \\
fig10.1 & 11 & 12 & 0.06 & 95.14 \\
ex10.8 & 12 & 14 & 0.05 & 96.14 \\
revbound & 12 & 21 & 0.04 & 99.36 \\
butterfly & 16 & 24 & 0.30 & 99.92 \\
\midrule
fig10.2 & 6 & 6 & 0.04 & 66.92 \\
fig10.1.orig & 11 & 12 & 0.03 & 97.27 \\
fig10.3 & 18 & 20 & 0.10 & ∞ \\
fig10.6 & 34 & 70 & 0.15 & ∞ \\
\midrule
2d4x2x2 & 32 & 96 & 0.09 & ∞ \\
2d3x3x2 & 36 & 135 & 0.11 & ∞ \\
2d4x2x3 & 40 & 128 & 0.21 & ∞ \\
2d3x3x3 & 45 & 180 & 0.32 & ∞ \\
2d5x5x2 & 100 & 375 & 0.75 & ∞ \\
2d5x5x3 & 125 & 500 & 11.87 & ∞ \\
2d5x5x5 & 175 & 750 & t & n/a \\
2d10x10x2 & 400 & 1500 & 3.87 & ∞ \\
2d10x10x5 & 700 & 3000 & t & n/a \\
2d10x10x10 & 1200 & 5500 & t & n/a \\
\midrule
Helmholtz & 7 & 8 & 0.03 & 82.84 \\
PropaneComb. & 27 & 29 & 0.12 & ∞ \\
Robot-6DOF & 34 & 43 & 0.05 & ∞ \\
BlackScholes & 35 & 44 & 0.07 & ∞ \\
HumanHeart & 36 & 43 & 0.06 & ∞ \\
Perceptron & 44 & 47 & 0.96 & ∞ \\
f & 46 & 61 & 0.12 & ∞ \\
Encoder & 98 & 114 & 33.15 & ∞ \\
RoeFlux-1d & 100 & 164 & 1.10 & ∞ \\
g & 102 & 155 & 7.27 & ∞ \\
RoeFlux-3d & 140 & 226 & 2.40 & ∞ \\
\bottomrule
\end{tabular}}
\end{table}

\begin{table}
\caption{Running time and speedup percentages for our \MinEC{} ILP on \NLSset{} graphs, run on a single core with a 10 minute timeout. Speedup percentages are reported with respect to the ILP from~\cite{chen2012integer,chen2012scarcity}. A speedup of $\infty$ indicates that our ILP found an optimal solution, but ILP of~\cite{chen2012integer} timed out. A speedup of n/a indicates that both ILPs timed out.}
\label{tab:ilp-timing-scarcity-TEST-NLS}
\centering
{\small\setlength{\tabcolsep}{3pt}\begin{tabular}{lll|l|l}
\toprule
\multicolumn{3}{c}{} & \multicolumn{2}{c}{New ILP} \\
Graph & n & m & time (s) & Speedup \\
\midrule
nls04 & 8 & 7 & 0.31 & -135.88 \\
nls07 & 12 & 16 & 0.05 & 97.45 \\
nls26 & 12 & 13 & 0.04 & 95.40 \\
nls05 & 15 & 13 & 0.06 & 99.99 \\
nls06 & 16 & 16 & 0.04 & 99.39 \\
nls21 & 20 & 29 & 0.05 & ∞ \\
nls22 & 31 & 51 & 0.07 & ∞ \\
nls03 & 48 & 47 & 0.37 & ∞ \\
nls01 & 50 & 59 & 0.46 & ∞ \\
nls13 & 52 & 60 & 0.07 & ∞ \\
nls16 & 59 & 76 & 0.54 & ∞ \\
nls02 & 60 & 59 & 0.46 & ∞ \\
nls12 & 63 & 80 & 0.10 & ∞ \\
nls08 & 63 & 90 & 0.10 & ∞ \\
nls19 & 77 & 90 & 0.09 & ∞ \\
nls09 & 81 & 110 & 0.40 & ∞ \\
nls14 & 84 & 140 & 0.12 & ∞ \\
nls10 & 99 & 128 & 0.42 & ∞ \\
nls23 & 115 & 189 & 0.26 & ∞ \\
nls20 & 123 & 150 & 0.25 & ∞ \\
nls24 & 147 & 213 & 0.24 & ∞ \\
nls15 & 172 & 340 & t & n/a \\
nls25 & 237 & 324 & 0.72 & ∞ \\
nls17 & 395 & 520 & 1.52 & ∞ \\
nls11 & 593 & 1019 & 5.55 & ∞ \\
nls18 & 1751 & 2383 & 46.59 & ∞ \\
\bottomrule
\end{tabular}}
\end{table}

\newpage
\subsection{Optimal Objective Values}

We report the optimal objective values found by ILPs. The values in~\Cref{tab:ilp-vals-flops-all-but-NLS,tab:ilp-vals-flops-TEST-NLS,tab:ilp-vals-scarcity-all-but-NLS,tab:ilp-vals-scarcity-TEST-NLS} were obtained using Gurobi version 13, single threaded, on an Intel(R) Xeon(R) Gold 6230 CPU @ 2.10GHz with a 10 minute timeout.
For graphs not solved optimally during the time limit, we attempted to obtain optimal solutions using longer timeouts and multiple threads.
This strategy resulted in optimal solutions to~\StrOJA{} for an additional four graphs.

\begin{itemize}
    \item The nls15 graph from dataset~\NLSset{} has an optimal objective value of $388$. This was identified using variant~\vd{} of our ILP with 4 threads. The computation required 1260.267 seconds.
    \item The nls17 graph from dataset~\NLSset{} has an optimal objective value of $487$. This was identified using variant~\vd{} of our ILP with 4 threads. The computation required 662.000 seconds.
    \item The 2d10x10x2 graph from dataset~\EGset{} has an optimal objective value of $9000$. This was identified using variant~\vb{} of our ILP with 12 threads. The computation required 1230.963 seconds.
    \item The f graph from dataset~\AGset{} has an optimal objective value of $71$. This was identified using variant~\vd{} of our ILP with 4 threads. The computation required 162.068 seconds.
\end{itemize}

This strategy also resulted in optimal solutions to~\MinEC{} for an additional two graphs.
\begin{itemize}
    \item The nls15 graph from dataset~\NLSset{} has an optimal objective value of $72$. This was identified using our ILP on 8 threads. The computation required 1537.630 seconds.
    \item The 2d10x10x5 graph from dataset~\EGset{} has an optimal objective value of $625$. This was identified using our ILP on 1 thread. The computation required 4905.374 seconds.
\end{itemize}

\begin{table}
\caption{Minimum costs achieved by versions of our~\StrOJA{} ILP on graphs from datasets~\CHMUset{} (top),~\GWset{} (second from top),~\EGset{} (second from bottom), and~\AGset{} (bottom), run on a single core with a 10 minute timeout. A cost 't' indicates that the method timed out before finding an optimal solution.}
\label{tab:ilp-vals-flops-all-but-NLS}
\centering
{\small\setlength{\tabcolsep}{5pt}\begin{tabular}{lll|llll}
\toprule
\multicolumn{3}{c}{} & \multicolumn{4}{c}{New ILP Objective} \\
Graph & n & m & va & vb & vc & vd \\
\midrule
fig10.4 & 10 & 12 & 18 & 18 & 18 & 18 \\
fig10.1 & 11 & 12 & 15 & 15 & 15 & 15 \\
ex10.8 & 12 & 14 & 22 & 22 & 22 & 22 \\
revbound & 12 & 21 & 10 & 10 & 10 & 10 \\
butterfly & 16 & 24 & 48 & 48 & 48 & 48 \\
\midrule
fig10.2 & 6 & 6 & 4 & 4 & 4 & 4 \\
fig10.1.orig & 11 & 12 & 15 & 15 & 15 & 15 \\
fig10.3 & 18 & 20 & 26 & 26 & 26 & 26 \\
fig10.6 & 34 & 70 & t & 132 & t & t \\
\midrule
2d4x2x2 & 32 & 96 & 352 & 352 & 352 & 352 \\
2d3x3x2 & 36 & 135 & 630 & 630 & 630 & 630 \\
2d4x2x3 & 40 & 128 & 608 & 608 & 608 & 608 \\
2d3x3x3 & 45 & 180 & 1035 & 1035 & 1035 & 1035 \\
2d5x5x2 & 100 & 375 & 2250 & 2250 & 2250 & 2250 \\
2d5x5x3 & 125 & 500 & t & t & t & t \\
2d5x5x5 & 175 & 750 & t & t & t & t \\
2d10x10x2 & 400 & 1500 & t & t & t & t \\
2d10x10x5 & 700 & 3000 & t & t & t & t \\
2d10x10x10 & 1200 & 5500 & t & t & t & t \\
\midrule
Helmholtz & 7 & 8 & 5 & 5 & 5 & 5 \\
PropaneComb. & 27 & 29 & 25 & 25 & 25 & 25 \\
Robot-6DOF & 34 & 43 & t & t & 41 & 41 \\
BlackScholes & 35 & 44 & t & t & 42 & 42 \\
HumanHeart & 36 & 43 & t & t & 42 & 42 \\
Perceptron & 44 & 47 & t & t & 43 & 43 \\
f & 46 & 61 & t & t & t & t \\
Encoder & 98 & 114 & t & t & t & 104 \\
RoeFlux-1d & 100 & 164 & t & t & t & t \\
g & 102 & 155 & t & t & t & t \\
RoeFlux-3d & 140 & 226 & t & t & t & t \\
\bottomrule
\end{tabular}}
\end{table}

\begin{table}
\caption{Minimum costs achieved by versions of our~\StrOJA{{}} ILP on~\NLSset{{}} graphs, run on a single core with a 10 minute timeout. A cost 't' indicates that the method timed out before finding an optimal solution.}
\label{tab:ilp-vals-flops-TEST-NLS}
\centering
{\small\setlength{\tabcolsep}{5pt}\begin{tabular}{lll|llll}
\toprule
\multicolumn{3}{c}{} & \multicolumn{4}{c}{New ILP Objective} \\
Graph & n & m & va & vb & vc & vd \\
\midrule
nls04 & 8 & 7 & 5 & 5 & 5 & 5 \\
nls07 & 12 & 16 & 10 & 10 & 10 & 10 \\
nls26 & 12 & 13 & 10 & 10 & 10 & 10 \\
nls05 & 15 & 13 & 10 & 10 & 10 & 10 \\
nls06 & 16 & 16 & 12 & 12 & 12 & 12 \\
nls21 & 20 & 29 & 28 & 28 & 28 & 28 \\
nls22 & 31 & 51 & 53 & 53 & 53 & 53 \\
nls03 & 48 & 47 & t & t & t & 101 \\
nls01 & 50 & 59 & t & t & t & 147 \\
nls13 & 52 & 60 & 50 & 50 & 50 & 50 \\
nls16 & 59 & 76 & t & t & t & 154 \\
nls02 & 60 & 59 & t & t & t & 147 \\
nls12 & 63 & 80 & 70 & 70 & 70 & 70 \\
nls08 & 63 & 90 & 75 & 75 & 75 & 75 \\
nls19 & 77 & 90 & t & 75 & 75 & 75 \\
nls09 & 81 & 110 & 99 & 99 & 99 & 99 \\
nls14 & 84 & 140 & 120 & 120 & 120 & 120 \\
nls10 & 99 & 128 & t & t & 112 & 112 \\
nls23 & 115 & 189 & t & t & t & 226 \\
nls20 & 123 & 150 & t & t & 135 & 135 \\
nls24 & 147 & 213 & t & t & 189 & 189 \\
nls15 & 172 & 340 & t & t & t & t \\
nls25 & 237 & 324 & t & t & t & 299 \\
nls17 & 395 & 520 & t & t & t & t \\
nls11 & 593 & 1019 & t & t & t & t \\
nls18 & 1751 & 2383 & t & t & t & t \\
\bottomrule
\end{tabular}}
\end{table}

\begin{table}
\caption{Objective values achieved by our~\MinEC{} ILP, and that of~\cite{chen2012integer}, on graphs from datasets~\CHMUset{} (top),~\GWset{} (second from top),~\EGset{} (second from bottom), and~\AGset{} (bottom), run on a single core with a 10 minute timeout. A value 't' indicates that the method timed out before finding an optimal solution.}
\label{tab:ilp-vals-scarcity-all-but-NLS}
\centering
{\small\setlength{\tabcolsep}{5pt}\begin{tabular}{lll|ll}
\toprule
\multicolumn{3}{c}{} & \multicolumn{2}{c}{Minimum Representation} \\
Graph & n & m & Old ILP & New ILP \\
\midrule
fig10.4 & 10 & 12 & 11 & 11 \\
fig10.1 & 11 & 12 & 7 & 7 \\
ex10.8 & 12 & 14 & 11 & 11 \\
revbound & 12 & 21 & 1 & 1 \\
butterfly & 16 & 24 & 16 & 16 \\
\midrule
fig10.2 & 6 & 6 & 1 & 1 \\
fig10.1.orig & 11 & 12 & 7 & 7 \\
fig10.3 & 18 & 20 & t & 11 \\
fig10.6 & 34 & 70 & t & 20 \\
\midrule
2d4x2x2 & 32 & 96 & t & 64 \\
2d3x3x2 & 36 & 135 & t & 81 \\
2d4x2x3 & 40 & 128 & t & 64 \\
2d3x3x3 & 45 & 180 & t & 81 \\
2d5x5x2 & 100 & 375 & t & 375 \\
2d5x5x3 & 125 & 500 & t & 500 \\
2d5x5x5 & 175 & 750 & t & t \\
2d10x10x2 & 400 & 1500 & t & 1500 \\
2d10x10x5 & 700 & 3000 & t & t \\
2d10x10x10 & 1200 & 5500 & t & t \\
\midrule
Helmholtz & 7 & 8 & 1 & 1 \\
PropaneComb. & 27 & 29 & t & 15 \\
Robot-6DOF & 34 & 43 & t & 4 \\
BlackScholes & 35 & 44 & t & 5 \\
HumanHeart & 36 & 43 & t & 8 \\
Perceptron & 44 & 47 & t & 8 \\
f & 46 & 61 & t & 6 \\
Encoder & 98 & 114 & t & 16 \\
RoeFlux-1d & 100 & 164 & t & 12 \\
g & 102 & 155 & t & 96 \\
RoeFlux-3d & 140 & 226 & t & 12 \\
\bottomrule
\end{tabular}}
\end{table}

\begin{table}
\caption{Objective values achieved by our~\MinEC{{}} ILP, and that of~\cite{{chen2012integer}}, on \NLSset{} graphs, run on a single core with a 10 minute timeout. A value 't' indicates that the method timed out before finding an optimal solution.}
\label{tab:ilp-vals-scarcity-TEST-NLS}
\centering
{\small\setlength{\tabcolsep}{5pt}\begin{tabular}{lll|ll}
\toprule
\multicolumn{3}{c}{} & \multicolumn{2}{c}{Minimum Representation} \\
Graph & n & m & Old ILP & New ILP \\
\midrule
nls04 & 8 & 7 & 3 & 3 \\
nls07 & 12 & 16 & 4 & 4 \\
nls26 & 12 & 13 & 4 & 4 \\
nls05 & 15 & 13 & 4 & 4 \\
nls06 & 16 & 16 & 8 & 8 \\
nls21 & 20 & 29 & t & 6 \\
nls22 & 31 & 51 & t & 12 \\
nls03 & 48 & 47 & t & 16 \\
nls01 & 50 & 59 & t & 30 \\
nls13 & 52 & 60 & t & 20 \\
nls16 & 59 & 76 & t & 38 \\
nls02 & 60 & 59 & t & 20 \\
nls12 & 63 & 80 & t & 30 \\
nls08 & 63 & 90 & t & 45 \\
nls19 & 77 & 90 & t & 30 \\
nls09 & 81 & 110 & t & 44 \\
nls14 & 84 & 140 & t & 80 \\
nls10 & 99 & 128 & t & 48 \\
nls23 & 115 & 189 & t & 50 \\
nls20 & 123 & 150 & t & 45 \\
nls24 & 147 & 213 & t & 94 \\
nls15 & 172 & 340 & t & t \\
nls25 & 237 & 324 & t & 117 \\
nls17 & 395 & 520 & t & 163 \\
nls11 & 593 & 1019 & t & 495 \\
nls18 & 1751 & 2383 & t & 711 \\
\bottomrule
\end{tabular}}
\end{table}


\newpage
\subsection{Computational Results for \StrOJA{} Heuristics}

We ran the \ag{} software using default settings described in~\cite{lohoff2024optimizing} (50 MCTS simulations; 5000 episodes).
These tests were run on a NVIDIA RTX A6000.
We acknowledge that the software was designed for a more general setting than that considered in this paper. Namely, in this paper we assume that all values associated with edges in a linearized computational graph are scalars. 
We also present computational results for the other heuristics considered in this paper.
We implemented these algorithms in C++, and ran tests on a Intel(R) Xeon(R) Gold 6230 CPU @ 2.10GHz.

\begin{table}[!htbp]
\caption{\StrOJA{} results from \ag{} on graphs from datasets~\CHMUset{} (top),~\GWset{} (second from top),~\EGset{} (second from bottom), and~\AGset{} (bottom). The 'obj' column records the lowest cost identified via~\fo{},~\re{},~\greedymark{}, and learning. The 'learned obj' column records the lowest cost obtained only through learning. We were unable to obtain solutions on some graphs. These graphs have '-1' entries. One running time is marked with an asterisk; this is to indicate that the software crashed, rather than terminating. In cases where the optimal solution is not known, we mark the ratio columns with '?'.}
\label{table:alphagrad-results-all-but-NLS}
\centering
{\small\setlength{\tabcolsep}{3pt}\begin{tabular}{lll|ll|ll|l}
\toprule
 & n & m & obj & ratio & learned obj & learned ratio & time \\
graph &  &  &  &  &  &  &  \\
\midrule
fig10.4 & 10 & 12 & 18 & 1.00 & 18 & 1.00 & 8m 43s \\
fig10.1 & 11 & 12 & 15 & 1.00 & 15 & 1.00 & 12m 46s \\
ex10.8 & 12 & 14 & 22 & 1.00 & 22 & 1.00 & 12m 54s \\
revbound & 12 & 21 & 10 & 1.00 & 11 & 1.10 & 24m 49s \\
butterfly & 16 & 24 & 48 & 1.00 & 48 & 1.00 & 21m 13s \\
\midrule
fig10.2 & 6 & 6 & 4 & 1.00 & 4 & 1.00 & 10m 9s \\
fig10.1.orig & 11 & 12 & 15 & 1.00 & 15 & 1.00 & 12m 52s \\
fig10.3 & 18 & 20 & 26 & 1.00 & 26 & 1.00 & 28m 5s \\
fig10.6 & 34 & 70 & 148 & 1.12 & 150 & 1.14 & 1h 36m 13s \\
\midrule
2d4x2x2 & 32 & 96 & 352 & 1.00 & 352 & 1.00 & 54m 35s \\
2d3x3x2 & 36 & 135 & 630 & 1.00 & 630 & 1.00 & 1h 7m 29s \\
2d4x2x3 & 40 & 128 & 608 & 1.00 & 648 & 1.07 & 1h 38m 4s \\
2d3x3x3 & 45 & 180 & 1035 & 1.00 & 1165 & 1.13 & 1h 58m 31s \\
2d5x5x2 & 100 & 375 & 2250 & 1.00 & 2485 & 1.10 & 43m 4s* \\
2d5x5x3 & 125 & 500 & 4875 & ? & 6820 & ? & 11h 59m 35s (timeout) \\
2d5x5x5 & 175 & 750 & 11125 & ? & 25923 & ? & 11h 58m 35s (timeout) \\
2d10x10x2 & 400 & 1500 & 9000 & 1.00 & 10370 & 1.15 & 11h 53m 35s (timeout) \\
2d10x10x5 & 700 & 3000 & -1 & ? & -1 & ? & -1 \\
2d10x10x10 & 1200 & 5500 & -1 & ? & -1 & ? & -1 \\
\midrule
Helmholtz & 7 & 8 & 5 & 1.00 & 5 & 1.00 & 11m 44s \\
PropaneComb. & 27 & 29 & 25 & 1.00 & 25 & 1.00 & 38m 26s \\
Robot-6DOF & 34 & 43 & 41 & 1.00 & 41 & 1.00 & 1h 51m 17s \\
BlackScholes & 35 & 44 & 42 & 1.00 & 45 & 1.07 & 1h 54m 46s \\
HumanHeart & 36 & 43 & 42 & 1.00 & 42 & 1.00 & 1h 37m 55s \\
Perceptron & 44 & 47 & 46 & 1.07 & 49 & 1.14 & 2h 33m 33s \\
f & 46 & 61 & 76 & 1.07 & 78 & 1.10 & 3h 34m 12s \\
Encoder & 98 & 114 & 113 & 1.09 & 136 & 1.31 & 11h 59m 21s (timeout) \\
RoeFlux-1d & 100 & 164 & 259 & ? & 327 & ? & 11h 59m 21s (timeout) \\
g & 102 & 155 & 376 & ? & 376 & ? & 11h 59m 21s (timeout) \\
RoeFlux-3d & 140 & 226 & 349 & ? & 482 & ? & 11h 58m 35s (timeout) \\
\bottomrule
\end{tabular}}
\end{table}

\begin{table}[!htbp]
\caption{\StrOJA{} results from \ag{} on graphs from dataset~\NLSset{}. The 'obj' column records the lowest cost identified via~\fo{},~\re{},~\greedymark{}, and learning. The 'learned obj' column records the lowest cost obtained only through learning. We were unable to obtain solutions on some graphs. These graphs have '-1' entries. In cases where the optimal solution is not known, we mark the ratio columns with '?'.}
\label{table:alphagrad-results-TEST-NLS}
\centering
{\small\setlength{\tabcolsep}{3pt}\begin{tabular}{lll|ll|ll|l}
\toprule
 & n & m & obj & ratio & learned obj & learned ratio & time \\
graph &  &  &  &  &  &  &  \\
\midrule
nls01 & 50 & 59 & 148 & 1.01 & 148 & 1.01 & 2h 23m 54s \\
nls02 & 60 & 59 & 148 & 1.01 & 148 & 1.01 & 3h 46m 23s \\
nls03 & 48 & 47 & 101 & 1.00 & 101 & 1.00 & 2h 33m 5s \\
nls04 & 8 & 7 & 5 & 1.00 & 5 & 1.00 & 10m 24s \\
nls05 & 15 & 13 & 10 & 1.00 & 10 & 1.00 & 23m 13s \\
nls06 & 16 & 16 & 12 & 1.00 & 12 & 1.00 & 20m 54s \\
nls07 & 12 & 16 & 10 & 1.00 & 10 & 1.00 & 19m 55s \\
nls08 & 63 & 90 & 75 & 1.00 & 76 & 1.01 & 4h 59m 29s \\
nls09 & 81 & 110 & 99 & 1.00 & 105 & 1.06 & 11h 1m 15s \\
nls10 & 99 & 128 & 112 & 1.00 & 120 & 1.07 & 11h 59m 35s (timeout) \\
nls11 & 593 & 1019 & -1 & ? & -1 & ? & -1 \\
nls12 & 63 & 80 & 70 & 1.00 & 70 & 1.00 & 5h 34m 24s \\
nls13 & 52 & 60 & 50 & 1.00 & 50 & 1.00 & 3h 50m 27s \\
nls14 & 84 & 140 & 120 & 1.00 & 125 & 1.04 & 9h 32m 29s \\
nls15 & 172 & 340 & 588 & 1.52 & 706 & 1.82 & 11h 57m 51s (timeout) \\
nls16 & 59 & 76 & 163 & 1.06 & 163 & 1.06 & 3h 56m 15s \\
nls17 & 395 & 520 & -1 & - & -1 & - & -1 \\
nls18 & 1751 & 2383 & -1 & ? & -1 & ? & -1 \\
nls19 & 77 & 90 & 75 & 1.00 & 75 & 1.00 & 9h 37m 1s \\
nls20 & 123 & 150 & 135 & 1.00 & 142 & 1.05 & 11h 59m 6s (timeout) \\
nls21 & 20 & 29 & 28 & 1.00 & 28 & 1.00 & 41m 18s \\
nls22 & 31 & 51 & 53 & 1.00 & 53 & 1.00 & 1h 33m 4s \\
nls23 & 115 & 189 & 261 & 1.15 & 261 & 1.15 & 11h 59m 20s (timeout) \\
nls24 & 147 & 213 & 189 & 1.00 & 214 & 1.13 & 11h 58m 5s (timeout) \\
nls25 & 237 & 324 & 299 & 1.00 & 354 & 1.18 & 11h 55m 6s (timeout) \\
nls26 & 12 & 13 & 10 & 1.00 & 10 & 1.00 & 17m 44s \\
\bottomrule
\end{tabular}}
\end{table}

\begin{sidewaystable}
\centering
\caption{\label{table:flops-approximations-all-but-NLS-sideways1}\StrOJA{} heuristic performances on graphs from \CHMUset{} (top), \GWset{} (second from top), \EGset{} (second from bottom), and \AGset{} (bottom). In cases where the optimal objective value is unknown, the ratio is marked '?'.}
\centering
{\footnotesize\setlength{\tabcolsep}{3pt}\begin{tabular}{lll|lll|lll|lll|lll|lll}
\toprule
\multicolumn{3}{c}{} & \multicolumn{3}{c}{\fo{}} & \multicolumn{3}{c}{\re{}} & \multicolumn{3}{c}{\greedymark{}} & \multicolumn{3}{c}{\relmark{}} & \multicolumn{3}{c}{\mtmr{}} \\
Graph & n & m & obj & ratio & time (s) & obj & ratio & time (s) & obj & ratio & time (s) & obj & ratio & time (s) & obj & ratio & time (s) \\
\midrule
fig10.4 & 10 & 12 & 22 & 1.22 & 0.000 & 18 & 1.00 & 0.000 & 22 & 1.22 & 0.000 & 18 & 1.00 & 0.000 & 18 & 1.00 & 0.000 \\
fig10.1 & 11 & 12 & 20 & 1.33 & 0.000 & 18 & 1.20 & 0.000 & 16 & 1.07 & 0.000 & 15 & 1.00 & 0.000 & 23 & 1.53 & 0.000 \\
ex10.8 & 12 & 14 & 28 & 1.27 & 0.000 & 24 & 1.09 & 0.000 & 26 & 1.18 & 0.000 & 22 & 1.00 & 0.000 & 28 & 1.27 & 0.000 \\
revbound & 12 & 21 & 10 & 1.00 & 0.000 & 19 & 1.90 & 0.000 & 10 & 1.00 & 0.000 & 10 & 1.00 & 0.000 & 10 & 1.00 & 0.000 \\
butterfly & 16 & 24 & 48 & 1.00 & 0.000 & 48 & 1.00 & 0.000 & 48 & 1.00 & 0.000 & 48 & 1.00 & 0.000 & 48 & 1.00 & 0.000 \\
\midrule
fig10.2 & 6 & 6 & 5 & 1.25 & 0.000 & 5 & 1.25 & 0.000 & 4 & 1.00 & 0.000 & 4 & 1.00 & 0.000 & 6 & 1.50 & 0.000 \\
fig10.1.orig & 11 & 12 & 20 & 1.33 & 0.000 & 18 & 1.20 & 0.000 & 16 & 1.07 & 0.000 & 15 & 1.00 & 0.000 & 23 & 1.53 & 0.000 \\
fig10.3 & 18 & 20 & 42 & 1.62 & 0.000 & 30 & 1.15 & 0.000 & 30 & 1.15 & 0.000 & 36 & 1.38 & 0.000 & 30 & 1.15 & 0.000 \\
fig10.6 & 34 & 70 & 259 & 1.96 & 0.000 & 202 & 1.53 & 0.000 & 148 & 1.12 & 0.000 & 135 & 1.02 & 0.000 & 188 & 1.42 & 0.000 \\
\midrule
2d4x2x2 & 32 & 96 & 352 & 1.00 & 0.000 & 352 & 1.00 & 0.000 & 352 & 1.00 & 0.000 & 352 & 1.00 & 0.000 & 416 & 1.18 & 0.000 \\
2d3x3x2 & 36 & 135 & 630 & 1.00 & 0.000 & 630 & 1.00 & 0.000 & 630 & 1.00 & 0.000 & 630 & 1.00 & 0.000 & 770 & 1.22 & 0.000 \\
2d4x2x3 & 40 & 128 & 608 & 1.00 & 0.000 & 608 & 1.00 & 0.000 & 648 & 1.07 & 0.000 & 608 & 1.00 & 0.000 & 676 & 1.11 & 0.000 \\
2d3x3x3 & 45 & 180 & 1035 & 1.00 & 0.000 & 1035 & 1.00 & 0.000 & 1179 & 1.14 & 0.000 & 1035 & 1.00 & 0.000 & 1274 & 1.23 & 0.000 \\
2d5x5x2 & 100 & 375 & 2250 & 1.00 & 0.000 & 2250 & 1.00 & 0.000 & 2250 & 1.00 & 0.000 & 2250 & 1.00 & 0.000 & 2865 & 1.27 & 0.002 \\
2d5x5x3 & 125 & 500 & 4875 & ? & 0.000 & 4875 & ? & 0.000 & 5475 & ? & 0.000 & 4875 & ? & 0.002 & 6130 & ? & 0.006 \\
2d5x5x5 & 175 & 750 & 11125 & ? & 0.001 & 11125 & ? & 0.001 & 14225 & ? & 0.001 & 11125 & ? & 0.008 & 16984 & ? & 0.018 \\
2d10x10x2 & 400 & 1500 & 9000 & 1.00 & 0.000 & 9000 & 1.00 & 0.000 & 9000 & 1.00 & 0.001 & 9000 & 1.00 & 0.009 & 11490 & 1.28 & 0.024 \\
2d10x10x5 & 700 & 3000 & 71500 & ? & 0.014 & 71500 & ? & 0.014 & 77700 & ? & 0.011 & 71500 & ? & 0.171 & 112601 & ? & 0.439 \\
2d10x10x10 & 1200 & 5500 & 299500 & ? & 0.072 & 299500 & ? & 0.082 & 512000 & ? & 0.051 & 299500 & ? & 2.610 & 937128 & ? & 4.408 \\
\midrule
Helmholtz & 7 & 8 & 5 & 1.00 & 0.000 & 6 & 1.20 & 0.000 & 5 & 1.00 & 0.000 & 5 & 1.00 & 0.000 & 5 & 1.00 & 0.000 \\
PropaneComb. & 27 & 29 & 63 & 2.52 & 0.000 & 25 & 1.00 & 0.000 & 38 & 1.52 & 0.000 & 25 & 1.00 & 0.000 & 25 & 1.00 & 0.000 \\
Robot-6DOF & 34 & 43 & 56 & 1.37 & 0.000 & 41 & 1.00 & 0.000 & 49 & 1.20 & 0.000 & 44 & 1.07 & 0.000 & 41 & 1.00 & 0.000 \\
BlackScholes & 35 & 44 & 108 & 2.57 & 0.000 & 42 & 1.00 & 0.000 & 50 & 1.19 & 0.000 & 48 & 1.14 & 0.000 & 45 & 1.07 & 0.000 \\
HumanHeart & 36 & 43 & 59 & 1.40 & 0.000 & 42 & 1.00 & 0.000 & 51 & 1.21 & 0.000 & 42 & 1.00 & 0.000 & 42 & 1.00 & 0.000 \\
Perceptron & 44 & 47 & 162 & 3.77 & 0.000 & 46 & 1.07 & 0.000 & 48 & 1.12 & 0.000 & 44 & 1.02 & 0.000 & 43 & 1.00 & 0.000 \\
f & 46 & 61 & 140 & 1.97 & 0.000 & 76 & 1.07 & 0.000 & 78 & 1.10 & 0.000 & 76 & 1.07 & 0.000 & 89 & 1.25 & 0.000 \\
Encoder & 98 & 114 & 789 & 7.59 & 0.000 & 113 & 1.09 & 0.000 & 134 & 1.29 & 0.000 & 108 & 1.04 & 0.001 & 104 & 1.00 & 0.000 \\
RoeFlux-1d & 100 & 164 & 558 & ? & 0.000 & 259 & ? & 0.000 & 341 & ? & 0.000 & 309 & ? & 0.001 & 284 & ? & 0.000 \\
g & 102 & 155 & 536 & ? & 0.000 & 447 & ? & 0.000 & 377 & ? & 0.000 & 401 & ? & 0.001 & 387 & ? & 0.000 \\
RoeFlux-3d & 140 & 226 & 818 & ? & 0.000 & 349 & ? & 0.000 & 435 & ? & 0.000 & 371 & ? & 0.004 & 372 & ? & 0.002 \\
\bottomrule
\end{tabular}}
\end{sidewaystable}
\begin{sidewaystable}
\centering
\caption{\label{table:flops-approximations-all-but-NLS-sideways2}\StrOJA{} heuristic performances on graphs from \CHMUset{} (top), \GWset{} (second from top), \EGset{} (second from bottom), and \AGset{} (bottom). In cases where the optimal objective value is unknown, the ratio is marked '?'.}
\centering
{\footnotesize\setlength{\tabcolsep}{3pt}\begin{tabular}{lll|lll|lll|lll|lll|lll}
\toprule
\multicolumn{3}{c}{} & \multicolumn{3}{c}{\lmmd{}} & \multicolumn{3}{c}{\pathlen{}} & \multicolumn{3}{c}{\moshort{}} & \multicolumn{3}{c}{\sa{}} & \multicolumn{3}{c}{\diffmincost{}} \\
Graph & n & m & obj & ratio & time (s) & obj & ratio & time (s) & obj & ratio & time (s) & obj & ratio & time (s) & obj & ratio & time (s) \\
\midrule
fig10.4 & 10 & 12 & 22 & 1.22 & 0.000 & 22 & 1.22 & 0.000 & 22 & 1.22 & 0.000 & 18 & 1.00 & 0.001 & 18 & 1.00 & 0.000 \\
fig10.1 & 11 & 12 & 16 & 1.07 & 0.000 & 22 & 1.47 & 0.000 & 15 & 1.00 & 0.000 & 15 & 1.00 & 0.002 & 19 & 1.27 & 0.000 \\
ex10.8 & 12 & 14 & 26 & 1.18 & 0.000 & 28 & 1.27 & 0.000 & 28 & 1.27 & 0.000 & 22 & 1.00 & 0.005 & 24 & 1.09 & 0.000 \\
revbound & 12 & 21 & 10 & 1.00 & 0.000 & 10 & 1.00 & 0.000 & 10 & 1.00 & 0.000 & 11 & 1.10 & 0.008 & 19 & 1.90 & 0.000 \\
butterfly & 16 & 24 & 48 & 1.00 & 0.000 & 48 & 1.00 & 0.000 & 48 & 1.00 & 0.000 & 48 & 1.00 & 0.000 & 48 & 1.00 & 0.000 \\
\midrule
fig10.2 & 6 & 6 & 4 & 1.00 & 0.000 & 6 & 1.50 & 0.000 & 5 & 1.25 & 0.000 & 4 & 1.00 & 0.001 & 5 & 1.25 & 0.000 \\
fig10.1.orig & 11 & 12 & 16 & 1.07 & 0.000 & 22 & 1.47 & 0.000 & 15 & 1.00 & 0.000 & 15 & 1.00 & 0.004 & 18 & 1.20 & 0.000 \\
fig10.3 & 18 & 20 & 30 & 1.15 & 0.000 & 46 & 1.77 & 0.000 & 30 & 1.15 & 0.000 & 26 & 1.00 & 0.006 & 30 & 1.15 & 0.000 \\
fig10.6 & 34 & 70 & 145 & 1.10 & 0.000 & 244 & 1.85 & 0.002 & 132 & 1.00 & 0.000 & 143 & 1.08 & 0.052 & 202 & 1.53 & 0.000 \\
\midrule
2d4x2x2 & 32 & 96 & 352 & 1.00 & 0.000 & 352 & 1.00 & 0.001 & 352 & 1.00 & 0.000 & 352 & 1.00 & 0.032 & 352 & 1.00 & 0.000 \\
2d3x3x2 & 36 & 135 & 630 & 1.00 & 0.000 & 630 & 1.00 & 0.001 & 630 & 1.00 & 0.000 & 630 & 1.00 & 0.044 & 630 & 1.00 & 0.000 \\
2d4x2x3 & 40 & 128 & 648 & 1.07 & 0.000 & 648 & 1.07 & 0.002 & 608 & 1.00 & 0.000 & 608 & 1.00 & 0.116 & 608 & 1.00 & 0.000 \\
2d3x3x3 & 45 & 180 & 1179 & 1.14 & 0.000 & 1179 & 1.14 & 0.006 & 1035 & 1.00 & 0.000 & 1035 & 1.00 & 0.197 & 1035 & 1.00 & 0.000 \\
2d5x5x2 & 100 & 375 & 2250 & 1.00 & 0.002 & 2250 & 1.00 & 0.039 & 2250 & 1.00 & 0.000 & 2250 & 1.00 & 1.519 & 2250 & 1.00 & 0.003 \\
2d5x5x3 & 125 & 500 & 5475 & ? & 0.006 & 5475 & ? & 0.116 & 4875 & ? & 0.001 & 4875 & ? & 4.521 & 4875 & ? & 0.005 \\
2d5x5x5 & 175 & 750 & 15525 & ? & 0.021 & 14225 & ? & 0.460 & 11125 & ? & 0.002 & 11125 & ? & 14.084 & 11125 & ? & 0.013 \\
2d10x10x2 & 400 & 1500 & 9000 & 1.00 & 0.040 & 9000 & 1.00 & 2.192 & 9000 & 1.00 & 0.006 & 9000 & 1.00 & 7.225 & 9000 & 1.00 & 0.038 \\
2d10x10x5 & 700 & 3000 & 80500 & ? & 0.627 & 77700 & ? & 25.757 & 71500 & ? & 0.028 & 71500 & ? & 144.387 & 71500 & ? & 0.200 \\
2d10x10x10 & 1200 & 5500 & 560000 & ? & 6.600 & 512000 & ? & 179.261 & 299500 & ? & 0.108 & 299500 & ? & 1535.214 & 299500 & ? & 0.825 \\
\midrule
Helmholtz & 7 & 8 & 5 & 1.00 & 0.000 & 6 & 1.20 & 0.000 & 5 & 1.00 & 0.000 & 5 & 1.00 & 0.000 & 6 & 1.20 & 0.000 \\
PropaneComb. & 27 & 29 & 25 & 1.00 & 0.000 & 25 & 1.00 & 0.000 & 25 & 1.00 & 0.000 & 25 & 1.00 & 0.007 & 25 & 1.00 & 0.000 \\
Robot-6DOF & 34 & 43 & 41 & 1.00 & 0.000 & 49 & 1.20 & 0.002 & 41 & 1.00 & 0.000 & 41 & 1.00 & 0.019 & 41 & 1.00 & 0.000 \\
BlackScholes & 35 & 44 & 42 & 1.00 & 0.000 & 53 & 1.26 & 0.002 & 42 & 1.00 & 0.000 & 43 & 1.02 & 0.016 & 42 & 1.00 & 0.000 \\
HumanHeart & 36 & 43 & 42 & 1.00 & 0.000 & 42 & 1.00 & 0.002 & 48 & 1.14 & 0.000 & 42 & 1.00 & 0.020 & 42 & 1.00 & 0.000 \\
Perceptron & 44 & 47 & 43 & 1.00 & 0.000 & 87 & 2.02 & 0.008 & 52 & 1.21 & 0.000 & 47 & 1.09 & 0.028 & 46 & 1.07 & 0.000 \\
f & 46 & 61 & 71 & 1.00 & 0.000 & 150 & 2.11 & 0.010 & 76 & 1.07 & 0.000 & 74 & 1.04 & 0.053 & 76 & 1.07 & 0.000 \\
Encoder & 98 & 114 & 104 & 1.00 & 0.000 & 420 & 4.04 & 0.081 & 127 & 1.22 & 0.000 & 119 & 1.14 & 0.150 & 113 & 1.09 & 0.002 \\
RoeFlux-1d & 100 & 164 & 251 & ? & 0.001 & 654 & ? & 0.110 & 267 & ? & 0.000 & 261 & ? & 0.285 & 259 & ? & 0.004 \\
g & 102 & 155 & 346 & ? & 0.000 & 475 & ? & 0.074 & 441 & ? & 0.000 & 362 & ? & 0.296 & 447 & ? & 0.003 \\
RoeFlux-3d & 140 & 226 & 370 & ? & 0.004 & 1728 & ? & 0.297 & 357 & ? & 0.000 & 396 & ? & 0.447 & 349 & ? & 0.008 \\
\bottomrule
\end{tabular}}
\end{sidewaystable}
\begin{sidewaystable}
\centering
\caption{\label{table:flops-approximations-all-but-NLS-sideways3}\StrOJA{} heuristic performances on graphs from \CHMUset{} (top), \GWset{} (second from top), \EGset{} (second from bottom), and \AGset{} (bottom). In cases where the optimal objective value is unknown, the ratio is marked '?'.}
\centering
{\footnotesize\setlength{\tabcolsep}{3pt}\begin{tabular}{lll|lll|lll|lll}
\toprule
\multicolumn{3}{c}{} & \multicolumn{3}{c}{\edgereduction{}} & \multicolumn{3}{c}{\markanddegree{}} & \multicolumn{3}{c}{\pc} \\
Graph & n & m & obj & ratio & time (s) & obj & ratio & time (s) & obj & ratio & time (s) \\
\midrule
fig10.4 & 10 & 12 & 22 & 1.22 & 0.000 & 22 & 1.22 & 0.000 & 22 & 1.22 & 0.000 \\
fig10.1 & 11 & 12 & 16 & 1.07 & 0.000 & 16 & 1.07 & 0.000 & 16 & 1.07 & 0.000 \\
ex10.8 & 12 & 14 & 26 & 1.18 & 0.000 & 26 & 1.18 & 0.000 & 26 & 1.18 & 0.000 \\
revbound & 12 & 21 & 10 & 1.00 & 0.000 & 10 & 1.00 & 0.000 & 18 & 1.80 & 0.000 \\
butterfly & 16 & 24 & 48 & 1.00 & 0.000 & 48 & 1.00 & 0.000 & 48 & 1.00 & 0.000 \\
\midrule
fig10.2 & 6 & 6 & 4 & 1.00 & 0.000 & 4 & 1.00 & 0.000 & 4 & 1.00 & 0.000 \\
fig10.1.orig & 11 & 12 & 16 & 1.07 & 0.000 & 16 & 1.07 & 0.000 & 16 & 1.07 & 0.000 \\
fig10.3 & 18 & 20 & 30 & 1.15 & 0.000 & 30 & 1.15 & 0.000 & 30 & 1.15 & 0.000 \\
fig10.6 & 34 & 70 & 132 & 1.00 & 0.000 & 148 & 1.12 & 0.000 & 132 & 1.00 & 0.000 \\
\midrule
2d4x2x2 & 32 & 96 & 352 & 1.00 & 0.000 & 352 & 1.00 & 0.000 & 352 & 1.00 & 0.000 \\
2d3x3x2 & 36 & 135 & 630 & 1.00 & 0.000 & 630 & 1.00 & 0.000 & 630 & 1.00 & 0.000 \\
2d4x2x3 & 40 & 128 & 648 & 1.07 & 0.000 & 648 & 1.07 & 0.000 & 648 & 1.07 & 0.000 \\
2d3x3x3 & 45 & 180 & 1179 & 1.14 & 0.000 & 1179 & 1.14 & 0.000 & 1179 & 1.14 & 0.000 \\
2d5x5x2 & 100 & 375 & 2250 & 1.00 & 0.001 & 2250 & 1.00 & 0.000 & 2250 & 1.00 & 0.001 \\
2d5x5x3 & 125 & 500 & 5475 & ? & 0.003 & 5475 & ? & 0.000 & 5475 & ? & 0.004 \\
2d5x5x5 & 175 & 750 & 14225 & ? & 0.011 & 14225 & ? & 0.002 & 14225 & ? & 0.012 \\
2d10x10x2 & 400 & 1500 & 9000 & 1.00 & 0.012 & 9000 & 1.00 & 0.003 & 9000 & 1.00 & 0.020 \\
2d10x10x5 & 700 & 3000 & 77700 & ? & 0.257 & 77700 & ? & 0.027 & 77700 & ? & 0.084 \\
2d10x10x10 & 1200 & 5500 & 512000 & ? & 2.553 & 512000 & ? & 0.122 & 512000 & ? & 0.281 \\
\midrule
Helmholtz & 7 & 8 & 5 & 1.00 & 0.000 & 5 & 1.00 & 0.000 & 5 & 1.00 & 0.000 \\
PropaneComb. & 27 & 29 & 38 & 1.52 & 0.000 & 38 & 1.52 & 0.000 & 25 & 1.00 & 0.000 \\
Robot-6DOF & 34 & 43 & 45 & 1.10 & 0.000 & 45 & 1.10 & 0.000 & 41 & 1.00 & 0.000 \\
BlackScholes & 35 & 44 & 51 & 1.21 & 0.000 & 50 & 1.19 & 0.000 & 44 & 1.05 & 0.000 \\
HumanHeart & 36 & 43 & 51 & 1.21 & 0.000 & 51 & 1.21 & 0.000 & 42 & 1.00 & 0.000 \\
Perceptron & 44 & 47 & 48 & 1.12 & 0.000 & 48 & 1.12 & 0.000 & 46 & 1.07 & 0.000 \\
f & 46 & 61 & 82 & 1.15 & 0.000 & 78 & 1.10 & 0.000 & 75 & 1.06 & 0.000 \\
Encoder & 98 & 114 & 135 & 1.30 & 0.000 & 131 & 1.26 & 0.000 & 117 & 1.12 & 0.001 \\
RoeFlux-1d & 100 & 164 & 311 & ? & 0.001 & 338 & ? & 0.000 & 267 & ? & 0.002 \\
g & 102 & 155 & 370 & ? & 0.000 & 371 & ? & 0.000 & 364 & ? & 0.001 \\
RoeFlux-3d & 140 & 226 & 392 & ? & 0.001 & 430 & ? & 0.001 & 343 & ? & 0.007 \\
\bottomrule
\end{tabular}}
\end{sidewaystable}
\begin{sidewaystable}
\centering
\caption{\label{table:flops-approximations-TEST-NLS-sideways1}\StrOJA{{}} heuristic performances on graphs from~\NLSset{}. In cases where the optimal objective value is unknown, the ratio is marked '?'.}
\centering
{\footnotesize\setlength{\tabcolsep}{3pt}\begin{tabular}{lll|lll|lll|lll|lll|lll}
\toprule
\multicolumn{3}{c}{} & \multicolumn{3}{c}{\fo{}} & \multicolumn{3}{c}{\re{}} & \multicolumn{3}{c}{\greedymark{}} & \multicolumn{3}{c}{\relmark{}} & \multicolumn{3}{c}{\mtmr{}} \\
Graph & n & m & obj & ratio & time (s) & obj & ratio & time (s) & obj & ratio & time (s) & obj & ratio & time (s) & obj & ratio & time (s) \\
\midrule
nls04 & 8 & 7 & 6 & 1.20 & 0.000 & 5 & 1.00 & 0.000 & 5 & 1.00 & 0.000 & 5 & 1.00 & 0.000 & 5 & 1.00 & 0.000 \\
nls07 & 12 & 16 & 10 & 1.00 & 0.000 & 14 & 1.40 & 0.000 & 10 & 1.00 & 0.000 & 10 & 1.00 & 0.000 & 12 & 1.20 & 0.000 \\
nls26 & 12 & 13 & 10 & 1.00 & 0.000 & 10 & 1.00 & 0.000 & 10 & 1.00 & 0.000 & 10 & 1.00 & 0.000 & 10 & 1.00 & 0.000 \\
nls05 & 15 & 13 & 11 & 1.10 & 0.000 & 10 & 1.00 & 0.000 & 10 & 1.00 & 0.000 & 10 & 1.00 & 0.000 & 10 & 1.00 & 0.000 \\
nls06 & 16 & 16 & 16 & 1.33 & 0.000 & 12 & 1.00 & 0.000 & 12 & 1.00 & 0.000 & 12 & 1.00 & 0.000 & 12 & 1.00 & 0.000 \\
nls21 & 20 & 29 & 31 & 1.11 & 0.000 & 42 & 1.50 & 0.000 & 31 & 1.11 & 0.000 & 28 & 1.00 & 0.000 & 31 & 1.11 & 0.000 \\
nls22 & 31 & 51 & 69 & 1.30 & 0.000 & 92 & 1.74 & 0.000 & 57 & 1.08 & 0.000 & 57 & 1.08 & 0.000 & 61 & 1.15 & 0.000 \\
nls03 & 48 & 47 & 228 & 2.26 & 0.000 & 200 & 1.98 & 0.000 & 106 & 1.05 & 0.000 & 101 & 1.00 & 0.000 & 137 & 1.36 & 0.000 \\
nls01 & 50 & 59 & 345 & 2.35 & 0.000 & 220 & 1.50 & 0.000 & 159 & 1.08 & 0.000 & 147 & 1.00 & 0.000 & 211 & 1.44 & 0.000 \\
nls13 & 52 & 60 & 50 & 1.00 & 0.000 & 50 & 1.00 & 0.000 & 50 & 1.00 & 0.000 & 50 & 1.00 & 0.000 & 50 & 1.00 & 0.000 \\
nls16 & 59 & 76 & 380 & 2.47 & 0.000 & 218 & 1.42 & 0.000 & 178 & 1.16 & 0.000 & 154 & 1.00 & 0.000 & 210 & 1.36 & 0.000 \\
nls02 & 60 & 59 & 355 & 2.41 & 0.000 & 310 & 2.11 & 0.000 & 159 & 1.08 & 0.000 & 147 & 1.00 & 0.000 & 211 & 1.44 & 0.000 \\
nls12 & 63 & 80 & 80 & 1.14 & 0.000 & 70 & 1.00 & 0.000 & 80 & 1.14 & 0.000 & 70 & 1.00 & 0.000 & 70 & 1.00 & 0.000 \\
nls08 & 63 & 90 & 105 & 1.40 & 0.000 & 75 & 1.00 & 0.000 & 90 & 1.20 & 0.000 & 75 & 1.00 & 0.000 & 75 & 1.00 & 0.000 \\
nls19 & 77 & 90 & 90 & 1.20 & 0.000 & 75 & 1.00 & 0.000 & 75 & 1.00 & 0.000 & 75 & 1.00 & 0.000 & 75 & 1.00 & 0.000 \\
nls09 & 81 & 110 & 154 & 1.56 & 0.000 & 99 & 1.00 & 0.000 & 121 & 1.22 & 0.000 & 99 & 1.00 & 0.000 & 99 & 1.00 & 0.000 \\
nls14 & 84 & 140 & 160 & 1.33 & 0.000 & 120 & 1.00 & 0.000 & 160 & 1.33 & 0.000 & 120 & 1.00 & 0.000 & 120 & 1.00 & 0.000 \\
nls10 & 99 & 128 & 176 & 1.57 & 0.000 & 112 & 1.00 & 0.000 & 128 & 1.14 & 0.000 & 112 & 1.00 & 0.000 & 112 & 1.00 & 0.000 \\
nls23 & 115 & 189 & 395 & 1.75 & 0.000 & 710 & 3.14 & 0.000 & 289 & 1.28 & 0.000 & 266 & 1.18 & 0.001 & 291 & 1.29 & 0.001 \\
nls20 & 123 & 150 & 180 & 1.33 & 0.000 & 135 & 1.00 & 0.000 & 150 & 1.11 & 0.000 & 135 & 1.00 & 0.000 & 135 & 1.00 & 0.001 \\
nls24 & 147 & 213 & 305 & 1.61 & 0.000 & 189 & 1.00 & 0.000 & 235 & 1.24 & 0.000 & 189 & 1.00 & 0.001 & 189 & 1.00 & 0.001 \\
nls15 & 172 & 340 & 640 & 1.65 & 0.000 & 1044 & 2.69 & 0.000 & 588 & 1.52 & 0.000 & 461 & 1.19 & 0.004 & 428 & 1.10 & 0.005 \\
nls25 & 237 & 324 & 529 & 1.77 & 0.000 & 299 & 1.00 & 0.000 & 345 & 1.15 & 0.000 & 299 & 1.00 & 0.003 & 299 & 1.00 & 0.007 \\
nls17 & 395 & 520 & 746 & 1.53 & 0.000 & 487 & 1.00 & 0.000 & 616 & 1.26 & 0.000 & 487 & 1.00 & 0.011 & 487 & 1.00 & 0.031 \\
nls11 & 593 & 1019 & 2554 & ? & 0.001 & 930 & ? & 0.000 & 1220 & ? & 0.001 & 930 & ? & 0.028 & 930 & ? & 0.106 \\
nls18 & 1751 & 2383 & 4689 & ? & 0.003 & 2315 & ? & 0.001 & 3272 & ? & 0.009 & 2315 & ? & 0.426 & 2315 & ? & 3.099 \\
\bottomrule
\end{tabular}}
\end{sidewaystable}
\begin{sidewaystable}
\centering
\caption{\label{table:flops-approximations-TEST-NLS-sideways2}\StrOJA{{}} heuristic performances on graphs from~\NLSset{}. In cases where the optimal objective value is unknown, the ratio is marked '?'.}
\centering
{\footnotesize\setlength{\tabcolsep}{3pt}\begin{tabular}{lll|lll|lll|lll|lll|lll}
\toprule
\multicolumn{3}{c}{} & \multicolumn{3}{c}{\lmmd{}} & \multicolumn{3}{c}{\pathlen{}} & \multicolumn{3}{c}{\moshort{}} & \multicolumn{3}{c}{\sa{}} & \multicolumn{3}{c}{\diffmincost{}} \\
Graph & n & m & obj & ratio & time (s) & obj & ratio & time (s) & obj & ratio & time (s) & obj & ratio & time (s) & obj & ratio & time (s) \\
\midrule
nls04 & 8 & 7 & 5 & 1.00 & 0.000 & 5 & 1.00 & 0.000 & 6 & 1.20 & 0.000 & 5 & 1.00 & 0.002 & 5 & 1.00 & 0.000 \\
nls07 & 12 & 16 & 10 & 1.00 & 0.000 & 12 & 1.20 & 0.000 & 10 & 1.00 & 0.000 & 10 & 1.00 & 0.003 & 14 & 1.40 & 0.000 \\
nls26 & 12 & 13 & 10 & 1.00 & 0.000 & 10 & 1.00 & 0.000 & 10 & 1.00 & 0.000 & 10 & 1.00 & 0.003 & 10 & 1.00 & 0.000 \\
nls05 & 15 & 13 & 10 & 1.00 & 0.000 & 10 & 1.00 & 0.000 & 11 & 1.10 & 0.000 & 10 & 1.00 & 0.003 & 10 & 1.00 & 0.000 \\
nls06 & 16 & 16 & 12 & 1.00 & 0.000 & 12 & 1.00 & 0.000 & 16 & 1.33 & 0.000 & 12 & 1.00 & 0.004 & 12 & 1.00 & 0.000 \\
nls21 & 20 & 29 & 28 & 1.00 & 0.000 & 31 & 1.11 & 0.000 & 31 & 1.11 & 0.000 & 28 & 1.00 & 0.006 & 42 & 1.50 & 0.000 \\
nls22 & 31 & 51 & 53 & 1.00 & 0.000 & 61 & 1.15 & 0.001 & 69 & 1.30 & 0.000 & 53 & 1.00 & 0.026 & 92 & 1.74 & 0.000 \\
nls03 & 48 & 47 & 110 & 1.09 & 0.000 & 249 & 2.47 & 0.008 & 101 & 1.00 & 0.000 & 101 & 1.00 & 0.078 & 200 & 1.98 & 0.000 \\
nls01 & 50 & 59 & 159 & 1.08 & 0.000 & 382 & 2.60 & 0.005 & 345 & 2.35 & 0.000 & 148 & 1.01 & 0.078 & 220 & 1.50 & 0.000 \\
nls13 & 52 & 60 & 50 & 1.00 & 0.000 & 50 & 1.00 & 0.008 & 50 & 1.00 & 0.000 & 50 & 1.00 & 0.002 & 50 & 1.00 & 0.000 \\
nls16 & 59 & 76 & 166 & 1.08 & 0.000 & 290 & 1.88 & 0.011 & 380 & 2.47 & 0.000 & 158 & 1.03 & 0.095 & 218 & 1.42 & 0.000 \\
nls02 & 60 & 59 & 159 & 1.08 & 0.000 & 391 & 2.66 & 0.017 & 147 & 1.00 & 0.000 & 148 & 1.01 & 0.105 & 310 & 2.11 & 0.000 \\
nls12 & 63 & 80 & 70 & 1.00 & 0.000 & 70 & 1.00 & 0.016 & 80 & 1.14 & 0.000 & 70 & 1.00 & 0.084 & 70 & 1.00 & 0.001 \\
nls08 & 63 & 90 & 75 & 1.00 & 0.000 & 75 & 1.00 & 0.013 & 105 & 1.40 & 0.000 & 75 & 1.00 & 0.076 & 75 & 1.00 & 0.001 \\
nls19 & 77 & 90 & 75 & 1.00 & 0.000 & 75 & 1.00 & 0.033 & 90 & 1.20 & 0.000 & 75 & 1.00 & 0.108 & 75 & 1.00 & 0.001 \\
nls09 & 81 & 110 & 99 & 1.00 & 0.000 & 99 & 1.00 & 0.039 & 154 & 1.56 & 0.000 & 99 & 1.00 & 0.134 & 99 & 1.00 & 0.001 \\
nls14 & 84 & 140 & 120 & 1.00 & 0.000 & 120 & 1.00 & 0.031 & 160 & 1.33 & 0.000 & 120 & 1.00 & 0.175 & 120 & 1.00 & 0.001 \\
nls10 & 99 & 128 & 112 & 1.00 & 0.001 & 112 & 1.00 & 0.070 & 176 & 1.57 & 0.000 & 115 & 1.03 & 0.162 & 112 & 1.00 & 0.002 \\
nls23 & 115 & 189 & 246 & 1.09 & 0.002 & 750 & 3.32 & 0.141 & 395 & 1.75 & 0.000 & 248 & 1.10 & 0.278 & 710 & 3.14 & 0.004 \\
nls20 & 123 & 150 & 135 & 1.00 & 0.001 & 135 & 1.00 & 0.151 & 180 & 1.33 & 0.000 & 139 & 1.03 & 0.179 & 135 & 1.00 & 0.004 \\
nls24 & 147 & 213 & 189 & 1.00 & 0.003 & 189 & 1.00 & 0.243 & 305 & 1.61 & 0.000 & 204 & 1.08 & 0.329 & 189 & 1.00 & 0.006 \\
nls15 & 172 & 340 & 388 & 1.00 & 0.010 & 428 & 1.10 & 0.421 & 640 & 1.65 & 0.000 & 567 & 1.46 & 0.505 & 1044 & 2.69 & 0.011 \\
nls25 & 237 & 324 & 299 & 1.00 & 0.013 & 299 & 1.00 & 1.154 & 529 & 1.77 & 0.001 & 343 & 1.15 & 0.504 & 299 & 1.00 & 0.016 \\
nls17 & 395 & 520 & 487 & 1.00 & 0.060 & 487 & 1.00 & 5.463 & 746 & 1.53 & 0.004 & 547 & 1.12 & 0.922 & 487 & 1.00 & 0.041 \\
nls11 & 593 & 1019 & 930 & ? & 0.202 & 930 & ? & 18.636 & 2554 & ? & 0.009 & 1212 & ? & 3.203 & 930 & ? & 0.119 \\
nls18 & 1751 & 2383 & 2315 & ? & 6.135 & 2315 & ? & 504.425 & 4689 & ? & 0.062 & 2781 & ? & 7.320 & 2315 & ? & 1.027 \\
\bottomrule
\end{tabular}}
\end{sidewaystable}
\begin{sidewaystable}
\centering
\caption{\label{table:flops-approximations-TEST-NLS-sideways3}\StrOJA{{}} heuristic performances on graphs from~\NLSset{}. In cases where the optimal objective value is unknown, the ratio is marked '?'.}
\centering
{\footnotesize\setlength{\tabcolsep}{3pt}\begin{tabular}{lll|lll|lll|lll}
\toprule
\multicolumn{3}{c}{} & \multicolumn{3}{c}{\edgereduction{}} & \multicolumn{3}{c}{\markanddegree{}} & \multicolumn{3}{c}{\pc} \\
Graph & n & m & obj & ratio & time (s) & obj & ratio & time (s) & obj & ratio & time (s) \\
\midrule
nls04 & 8 & 7 & 5 & 1.00 & 0.000 & 5 & 1.00 & 0.000 & 5 & 1.00 & 0.000 \\
nls07 & 12 & 16 & 10 & 1.00 & 0.000 & 10 & 1.00 & 0.000 & 12 & 1.20 & 0.000 \\
nls26 & 12 & 13 & 10 & 1.00 & 0.000 & 10 & 1.00 & 0.000 & 10 & 1.00 & 0.000 \\
nls05 & 15 & 13 & 10 & 1.00 & 0.000 & 10 & 1.00 & 0.000 & 10 & 1.00 & 0.000 \\
nls06 & 16 & 16 & 12 & 1.00 & 0.000 & 12 & 1.00 & 0.000 & 12 & 1.00 & 0.000 \\
nls21 & 20 & 29 & 28 & 1.00 & 0.000 & 31 & 1.11 & 0.000 & 28 & 1.00 & 0.000 \\
nls22 & 31 & 51 & 57 & 1.08 & 0.000 & 57 & 1.08 & 0.000 & 53 & 1.00 & 0.000 \\
nls03 & 48 & 47 & 106 & 1.05 & 0.000 & 106 & 1.05 & 0.000 & 101 & 1.00 & 0.000 \\
nls01 & 50 & 59 & 159 & 1.08 & 0.000 & 159 & 1.08 & 0.000 & 147 & 1.00 & 0.000 \\
nls13 & 52 & 60 & 50 & 1.00 & 0.000 & 50 & 1.00 & 0.000 & 50 & 1.00 & 0.001 \\
nls16 & 59 & 76 & 178 & 1.16 & 0.000 & 178 & 1.16 & 0.000 & 154 & 1.00 & 0.001 \\
nls02 & 60 & 59 & 159 & 1.08 & 0.000 & 159 & 1.08 & 0.000 & 147 & 1.00 & 0.001 \\
nls12 & 63 & 80 & 80 & 1.14 & 0.000 & 80 & 1.14 & 0.000 & 70 & 1.00 & 0.002 \\
nls08 & 63 & 90 & 90 & 1.20 & 0.000 & 90 & 1.20 & 0.000 & 75 & 1.00 & 0.000 \\
nls19 & 77 & 90 & 75 & 1.00 & 0.000 & 75 & 1.00 & 0.000 & 75 & 1.00 & 0.000 \\
nls09 & 81 & 110 & 121 & 1.22 & 0.000 & 121 & 1.22 & 0.000 & 99 & 1.00 & 0.000 \\
nls14 & 84 & 140 & 160 & 1.33 & 0.000 & 160 & 1.33 & 0.000 & 120 & 1.00 & 0.002 \\
nls10 & 99 & 128 & 128 & 1.14 & 0.000 & 128 & 1.14 & 0.001 & 112 & 1.00 & 0.001 \\
nls23 & 115 & 189 & 282 & 1.25 & 0.001 & 289 & 1.28 & 0.000 & 226 & 1.00 & 0.003 \\
nls20 & 123 & 150 & 150 & 1.11 & 0.001 & 150 & 1.11 & 0.002 & 135 & 1.00 & 0.002 \\
nls24 & 147 & 213 & 235 & 1.24 & 0.002 & 235 & 1.24 & 0.001 & 189 & 1.00 & 0.005 \\
nls15 & 172 & 340 & 468 & 1.21 & 0.001 & 428 & 1.10 & 0.000 & 523 & 1.35 & 0.006 \\
nls25 & 237 & 324 & 345 & 1.15 & 0.005 & 345 & 1.15 & 0.003 & 299 & 1.00 & 0.011 \\
nls17 & 395 & 520 & 616 & 1.26 & 0.018 & 616 & 1.26 & 0.011 & 487 & 1.00 & 0.028 \\
nls11 & 593 & 1019 & 1220 & ? & 0.035 & 1220 & ? & 0.026 & 930 & ? & 0.037 \\
nls18 & 1751 & 2383 & 3272 & ? & 0.353 & 3272 & ? & 0.276 & 2315 & ? & 0.415 \\
\bottomrule
\end{tabular}}
\end{sidewaystable}

\subsection{Computational Results for \MinEC{} Heuristics.}
We implemented all~\MinEC{} heuristics tested in this paper in C++, and ran all tests on a Intel(R) Xeon(R) Gold 6230 CPU @ 2.10GHz.
\begin{sidewaystable}
\centering
\caption{\label{table:scarcity-approximations-all-but-NLS-sideways1}\MinEC{} heuristic performances on graphs from \CHMUset{} (top), \GWset{} (second from top), \EGset{} (second from bottom), and \AGset{} (bottom). In cases where the optimal objective value is unknown, the ratio is marked '?'.}
\centering
{\footnotesize\setlength{\tabcolsep}{3pt}\begin{tabular}{lll|lll|lll|lll|lll|lll}
\toprule
\multicolumn{3}{c}{} & \multicolumn{3}{c}{\fo{}} & \multicolumn{3}{c}{\re{}} & \multicolumn{3}{c}{\greedymark{}} & \multicolumn{3}{c}{\relmark{}} & \multicolumn{3}{c}{\mtmr{}} \\
Graph & n & m & obj & ratio & time (s) & obj & ratio & time (s) & obj & ratio & time (s) & obj & ratio & time (s) & obj & ratio & time (s) \\
\midrule
fig10.4 & 10 & 12 & 12 & 1.09 & 0.000 & 11 & 1.00 & 0.000 & 12 & 1.09 & 0.000 & 11 & 1.00 & 0.000 & 18 & 1.64 & 0.000 \\
fig10.1 & 11 & 12 & 7 & 1.00 & 0.000 & 8 & 1.14 & 0.000 & 7 & 1.00 & 0.000 & 7 & 1.00 & 0.000 & 23 & 3.29 & 0.000 \\
ex10.8 & 12 & 14 & 12 & 1.09 & 0.000 & 11 & 1.00 & 0.000 & 12 & 1.09 & 0.000 & 11 & 1.00 & 0.000 & 28 & 2.55 & 0.000 \\
revbound & 12 & 21 & 1 & 1.00 & 0.000 & 1 & 1.00 & 0.000 & 1 & 1.00 & 0.000 & 1 & 1.00 & 0.000 & 10 & 10.00 & 0.000 \\
butterfly & 16 & 24 & 16 & 1.00 & 0.000 & 16 & 1.00 & 0.000 & 16 & 1.00 & 0.000 & 16 & 1.00 & 0.000 & 48 & 3.00 & 0.000 \\
\midrule
fig10.2 & 6 & 6 & 1 & 1.00 & 0.000 & 1 & 1.00 & 0.000 & 1 & 1.00 & 0.000 & 1 & 1.00 & 0.000 & 6 & 6.00 & 0.000 \\
fig10.1.orig & 11 & 12 & 7 & 1.00 & 0.000 & 8 & 1.14 & 0.000 & 7 & 1.00 & 0.000 & 7 & 1.00 & 0.000 & 23 & 3.29 & 0.000 \\
fig10.3 & 18 & 20 & 12 & 1.09 & 0.000 & 12 & 1.09 & 0.000 & 12 & 1.09 & 0.000 & 11 & 1.00 & 0.000 & 30 & 2.73 & 0.000 \\
fig10.6 & 34 & 70 & 20 & 1.00 & 0.000 & 20 & 1.00 & 0.000 & 20 & 1.00 & 0.000 & 20 & 1.00 & 0.000 & 188 & 9.40 & 0.000 \\
\midrule
2d4x2x2 & 32 & 96 & 64 & 1.00 & 0.000 & 64 & 1.00 & 0.000 & 64 & 1.00 & 0.000 & 64 & 1.00 & 0.000 & 416 & 6.50 & 0.000 \\
2d3x3x2 & 36 & 135 & 81 & 1.00 & 0.000 & 81 & 1.00 & 0.000 & 81 & 1.00 & 0.000 & 81 & 1.00 & 0.000 & 770 & 9.51 & 0.000 \\
2d4x2x3 & 40 & 128 & 64 & 1.00 & 0.000 & 64 & 1.00 & 0.000 & 64 & 1.00 & 0.000 & 64 & 1.00 & 0.000 & 676 & 10.56 & 0.000 \\
2d3x3x3 & 45 & 180 & 81 & 1.00 & 0.000 & 81 & 1.00 & 0.000 & 81 & 1.00 & 0.000 & 81 & 1.00 & 0.000 & 1274 & 15.73 & 0.000 \\
2d5x5x2 & 100 & 375 & 375 & 1.00 & 0.000 & 375 & 1.00 & 0.000 & 375 & 1.00 & 0.000 & 375 & 1.00 & 0.000 & 2865 & 7.64 & 0.002 \\
2d5x5x3 & 125 & 500 & 500 & 1.00 & 0.000 & 500 & 1.00 & 0.000 & 500 & 1.00 & 0.000 & 500 & 1.00 & 0.002 & 6130 & 12.26 & 0.006 \\
2d5x5x5 & 175 & 750 & 625 & 1.00 & 0.002 & 625 & 1.00 & 0.003 & 625 & 1.00 & 0.001 & 625 & 1.00 & 0.011 & 16984 & 27.17 & 0.021 \\
2d10x10x2 & 400 & 1500 & 1500 & 1.00 & 0.001 & 1500 & 1.00 & 0.001 & 1500 & 1.00 & 0.001 & 1500 & 1.00 & 0.009 & 11490 & 7.66 & 0.026 \\
2d10x10x5 & 700 & 3000 & 3000 & ? & 0.014 & 3000 & ? & 0.013 & 3000 & ? & 0.011 & 3000 & ? & 0.171 & 112601 & ? & 0.485 \\
2d10x10x10 & 1200 & 5500 & 5500 & ? & 0.068 & 5500 & ? & 0.077 & 5500 & ? & 0.053 & 5500 & ? & 2.638 & 937128 & ? & 4.806 \\
\midrule
Helmholtz & 7 & 8 & 1 & 1.00 & 0.000 & 1 & 1.00 & 0.000 & 1 & 1.00 & 0.000 & 1 & 1.00 & 0.000 & 5 & 5.00 & 0.000 \\
PropaneComb. & 27 & 29 & 15 & 1.00 & 0.000 & 15 & 1.00 & 0.000 & 15 & 1.00 & 0.000 & 15 & 1.00 & 0.000 & 25 & 1.67 & 0.000 \\
Robot-6DOF & 34 & 43 & 4 & 1.00 & 0.000 & 4 & 1.00 & 0.000 & 4 & 1.00 & 0.000 & 4 & 1.00 & 0.000 & 41 & 10.25 & 0.000 \\
BlackScholes & 35 & 44 & 5 & 1.00 & 0.000 & 5 & 1.00 & 0.000 & 5 & 1.00 & 0.000 & 5 & 1.00 & 0.000 & 45 & 9.00 & 0.000 \\
HumanHeart & 36 & 43 & 8 & 1.00 & 0.000 & 8 & 1.00 & 0.000 & 8 & 1.00 & 0.000 & 8 & 1.00 & 0.000 & 42 & 5.25 & 0.000 \\
Perceptron & 44 & 47 & 8 & 1.00 & 0.000 & 8 & 1.00 & 0.000 & 8 & 1.00 & 0.000 & 8 & 1.00 & 0.000 & 43 & 5.38 & 0.000 \\
f & 46 & 61 & 6 & 1.00 & 0.000 & 6 & 1.00 & 0.000 & 6 & 1.00 & 0.000 & 6 & 1.00 & 0.000 & 89 & 14.83 & 0.000 \\
Encoder & 98 & 114 & 16 & 1.00 & 0.000 & 16 & 1.00 & 0.000 & 16 & 1.00 & 0.000 & 16 & 1.00 & 0.001 & 104 & 6.50 & 0.000 \\
RoeFlux-1d & 100 & 164 & 12 & 1.00 & 0.000 & 12 & 1.00 & 0.000 & 12 & 1.00 & 0.000 & 12 & 1.00 & 0.001 & 284 & 23.67 & 0.000 \\
g & 102 & 155 & 129 & 1.34 & 0.000 & 129 & 1.34 & 0.000 & 107 & 1.11 & 0.000 & 119 & 1.24 & 0.001 & 387 & 4.03 & 0.000 \\
RoeFlux-3d & 140 & 226 & 12 & 1.00 & 0.000 & 12 & 1.00 & 0.000 & 12 & 1.00 & 0.000 & 12 & 1.00 & 0.004 & 372 & 31.00 & 0.001 \\
\bottomrule
\end{tabular}}
\end{sidewaystable}
\begin{sidewaystable}
\centering
\caption{\label{table:scarcity-approximations-all-but-NLS-sideways2}\MinEC{} heuristic performances on graphs from \CHMUset{} (top), \GWset{} (second from top), \EGset{} (second from bottom), and \AGset{} (bottom). In cases where the optimal objective value is unknown, the ratio is marked '?'.}
\centering
{\footnotesize\setlength{\tabcolsep}{3pt}\begin{tabular}{lll|lll|lll|lll|lll|lll}
\toprule
\multicolumn{3}{c}{} & \multicolumn{3}{c}{\lmmd{}} & \multicolumn{3}{c}{\pathlen{}} & \multicolumn{3}{c}{\moshort{}} & \multicolumn{3}{c}{\mcmc{}} & \multicolumn{3}{c}{\diffmincost{}} \\
Graph & n & m & obj & ratio & time (s) & obj & ratio & time (s) & obj & ratio & time (s) & obj & ratio & time (s) & obj & ratio & time (s) \\
\midrule
fig10.4 & 10 & 12 & 22 & 2.00 & 0.000 & 12 & 1.09 & 0.000 & 12 & 1.09 & 0.000 & 11 & 1.00 & 0.004 & 11 & 1.00 & 0.000 \\
fig10.1 & 11 & 12 & 16 & 2.29 & 0.000 & 8 & 1.14 & 0.000 & 7 & 1.00 & 0.000 & 7 & 1.00 & 0.003 & 8 & 1.14 & 0.000 \\
ex10.8 & 12 & 14 & 26 & 2.36 & 0.000 & 12 & 1.09 & 0.000 & 12 & 1.09 & 0.000 & 11 & 1.00 & 0.006 & 11 & 1.00 & 0.000 \\
revbound & 12 & 21 & 10 & 10.00 & 0.000 & 1 & 1.00 & 0.000 & 1 & 1.00 & 0.000 & 1 & 1.00 & 0.003 & 1 & 1.00 & 0.000 \\
butterfly & 16 & 24 & 48 & 3.00 & 0.000 & 16 & 1.00 & 0.000 & 16 & 1.00 & 0.000 & 16 & 1.00 & 0.006 & 16 & 1.00 & 0.000 \\
\midrule
fig10.2 & 6 & 6 & 4 & 4.00 & 0.000 & 1 & 1.00 & 0.000 & 1 & 1.00 & 0.000 & 1 & 1.00 & 0.001 & 1 & 1.00 & 0.000 \\
fig10.1.orig & 11 & 12 & 16 & 2.29 & 0.000 & 8 & 1.14 & 0.000 & 7 & 1.00 & 0.000 & 7 & 1.00 & 0.004 & 8 & 1.14 & 0.000 \\
fig10.3 & 18 & 20 & 30 & 2.73 & 0.000 & 12 & 1.09 & 0.000 & 12 & 1.09 & 0.000 & 11 & 1.00 & 0.008 & 12 & 1.09 & 0.000 \\
fig10.6 & 34 & 70 & 145 & 7.25 & 0.000 & 20 & 1.00 & 0.002 & 20 & 1.00 & 0.000 & 20 & 1.00 & 0.026 & 20 & 1.00 & 0.000 \\
\midrule
2d4x2x2 & 32 & 96 & 352 & 5.50 & 0.000 & 64 & 1.00 & 0.001 & 64 & 1.00 & 0.000 & 96 & 1.50 & 0.056 & 64 & 1.00 & 0.000 \\
2d3x3x2 & 36 & 135 & 630 & 7.78 & 0.000 & 81 & 1.00 & 0.002 & 81 & 1.00 & 0.000 & 135 & 1.67 & 0.116 & 81 & 1.00 & 0.000 \\
2d4x2x3 & 40 & 128 & 648 & 10.12 & 0.000 & 64 & 1.00 & 0.002 & 64 & 1.00 & 0.000 & 128 & 2.00 & 0.099 & 64 & 1.00 & 0.000 \\
2d3x3x3 & 45 & 180 & 1179 & 14.56 & 0.000 & 81 & 1.00 & 0.009 & 81 & 1.00 & 0.000 & 180 & 2.22 & 0.226 & 81 & 1.00 & 0.000 \\
2d5x5x2 & 100 & 375 & 2250 & 6.00 & 0.002 & 375 & 1.00 & 0.038 & 375 & 1.00 & 0.000 & 375 & 1.00 & 0.854 & 375 & 1.00 & 0.003 \\
2d5x5x3 & 125 & 500 & 5475 & 10.95 & 0.006 & 500 & 1.00 & 0.110 & 500 & 1.00 & 0.001 & 500 & 1.00 & 1.630 & 500 & 1.00 & 0.005 \\
2d5x5x5 & 175 & 750 & 15525 & 24.84 & 0.024 & 625 & 1.00 & 0.436 & 625 & 1.00 & 0.002 & 750 & 1.20 & 3.970 & 625 & 1.00 & 0.013 \\
2d10x10x2 & 400 & 1500 & 9000 & 6.00 & 0.043 & 1500 & 1.00 & 2.109 & 1500 & 1.00 & 0.006 & 1500 & 1.00 & 13.488 & 1500 & 1.00 & 0.038 \\
2d10x10x5 & 700 & 3000 & 80500 & ? & 0.695 & 3000 & ? & 26.494 & 3000 & ? & 0.029 & 3000 & ? & 62.509 & 3000 & ? & 0.195 \\
2d10x10x10 & 1200 & 5500 & 560000 & ? & 7.193 & 5500 & ? & 170.151 & 5500 & ? & 0.100 & 5500 & ? & 222.730 & 5500 & ? & 0.749 \\
\midrule
Helmholtz & 7 & 8 & 5 & 5.00 & 0.000 & 1 & 1.00 & 0.000 & 1 & 1.00 & 0.000 & 1 & 1.00 & 0.000 & 1 & 1.00 & 0.000 \\
PropaneComb. & 27 & 29 & 25 & 1.67 & 0.000 & 15 & 1.00 & 0.000 & 15 & 1.00 & 0.000 & 15 & 1.00 & 0.005 & 15 & 1.00 & 0.000 \\
Robot-6DOF & 34 & 43 & 41 & 10.25 & 0.000 & 4 & 1.00 & 0.004 & 4 & 1.00 & 0.000 & 4 & 1.00 & 0.007 & 4 & 1.00 & 0.000 \\
BlackScholes & 35 & 44 & 42 & 8.40 & 0.000 & 5 & 1.00 & 0.002 & 5 & 1.00 & 0.000 & 5 & 1.00 & 0.018 & 5 & 1.00 & 0.000 \\
HumanHeart & 36 & 43 & 42 & 5.25 & 0.000 & 8 & 1.00 & 0.002 & 8 & 1.00 & 0.000 & 8 & 1.00 & 0.008 & 8 & 1.00 & 0.000 \\
Perceptron & 44 & 47 & 43 & 5.38 & 0.000 & 8 & 1.00 & 0.006 & 8 & 1.00 & 0.000 & 8 & 1.00 & 0.010 & 8 & 1.00 & 0.000 \\
f & 46 & 61 & 71 & 11.83 & 0.000 & 6 & 1.00 & 0.010 & 6 & 1.00 & 0.000 & 6 & 1.00 & 0.014 & 6 & 1.00 & 0.000 \\
Encoder & 98 & 114 & 104 & 6.50 & 0.000 & 16 & 1.00 & 0.081 & 16 & 1.00 & 0.000 & 16 & 1.00 & 0.052 & 16 & 1.00 & 0.003 \\
RoeFlux-1d & 100 & 164 & 251 & 20.92 & 0.001 & 12 & 1.00 & 0.110 & 12 & 1.00 & 0.000 & 12 & 1.00 & 0.072 & 12 & 1.00 & 0.004 \\
g & 102 & 155 & 346 & 3.60 & 0.001 & 129 & 1.34 & 0.072 & 128 & 1.33 & 0.000 & 96 & 1.00 & 0.113 & 127 & 1.32 & 0.003 \\
RoeFlux-3d & 140 & 226 & 370 & 30.83 & 0.003 & 12 & 1.00 & 0.344 & 12 & 1.00 & 0.000 & 12 & 1.00 & 0.158 & 12 & 1.00 & 0.008 \\
\bottomrule
\end{tabular}}
\end{sidewaystable}
\begin{sidewaystable}
\centering
\caption{\label{table:scarcity-approximations-all-but-NLS-sideways3}\MinEC{} heuristic performances on graphs from \CHMUset{} (top), \GWset{} (second from top), \EGset{} (second from bottom), and \AGset{} (bottom). In cases where the optimal objective value is unknown, the ratio is marked '?'.}
\centering
{\footnotesize\setlength{\tabcolsep}{3pt}\begin{tabular}{lll|lll|lll|lll}
\toprule
\multicolumn{3}{c}{} & \multicolumn{3}{c}{\edgereduction{}} & \multicolumn{3}{c}{\markanddegree{}} & \multicolumn{3}{c}{\pc} \\
Graph & n & m & obj & ratio & time (s) & obj & ratio & time (s) & obj & ratio & time (s) \\
\midrule
fig10.4 & 10 & 12 & 12 & 1.09 & 0.000 & 12 & 1.09 & 0.000 & 12 & 1.09 & 0.000 \\
fig10.1 & 11 & 12 & 7 & 1.00 & 0.000 & 7 & 1.00 & 0.000 & 7 & 1.00 & 0.000 \\
ex10.8 & 12 & 14 & 12 & 1.09 & 0.000 & 12 & 1.09 & 0.000 & 12 & 1.09 & 0.000 \\
revbound & 12 & 21 & 1 & 1.00 & 0.000 & 1 & 1.00 & 0.000 & 1 & 1.00 & 0.000 \\
butterfly & 16 & 24 & 16 & 1.00 & 0.000 & 16 & 1.00 & 0.000 & 16 & 1.00 & 0.000 \\
\midrule
fig10.2 & 6 & 6 & 1 & 1.00 & 0.000 & 1 & 1.00 & 0.000 & 1 & 1.00 & 0.000 \\
fig10.1.orig & 11 & 12 & 7 & 1.00 & 0.000 & 7 & 1.00 & 0.000 & 7 & 1.00 & 0.000 \\
fig10.3 & 18 & 20 & 12 & 1.09 & 0.000 & 12 & 1.09 & 0.000 & 12 & 1.09 & 0.000 \\
fig10.6 & 34 & 70 & 20 & 1.00 & 0.000 & 20 & 1.00 & 0.000 & 20 & 1.00 & 0.000 \\
\midrule
2d4x2x2 & 32 & 96 & 64 & 1.00 & 0.000 & 64 & 1.00 & 0.000 & 64 & 1.00 & 0.000 \\
2d3x3x2 & 36 & 135 & 81 & 1.00 & 0.000 & 81 & 1.00 & 0.000 & 81 & 1.00 & 0.000 \\
2d4x2x3 & 40 & 128 & 64 & 1.00 & 0.000 & 64 & 1.00 & 0.000 & 64 & 1.00 & 0.000 \\
2d3x3x3 & 45 & 180 & 81 & 1.00 & 0.000 & 81 & 1.00 & 0.000 & 81 & 1.00 & 0.000 \\
2d5x5x2 & 100 & 375 & 375 & 1.00 & 0.001 & 375 & 1.00 & 0.001 & 375 & 1.00 & 0.001 \\
2d5x5x3 & 125 & 500 & 500 & 1.00 & 0.003 & 500 & 1.00 & 0.003 & 500 & 1.00 & 0.005 \\
2d5x5x5 & 175 & 750 & 625 & 1.00 & 0.012 & 625 & 1.00 & 0.012 & 625 & 1.00 & 0.012 \\
2d10x10x2 & 400 & 1500 & 1500 & 1.00 & 0.012 & 1500 & 1.00 & 0.012 & 1500 & 1.00 & 0.020 \\
2d10x10x5 & 700 & 3000 & 3000 & ? & 0.259 & 3000 & ? & 0.259 & 3000 & ? & 0.085 \\
2d10x10x10 & 1200 & 5500 & 5500 & ? & 2.554 & 5500 & ? & 2.555 & 5500 & ? & 0.285 \\
\midrule
Helmholtz & 7 & 8 & 1 & 1.00 & 0.000 & 1 & 1.00 & 0.000 & 1 & 1.00 & 0.000 \\
PropaneComb. & 27 & 29 & 15 & 1.00 & 0.000 & 15 & 1.00 & 0.000 & 15 & 1.00 & 0.000 \\
Robot-6DOF & 34 & 43 & 4 & 1.00 & 0.000 & 4 & 1.00 & 0.000 & 4 & 1.00 & 0.000 \\
BlackScholes & 35 & 44 & 5 & 1.00 & 0.000 & 5 & 1.00 & 0.000 & 5 & 1.00 & 0.000 \\
HumanHeart & 36 & 43 & 8 & 1.00 & 0.000 & 8 & 1.00 & 0.000 & 8 & 1.00 & 0.000 \\
Perceptron & 44 & 47 & 8 & 1.00 & 0.000 & 8 & 1.00 & 0.000 & 8 & 1.00 & 0.000 \\
f & 46 & 61 & 6 & 1.00 & 0.000 & 6 & 1.00 & 0.000 & 6 & 1.00 & 0.000 \\
Encoder & 98 & 114 & 16 & 1.00 & 0.000 & 16 & 1.00 & 0.000 & 16 & 1.00 & 0.001 \\
RoeFlux-1d & 100 & 164 & 12 & 1.00 & 0.001 & 12 & 1.00 & 0.001 & 12 & 1.00 & 0.004 \\
g & 102 & 155 & 96 & 1.00 & 0.000 & 96 & 1.00 & 0.000 & 118 & 1.23 & 0.001 \\
RoeFlux-3d & 140 & 226 & 12 & 1.00 & 0.002 & 12 & 1.00 & 0.002 & 12 & 1.00 & 0.002 \\
\bottomrule
\end{tabular}}
\end{sidewaystable}
\begin{sidewaystable}
\centering
\caption{\label{table:scarcity-approximations-TEST-NLS-sideways1}\MinEC{{}} heuristic performances on graphs from~\NLSset{}. In cases where the optimal objective value is unknown, the ratio is marked '?'.}
\centering
{\footnotesize\setlength{\tabcolsep}{3pt}\begin{tabular}{lll|lll|lll|lll|lll|lll}
\toprule
\multicolumn{3}{c}{} & \multicolumn{3}{c}{\fo{}} & \multicolumn{3}{c}{\re{}} & \multicolumn{3}{c}{\greedymark{}} & \multicolumn{3}{c}{\relmark{}} & \multicolumn{3}{c}{\mtmr{}} \\
Graph & n & m & obj & ratio & time (s) & obj & ratio & time (s) & obj & ratio & time (s) & obj & ratio & time (s) & obj & ratio & time (s) \\
\midrule
nls04 & 8 & 7 & 3 & 1.00 & 0.000 & 3 & 1.00 & 0.000 & 3 & 1.00 & 0.000 & 3 & 1.00 & 0.000 & 5 & 1.67 & 0.000 \\
nls07 & 12 & 16 & 4 & 1.00 & 0.000 & 4 & 1.00 & 0.000 & 4 & 1.00 & 0.000 & 4 & 1.00 & 0.000 & 12 & 3.00 & 0.000 \\
nls26 & 12 & 13 & 4 & 1.00 & 0.000 & 4 & 1.00 & 0.000 & 4 & 1.00 & 0.000 & 4 & 1.00 & 0.000 & 10 & 2.50 & 0.000 \\
nls05 & 15 & 13 & 4 & 1.00 & 0.000 & 4 & 1.00 & 0.000 & 4 & 1.00 & 0.000 & 4 & 1.00 & 0.000 & 10 & 2.50 & 0.000 \\
nls06 & 16 & 16 & 8 & 1.00 & 0.000 & 8 & 1.00 & 0.000 & 8 & 1.00 & 0.000 & 8 & 1.00 & 0.000 & 12 & 1.50 & 0.000 \\
nls21 & 20 & 29 & 6 & 1.00 & 0.000 & 6 & 1.00 & 0.000 & 6 & 1.00 & 0.000 & 6 & 1.00 & 0.000 & 31 & 5.17 & 0.000 \\
nls22 & 31 & 51 & 12 & 1.00 & 0.000 & 12 & 1.00 & 0.000 & 12 & 1.00 & 0.000 & 12 & 1.00 & 0.000 & 61 & 5.08 & 0.000 \\
nls03 & 48 & 47 & 32 & 2.00 & 0.000 & 31 & 1.94 & 0.000 & 16 & 1.00 & 0.000 & 16 & 1.00 & 0.000 & 137 & 8.56 & 0.000 \\
nls01 & 50 & 59 & 50 & 1.67 & 0.000 & 39 & 1.30 & 0.000 & 30 & 1.00 & 0.000 & 30 & 1.00 & 0.000 & 211 & 7.03 & 0.000 \\
nls13 & 52 & 60 & 20 & 1.00 & 0.000 & 20 & 1.00 & 0.000 & 20 & 1.00 & 0.000 & 20 & 1.00 & 0.000 & 50 & 2.50 & 0.000 \\
nls16 & 59 & 76 & 58 & 1.53 & 0.000 & 56 & 1.47 & 0.000 & 38 & 1.00 & 0.000 & 38 & 1.00 & 0.000 & 210 & 5.53 & 0.000 \\
nls02 & 60 & 59 & 40 & 2.00 & 0.000 & 39 & 1.95 & 0.000 & 20 & 1.00 & 0.000 & 20 & 1.00 & 0.000 & 211 & 10.55 & 0.000 \\
nls12 & 63 & 80 & 30 & 1.00 & 0.000 & 30 & 1.00 & 0.000 & 30 & 1.00 & 0.000 & 30 & 1.00 & 0.000 & 70 & 2.33 & 0.000 \\
nls08 & 63 & 90 & 45 & 1.00 & 0.000 & 45 & 1.00 & 0.000 & 45 & 1.00 & 0.000 & 45 & 1.00 & 0.000 & 75 & 1.67 & 0.000 \\
nls19 & 77 & 90 & 30 & 1.00 & 0.000 & 30 & 1.00 & 0.000 & 30 & 1.00 & 0.000 & 30 & 1.00 & 0.000 & 75 & 2.50 & 0.000 \\
nls09 & 81 & 110 & 44 & 1.00 & 0.000 & 44 & 1.00 & 0.000 & 44 & 1.00 & 0.000 & 44 & 1.00 & 0.000 & 99 & 2.25 & 0.000 \\
nls14 & 84 & 140 & 80 & 1.00 & 0.000 & 80 & 1.00 & 0.000 & 80 & 1.00 & 0.000 & 80 & 1.00 & 0.000 & 120 & 1.50 & 0.000 \\
nls10 & 99 & 128 & 48 & 1.00 & 0.000 & 48 & 1.00 & 0.000 & 48 & 1.00 & 0.000 & 48 & 1.00 & 0.000 & 112 & 2.33 & 0.000 \\
nls23 & 115 & 189 & 50 & 1.00 & 0.000 & 50 & 1.00 & 0.000 & 50 & 1.00 & 0.000 & 50 & 1.00 & 0.001 & 291 & 5.82 & 0.001 \\
nls20 & 123 & 150 & 45 & 1.00 & 0.000 & 45 & 1.00 & 0.000 & 45 & 1.00 & 0.000 & 45 & 1.00 & 0.000 & 135 & 3.00 & 0.001 \\
nls24 & 147 & 213 & 94 & 1.00 & 0.000 & 94 & 1.00 & 0.000 & 94 & 1.00 & 0.000 & 94 & 1.00 & 0.001 & 189 & 2.01 & 0.001 \\
nls15 & 172 & 340 & 72 & 1.00 & 0.000 & 72 & 1.00 & 0.000 & 72 & 1.00 & 0.000 & 72 & 1.00 & 0.004 & 428 & 5.94 & 0.002 \\
nls25 & 237 & 324 & 117 & 1.00 & 0.000 & 117 & 1.00 & 0.000 & 117 & 1.00 & 0.000 & 117 & 1.00 & 0.003 & 299 & 2.56 & 0.007 \\
nls17 & 395 & 520 & 163 & 1.00 & 0.000 & 163 & 1.00 & 0.000 & 163 & 1.00 & 0.000 & 163 & 1.00 & 0.011 & 487 & 2.99 & 0.032 \\
nls11 & 593 & 1019 & 495 & 1.00 & 0.002 & 495 & 1.00 & 0.000 & 495 & 1.00 & 0.002 & 495 & 1.00 & 0.028 & 930 & 1.88 & 0.112 \\
nls18 & 1751 & 2383 & 711 & 1.00 & 0.009 & 711 & 1.00 & 0.005 & 711 & 1.00 & 0.014 & 711 & 1.00 & 0.431 & 2315 & 3.26 & 3.101 \\
\bottomrule
\end{tabular}}
\end{sidewaystable}
\begin{sidewaystable}
\centering
\caption{\label{table:scarcity-approximations-TEST-NLS-sideways2}\MinEC{{}} heuristic performances on graphs from~\NLSset{}. In cases where the optimal objective value is unknown, the ratio is marked '?'.}
\centering
{\footnotesize\setlength{\tabcolsep}{3pt}\begin{tabular}{lll|lll|lll|lll|lll|lll}
\toprule
\multicolumn{3}{c}{} & \multicolumn{3}{c}{\lmmd{}} & \multicolumn{3}{c}{\pathlen{}} & \multicolumn{3}{c}{\moshort{}} & \multicolumn{3}{c}{\mcmc{}} & \multicolumn{3}{c}{\diffmincost{}} \\
Graph & n & m & obj & ratio & time (s) & obj & ratio & time (s) & obj & ratio & time (s) & obj & ratio & time (s) & obj & ratio & time (s) \\
\midrule
nls04 & 8 & 7 & 5 & 1.67 & 0.000 & 3 & 1.00 & 0.000 & 3 & 1.00 & 0.000 & 3 & 1.00 & 0.002 & 3 & 1.00 & 0.000 \\
nls07 & 12 & 16 & 10 & 2.50 & 0.000 & 4 & 1.00 & 0.000 & 4 & 1.00 & 0.000 & 4 & 1.00 & 0.002 & 4 & 1.00 & 0.000 \\
nls26 & 12 & 13 & 10 & 2.50 & 0.000 & 4 & 1.00 & 0.000 & 4 & 1.00 & 0.000 & 4 & 1.00 & 0.002 & 4 & 1.00 & 0.000 \\
nls05 & 15 & 13 & 10 & 2.50 & 0.000 & 4 & 1.00 & 0.000 & 4 & 1.00 & 0.000 & 4 & 1.00 & 0.005 & 4 & 1.00 & 0.000 \\
nls06 & 16 & 16 & 12 & 1.50 & 0.000 & 8 & 1.00 & 0.000 & 8 & 1.00 & 0.000 & 8 & 1.00 & 0.006 & 8 & 1.00 & 0.000 \\
nls21 & 20 & 29 & 28 & 4.67 & 0.000 & 6 & 1.00 & 0.000 & 6 & 1.00 & 0.000 & 6 & 1.00 & 0.008 & 6 & 1.00 & 0.000 \\
nls22 & 31 & 51 & 53 & 4.42 & 0.000 & 12 & 1.00 & 0.002 & 12 & 1.00 & 0.000 & 12 & 1.00 & 0.014 & 12 & 1.00 & 0.000 \\
nls03 & 48 & 47 & 110 & 6.88 & 0.000 & 47 & 2.94 & 0.012 & 16 & 1.00 & 0.000 & 16 & 1.00 & 0.027 & 31 & 1.94 & 0.000 \\
nls01 & 50 & 59 & 159 & 5.30 & 0.000 & 59 & 1.97 & 0.005 & 50 & 1.67 & 0.000 & 30 & 1.00 & 0.027 & 39 & 1.30 & 0.000 \\
nls13 & 52 & 60 & 50 & 2.50 & 0.000 & 20 & 1.00 & 0.009 & 20 & 1.00 & 0.000 & 20 & 1.00 & 0.018 & 20 & 1.00 & 0.000 \\
nls16 & 59 & 76 & 166 & 4.37 & 0.000 & 76 & 2.00 & 0.011 & 58 & 1.53 & 0.000 & 38 & 1.00 & 0.032 & 47 & 1.24 & 0.000 \\
nls02 & 60 & 59 & 159 & 7.95 & 0.000 & 59 & 2.95 & 0.011 & 20 & 1.00 & 0.000 & 20 & 1.00 & 0.030 & 39 & 1.95 & 0.000 \\
nls12 & 63 & 80 & 70 & 2.33 & 0.000 & 30 & 1.00 & 0.019 & 30 & 1.00 & 0.000 & 30 & 1.00 & 0.027 & 30 & 1.00 & 0.001 \\
nls08 & 63 & 90 & 75 & 1.67 & 0.000 & 45 & 1.00 & 0.015 & 45 & 1.00 & 0.000 & 45 & 1.00 & 0.032 & 45 & 1.00 & 0.001 \\
nls19 & 77 & 90 & 75 & 2.50 & 0.000 & 30 & 1.00 & 0.034 & 30 & 1.00 & 0.000 & 30 & 1.00 & 0.040 & 30 & 1.00 & 0.001 \\
nls09 & 81 & 110 & 99 & 2.25 & 0.000 & 44 & 1.00 & 0.042 & 44 & 1.00 & 0.000 & 44 & 1.00 & 0.043 & 44 & 1.00 & 0.001 \\
nls14 & 84 & 140 & 120 & 1.50 & 0.000 & 80 & 1.00 & 0.034 & 80 & 1.00 & 0.000 & 80 & 1.00 & 0.063 & 80 & 1.00 & 0.001 \\
nls10 & 99 & 128 & 112 & 2.33 & 0.001 & 48 & 1.00 & 0.066 & 48 & 1.00 & 0.000 & 48 & 1.00 & 0.063 & 48 & 1.00 & 0.002 \\
nls23 & 115 & 189 & 246 & 4.92 & 0.001 & 50 & 1.00 & 0.137 & 50 & 1.00 & 0.000 & 50 & 1.00 & 0.102 & 50 & 1.00 & 0.004 \\
nls20 & 123 & 150 & 135 & 3.00 & 0.003 & 45 & 1.00 & 0.143 & 45 & 1.00 & 0.000 & 45 & 1.00 & 0.084 & 45 & 1.00 & 0.004 \\
nls24 & 147 & 213 & 189 & 2.01 & 0.003 & 94 & 1.00 & 0.234 & 94 & 1.00 & 0.000 & 94 & 1.00 & 0.158 & 94 & 1.00 & 0.006 \\
nls15 & 172 & 340 & 388 & 5.39 & 0.005 & 72 & 1.00 & 0.415 & 72 & 1.00 & 0.000 & 72 & 1.00 & 0.284 & 72 & 1.00 & 0.010 \\
nls25 & 237 & 324 & 299 & 2.56 & 0.013 & 117 & 1.00 & 1.093 & 117 & 1.00 & 0.001 & 117 & 1.00 & 0.353 & 117 & 1.00 & 0.016 \\
nls17 & 395 & 520 & 487 & 2.99 & 0.062 & 163 & 1.00 & 5.158 & 163 & 1.00 & 0.003 & 163 & 1.00 & 0.851 & 163 & 1.00 & 0.042 \\
nls11 & 593 & 1019 & 930 & 1.88 & 0.207 & 495 & 1.00 & 18.454 & 495 & 1.00 & 0.009 & 495 & 1.00 & 2.775 & 495 & 1.00 & 0.119 \\
nls18 & 1751 & 2383 & 2315 & 3.26 & 7.229 & 711 & 1.00 & 579.894 & 711 & 1.00 & 0.067 & 711 & 1.00 & 16.894 & 711 & 1.00 & 1.034 \\
\bottomrule
\end{tabular}}
\end{sidewaystable}
\begin{sidewaystable}
\centering
\caption{\label{table:scarcity-approximations-TEST-NLS-sideways3}\MinEC{{}} heuristic performances on graphs from~\NLSset{}. In cases where the optimal objective value is unknown, the ratio is marked '?'.}
\centering
{\footnotesize\setlength{\tabcolsep}{3pt}\begin{tabular}{lll|lll|lll|lll}
\toprule
\multicolumn{3}{c}{} & \multicolumn{3}{c}{\edgereduction{}} & \multicolumn{3}{c}{\markanddegree{}} & \multicolumn{3}{c}{\pc} \\
Graph & n & m & obj & ratio & time (s) & obj & ratio & time (s) & obj & ratio & time (s) \\
\midrule
nls04 & 8 & 7 & 3 & 1.00 & 0.000 & 3 & 1.00 & 0.000 & 3 & 1.00 & 0.000 \\
nls07 & 12 & 16 & 4 & 1.00 & 0.000 & 4 & 1.00 & 0.000 & 4 & 1.00 & 0.000 \\
nls26 & 12 & 13 & 4 & 1.00 & 0.000 & 4 & 1.00 & 0.000 & 4 & 1.00 & 0.000 \\
nls05 & 15 & 13 & 4 & 1.00 & 0.000 & 4 & 1.00 & 0.000 & 4 & 1.00 & 0.000 \\
nls06 & 16 & 16 & 8 & 1.00 & 0.000 & 8 & 1.00 & 0.000 & 8 & 1.00 & 0.000 \\
nls21 & 20 & 29 & 6 & 1.00 & 0.000 & 6 & 1.00 & 0.000 & 6 & 1.00 & 0.000 \\
nls22 & 31 & 51 & 12 & 1.00 & 0.000 & 12 & 1.00 & 0.000 & 12 & 1.00 & 0.000 \\
nls03 & 48 & 47 & 16 & 1.00 & 0.000 & 16 & 1.00 & 0.000 & 16 & 1.00 & 0.000 \\
nls01 & 50 & 59 & 30 & 1.00 & 0.000 & 30 & 1.00 & 0.000 & 30 & 1.00 & 0.000 \\
nls13 & 52 & 60 & 20 & 1.00 & 0.000 & 20 & 1.00 & 0.000 & 20 & 1.00 & 0.001 \\
nls16 & 59 & 76 & 38 & 1.00 & 0.000 & 38 & 1.00 & 0.000 & 38 & 1.00 & 0.000 \\
nls02 & 60 & 59 & 20 & 1.00 & 0.000 & 20 & 1.00 & 0.000 & 20 & 1.00 & 0.001 \\
nls12 & 63 & 80 & 30 & 1.00 & 0.000 & 30 & 1.00 & 0.000 & 30 & 1.00 & 0.002 \\
nls08 & 63 & 90 & 45 & 1.00 & 0.000 & 45 & 1.00 & 0.000 & 45 & 1.00 & 0.000 \\
nls19 & 77 & 90 & 30 & 1.00 & 0.000 & 30 & 1.00 & 0.000 & 30 & 1.00 & 0.000 \\
nls09 & 81 & 110 & 44 & 1.00 & 0.000 & 44 & 1.00 & 0.000 & 44 & 1.00 & 0.000 \\
nls14 & 84 & 140 & 80 & 1.00 & 0.000 & 80 & 1.00 & 0.000 & 80 & 1.00 & 0.002 \\
nls10 & 99 & 128 & 48 & 1.00 & 0.000 & 48 & 1.00 & 0.000 & 48 & 1.00 & 0.001 \\
nls23 & 115 & 189 & 50 & 1.00 & 0.001 & 50 & 1.00 & 0.001 & 50 & 1.00 & 0.003 \\
nls20 & 123 & 150 & 45 & 1.00 & 0.001 & 45 & 1.00 & 0.001 & 45 & 1.00 & 0.002 \\
nls24 & 147 & 213 & 94 & 1.00 & 0.002 & 94 & 1.00 & 0.001 & 94 & 1.00 & 0.006 \\
nls15 & 172 & 340 & 72 & 1.00 & 0.001 & 72 & 1.00 & 0.001 & 72 & 1.00 & 0.007 \\
nls25 & 237 & 324 & 117 & 1.00 & 0.005 & 117 & 1.00 & 0.005 & 117 & 1.00 & 0.011 \\
nls17 & 395 & 520 & 163 & 1.00 & 0.020 & 163 & 1.00 & 0.021 & 163 & 1.00 & 0.023 \\
nls11 & 593 & 1019 & 495 & 1.00 & 0.036 & 495 & 1.00 & 0.036 & 495 & 1.00 & 0.037 \\
nls18 & 1751 & 2383 & 711 & 1.00 & 0.358 & 711 & 1.00 & 0.355 & 711 & 1.00 & 0.421 \\
\bottomrule
\end{tabular}}
\end{sidewaystable}


\end{document}